\newtheorem{theorem}{Theorem}[section]
\newtheorem{lemma}[theorem]{Lemma}
\newtheorem{claim}[theorem]{Claim}
\newtheorem{corollary}[theorem]{Corollary}
\newtheorem{fact}{Fact}
\theoremstyle{definition}
\newtheorem{definition}[theorem]{Definition}
\numberwithin{equation}{section}
\numberwithin{figure}{section}
\xdef\csname B\x\endcsname{\noexpand\mathbf{\x}}
\xdef\csname B\x\endcsname{\noexpand\mathbf{\x}}
\newcommand{\eps}{\varepsilon} 
\newcommand{\mc}{\mathcal}
\newcommand{\mb}{\mathbf}
\newcommand{\ol}{\overline}
\newcommand{\wt}{\widetilde}
\newcommand{\wh}{\widehat}
\newcommand{\R}{\mathbb{R}}
\newcommand{\Z}{\mathbb{Z}}
\newcommand{\bx}{\Bx}
\newcommand{\be}{\Be}
\newcommand{\caP}{\mathcal{P}}
\newcommand{\zeros}{\mathbf{0}}
\newcommand{\Lap}{\bm{\mc{L}}} 
\newcommand{\Tproj}{T_{\mathrm{proj}}} 
\DeclareMathOperator{\vol}{vol}
\DeclareMathOperator{\poly}{poly}
\DeclareMathOperator{\tr}{tr}
\DeclareMathOperator{\Diag}{Diag}
\DeclareMathOperator{\MVP}{MVP}
\DeclareMathOperator*{\E}{\mathbb{E}}
\DeclarePairedDelimiter{\norm}{\lVert}{\rVert}
\DeclarePairedDelimiter{\abs}{\lvert}{\rvert}
\DeclarePairedDelimiter{\inprod}{\langle}{\rangle}
\title{\bf $O(\log n)$-Approximation Algorithms for Bipartiteness Ratio}
\author{Tasuku Soma \\ The Institute of Statistical Mathematics \& RIKEN AIP \\ \url{soma@ism.ac.jp}
\and 
Mingquan Ye \\ National Institute of Informatics \& University of California, Santa Cruz \\ \url{mye26@ucsc.edu}
\and 
Yuichi Yoshida \\ National Institute of Informatics \\ \url{yyoshida@nii.ac.jp}
}
\date{\today}
\begin{document}

\maketitle
\begin{abstract}
We propose an $O(\log n)$-approximation algorithm for the bipartiteness ratio of undirected graphs introduced by \citet{Trevisan2012}, where $n$ is the number of vertices.
Our approach extends the cut-matching game framework for sparsest cut to the bipartiteness ratio, and requires only $\poly\log n$ many single-commodity undirected maximum flow computations.
Therefore, with the current fastest undirected max-flow algorithms, it runs in almost linear time.
Along the way, we introduce the concept of well-linkedness for skew-symmetric graphs and prove a novel characterization of bipartiteness ratio in terms of well-linkedness in an auxiliary skew-symmetric graph, which may be of independent interest.

As an application, we devise an $\wt{O}(mn)$-time algorithm for the minimum uncut problem: given a graph whose optimal cut leaves an $\eta$ fraction of edges uncut, we find a cut that leaves only an $O(\log n \log(1/\eta)) \cdot \eta$ fraction of edges uncut, where $m$ is the number of edges.

Finally, we propose a directed analogue of the bipartiteness ratio, and we give a polynomial-time algorithm that achieves an $O(\log n)$ approximation for this measure via a directed Leighton--Rao-style embedding. 
We also propose an algorithm for the minimum directed uncut problem with a guarantee similar to that for the minimum uncut problem.
\end{abstract}

\section{Introduction}\label{sec:intro}

Let $G = (V, E, w)$ be an undirected graph with $n$ vertices, $m$ edges, and a positive edge weight $w: E \to \Z_{++}$, where $\Z_{++}$ is a set of positive integers. 
The \emph{(normalized) Laplacian matrix} of $G$ is given by $\BI_n - \BD^{-1/2} \BA \BD^{-1/2}$, where $\BI_n \in \R^{n \times n}$ is the $n \times n$ identity matrix, $\BA \in \R^{n \times n}$ is the weighted adjacency matrix of $G$, and $\BD \in \R^{n \times n}$ is the diagonal matrix with $\BD_{ii}$ being equal to the weighted degree of vertex $i$.
The Laplacian matrix is symmetric and positive semidefinite, and its eigenvalues satisfy $0 = \lambda_1 \leq \cdots \leq \lambda_n \leq 2$.
A classical result in spectral graph theory states that $G$ is bipartite if and only if $\lambda_n = 2$.
\citet{Trevisan2012} proved a quantitative version of this result.
For a nonzero vector $\Bx \in \{0, \pm 1\}^V$, let
\[
    \beta(\Bx) := \frac{\sum_{e = (i, j) \in E} w(e) \cdot \abs{x_i + x_j}}{\sum_{i \in V} \deg(i) \cdot \abs{x_i}}, 
\]
where $\deg(i) = \sum_{(i, j) \in E} w(i, j)$ is the weighted degree of vertex $i$.
The \emph{bipartiteness ratio} of $G$ is then defined by 
\begin{align}\label{eq:bip_rat}
    \beta(G) := \min_{\Bx \in \{0, \pm 1\}^V \setminus \{\zeros \}} \beta(\Bx). 
\end{align}
Since each non-zero $\{0,\pm 1\}$-vector $\bx$ corresponds to a tripartition $(L, R, Z)$ of $V$ such that $L = \{i \in V \mid x_i = 1\}$, $R = \{j \in V \mid x_j = -1\}$, and $Z = \{k \in V \mid x_k = 0\}$, we can represent 
\begin{align*}
    \beta(G) = \min_{\text{$(L, R, Z)$: tripartition of $V$}} \frac{2w(E(L)) + 2w(E(R)) + w(E(L \cup R, Z))}{\vol(L \cup R)},
\end{align*} 
where $E(L)$ (resp.,~$E(R)$) is the set of edges whose endpoints are within $L$ (resp.,~$R$), and $E(L \cup R, Z)$ is the set of edges connecting $L \cup R$ and $Z$, and $\vol(L \cup R) = \sum_{i \in L \cup R} \deg(i)$ is the volume of $L \cup R$.
Obviously, $\beta(G) = 0$ if and only if $G$ is bipartite.
\citet{Trevisan2012} showed\footnote{Although Trevisan~\cite{Trevisan2012} states the Cheeger‐type inequality for unweighted graphs, the same argument carries over directly to the weighted case with only minor adjustments.} that the bipartiteness ratio is closely related to the largest eigenvalue $\lambda_n$ of the Laplacian matrix, specifically,  
\[
    \frac{2 - \lambda_n}{2} \le \beta(G) \le \sqrt{2 (2 - \lambda_n)}. 
\]
Furthermore, Trevisan also present a simple algorithm that finds a nonzero vector $\bx \in \{0, \pm 1\}^V$ such that $\beta(\bx) \leq \sqrt{2(2 - \lambda_n)}$ given an eigenvector corresponding to $\lambda_n$.

Trevisan's inequality can be regarded as an analogue of the \emph{Cheeger's inequality}~\cite{Alon1985,Alon1986}, which relates the second smallest eigenvalue with the \emph{conductance} of graphs.
For a vertex subset $\emptyset \subsetneq S \subsetneq V$, let
\[
    \phi(S) := \frac{w(E(S, \ol{S}))}{\min\{\vol(S), \vol(\ol{S})\}},
\]
where $E(S, \ol{S})$ is the set of edges connecting $S$ and $\ol{S}$.
Then the conductance of $G$ is defined as
\[
    \phi(G) := \min_{\emptyset \subsetneq S \subsetneq V} \phi(S).
\]
The Cheeger's inequality states that
\[
    \frac{\lambda_2}{2} \leq \phi(G) \leq \sqrt{2\lambda_2}.
\]
Furthermore, there is a simple algorithm that finds $S$ such that $\phi(S) \leq \sqrt{2\lambda_2}$ given an eigenvector corresponding to $\lambda_2$.

Just as the Cheeger's inequality captures a graph’s expansion properties through the second smallest eigenvalue of its Laplacian, the bipartiteness ratio quantifies a graph’s deviation from bipartiteness via the largest eigenvalue of the normalized Laplacian. 
Indeed, Trevisan's inequality (and its variants) is sometimes called the \emph{dual Cheeger's inequality}~\cite{Bauer2013,pokharanakar25}.
As an application of his inequality, \citet{Trevisan2012} showed a purely spectral algorithm for max cut with a nontrivial approximation ratio better than $1/2$.
Recently, bipartiteness ratio has found applications in network analysis in a similar fashion to graph conductance and sparsest cut~\cite{xog20,al20,np22}.\footnote{Indeed \cite{xog20,al20,np22} studied even a more general network model called \emph{signed graphs} and analyzed spectral clustering. Since the main motivation of the present paper is in approximation algorithms for the bipartiteness ratio, we focus on ordinary undirected graphs rather than signed graphs.}
Despite these rich connections and their algorithmic promise, obtaining nontrivial approximation guarantees for the bipartiteness ratio remains an open challenge: no polynomial-time approximation algorithm is currently known. Developing such an algorithm would yield new spectral and combinatorial tools both for classical maximum cut derivatives and for the burgeoning field of network analysis. 

\subsection{Our Contributions}
We present the first $O(\log n)$-approximation algorithm for the bipartiteness ratio of undirected graphs.
More precisely, we study the following \emph{$b$-bipartiteness ratio}, which generalizes the original bipartiteness ratio~\cite{Trevisan2012}.
Let $b: V \to \Z_{++}$ be a positive vertex weight.
For a vector $\bx \in \{0, \pm 1\}^V \setminus \{\mb{0}\}$, let   
\[
    \beta_b(\bx) := \frac{\sum_{(i, j) \in E} w(i, j) \cdot |x_i + x_j|}{\sum_{i \in V} b(i) \cdot |x_i|}.
\]
Then, we define the $b$-bipartiteness ratio of $G$ by $\beta_b(G) := \min_{\bx \in \{0, \pm 1\}^V \setminus \{\zeros\}} \beta_b(\bx)$.
Note that the original bipartiteness ratio $\beta(G)$ (see \eqref{eq:bip_rat}) is the special case of $\beta_b(G)$ where $b(i) = \deg(i)$ ($i \in V$).

Here is our main result.
\begin{theorem}[informal version of \Cref{thm:cut-matching}]
    There is a randomized $O(\log n)$-approximation algorithm for the $b$-bipartiteness ratio of an undirected graph. 
    That is, the algorithm finds a nonzero vector $\bx \in \{0, \pm 1\}^V$ such that $\beta_b(\bx) \leq O(\log n) \cdot \beta_b(G)$ with probability at least $1 - 1/\poly(n)$.
    The time complexity is $O(\log (w(E) \cdot b(V)) \cdot \log^3 n \cdot \max\{\log^2 n, \log b(V)\} \cdot \min\{b(V), n^2\})$ arithmetic operations and $O(\log (w(E) \cdot b(V)) \cdot \log^2 n)$ single-commodity max-flow computations on an auxiliary undirected graph of size $O(m+n)$.
\end{theorem}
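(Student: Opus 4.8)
The approach is to run a cut-matching game not on $G$ itself but on an auxiliary \emph{skew-symmetric} graph $\wt G$ equipped with an involution $\sigma$. The first ingredient is a structural characterization of $\beta_b$. Build $\wt G$ as a signed double-cover-type gadget in which every vertex $v\in V$ is split into a pair $\{v^+,v^-\}$ exchanged by $\sigma$, every edge of $G$ is replaced by a $\sigma$-invariant pair of arcs, and the weight $b$ is inherited on the split vertices; then show that $\beta_b(G)\ge\psi$ if and only if a distinguished $\sigma$-invariant vertex set of $\wt G$ is $\psi$-well-linked in the skew-symmetric sense. Concretely this says two things: every $\sigma$-invariant cut of $\wt G$ of sparsity $<\alpha$ can be decoded into a nonzero $\bx\in\{0,\pm1\}^V$ with $\beta_b(\bx)=O(\alpha)$, and conversely a low-congestion embedding of a skew-symmetric expander into $\wt G$ certifies that $\beta_b(G)$ is large. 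This reduces approximating $\beta_b(G)$ to approximating the skew-symmetric well-linkedness parameter of $\wt G$, which is exactly what the cut-matching framework is designed to do.

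The second ingredient is a skew-symmetric cut-matching game, obtained by porting the single-commodity-flow version of the cut-matching game (the variant of Orecchia et al., so that $O(\log n)$ rather than $O(\log^2 n)$ rounds suffice) to the skew-symmetric setting. We maintain a graph $H$ that is a weighted union of $\sigma$-invariant fractional matchings. In each round the \emph{cut player} inspects $H$ and, using a heat-kernel / random-projection potential, outputs a $\sigma$-invariant bisection; the \emph{matching player} sets up a single-commodity maximum flow in $\wt G$ to route a $\sigma$-invariant fractional perfect matching across that bisection with congestion $O(1/\alpha)$. On success the matching is added to $H$; on failure the corresponding minimum cut is $\sigma$-invariant and has sparsity $<\alpha$, and decoding it via the first ingredient outputs $\bx$ with $\beta_b(\bx)=O(\alpha)$ and we stop. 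Two technical points must be handled: (i) a skew-symmetric single-commodity max-flow can be computed by one ordinary undirected max-flow on a graph of size $O(m+n)$ (a Goldberg--Karzanov-type reduction), and it admits both a $\sigma$-invariant optimal flow and a $\sigma$-invariant minimum cut; and (ii) the cut player's potential, adapted to respect $\sigma$, still drops geometrically, so after $O(\log n)$ rounds $H$ is a skew-symmetric $\Omega(1)$-expander. Each round costs one max-flow plus $\poly\log n$ matrix--vector products against a random-walk operator of dimension $\min\{b(V),n^2\}$ (work with $b(V)$ unit-weight split vertices, or keep $n$ weighted vertices and a dense $n\times n$ operator, whichever is cheaper).

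Putting the two cases together: if the matching player never fails, then $H$ is an $\Omega(1)$-expander embedding into $\wt G$ with congestion $O(\log n/\alpha)$ (a sum of $O(\log n)$ matchings, each routed at congestion $1/\alpha$), so $\wt G$ is $\Omega(\alpha/\log n)$-well-linked and hence $\beta_b(G)=\Omega(\alpha/\log n)$. Thus the game with parameter $\alpha$ either certifies $\beta_b(G)=\Omega(\alpha/\log n)$ or exhibits $\bx$ with $\beta_b(\bx)=O(\alpha)$. Since $\beta_b(G)$ is a ratio whose numerator is at most $2w(E)$ and whose denominator is a positive integer at most $b(V)$, it suffices to try the $O(\log(w(E)\cdot b(V)))$ values $\alpha=2^{-k}$; for the largest such $\alpha$ returning a cut we get $\alpha=O(\beta_b(G))$, while the game having failed for $2\alpha$ gives $\beta_b(G)=\Omega(\alpha/\log n)$, so the returned $\bx$ satisfies $\beta_b(\bx)=O(\log n)\cdot\beta_b(G)$. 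Multiplying the per-round costs by the $O(\log(w(E)\cdot b(V)))\cdot O(\log n)$ total rounds, and tracking that the eigenvector/heat-kernel computations need precision depending on $\log b(V)$ as well as $\log n$, yields the stated bounds of $O(\log(w(E)\cdot b(V))\cdot\log^2 n)$ max-flow calls and $O(\log(w(E)\cdot b(V))\cdot\log^3 n\cdot\max\{\log^2 n,\log b(V)\}\cdot\min\{b(V),n^2\})$ arithmetic operations; the $1-1/\poly(n)$ success probability comes from the random projection of the cut player, boosted by using $O(\log n)$ independent directions.

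The main obstacle is the first ingredient: pinning down the right skew-symmetric auxiliary graph and proving that its $\sigma$-invariant cuts are precisely what governs $\beta_b$, including that the dual well-linked/expander object transforms correctly under $\sigma$. This is where the three-way $\{0,\pm1\}$ structure of $\beta_b$ (as opposed to the two-way structure of sparsest cut, which forces the $Z$ part to be handled separately) must be reconciled with a genuinely combinatorial flow certificate. On the algorithmic side the delicate step is forcing the matching player to return, on success, a $\sigma$-invariant matching (so $H$ stays skew-symmetric and the induction closes) and, on failure, a $\sigma$-invariant minimum cut (so the decoding applies), both via a careful reduction of skew-symmetric max-flow to ordinary undirected max-flow; and then verifying that the heat-kernel potential that buys the $O(\log n)$ rather than $O(\log^2 n)$ approximation survives the symmetrization.
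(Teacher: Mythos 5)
Your high-level plan matches the paper's architecture: build a double-cover-type auxiliary graph on $V^+ \cup V^-$, characterize $\beta_b$ by a flow/well-linkedness condition on that graph, run a cut-matching game against the resulting flow oracle, and binary-search the scale $\alpha$. However, the central technical step is left as an acknowledged obstacle rather than proved, and the way you propose to close it is not quite what works.

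Specifically: (1) You flag ``pinning down the right skew-symmetric auxiliary graph and proving that its $\sigma$-invariant cuts are precisely what governs $\beta_b$'' as the main obstacle. That is exactly \Cref{thm:linked} (together with \Cref{cla:betax_conds} and \Cref{lem:cut}), and without a proof the argument does not close. The paper's key observation is the notion of a \emph{symmetric source--sink pair} $A = L^+ \cup R^-$, $B = L^- \cup R^+$ (so $b(A)=b(B)$ automatically) and a simple exchange argument (\Cref{lem:cut}) showing that an arbitrary minimum $s^+$--$s^-$ cut can be made \emph{consistent} (no vertex $i$ has both $i^+, i^-$ on the source side) without increasing its value; consistency is what lets you decode a cut to a valid $\bx \in \{0,\pm1\}^V$. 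This is more elementary than a Goldberg--Karzanov-style skew-symmetric max-flow reduction, and it sidesteps the need for a $\sigma$-invariant optimal flow. (2) You worry about forcing the matching player to return a $\sigma$-invariant matching ``so $H$ stays skew-symmetric.'' The paper avoids this entirely: the cut player always chooses $(L,R)=(L',\emptyset)$, the saturating flow decomposes into odd $L$--$L$ paths, and the demand graph $M_t$ lives on $V$, not on $V'$; the MMWU potential then uses the signless matrix $\BF_t = \BD_b^{-1/2}\sum_{(i,j)\in M_t}(\be_i+\be_j)(\be_i+\be_j)^\top \BD_b^{-1/2}$ on $V$. No invariance of $H$ in the double cover is needed, and this is a genuine simplification your plan misses. (3) ``The variant of Orecchia et al., so that $O(\log n)$ rather than $O(\log^2 n)$ rounds suffice'' is a misstatement: OSVV (and the MMWU analysis the paper actually uses, with width $\rho=O(1)$ and per-round gain $\gamma=\Omega(1/\log n)$) still requires $T=O(\log^2 n)$ rounds; the improvement over KRV is in the approximation factor, not the round count. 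Your final tally of $O(\log(w(E)b(V))\cdot\log^2 n)$ max-flow calls happens to be right, but your own round count contradicts it.
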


By employing the almost-linear–time algorithm for exact maximum flow~\cite{cklpps22}, our algorithm runs in $\wt{O}(\min\{b(V), n^2\} + m)$ time. In the case of the original bipartiteness ratio, where $b(V) = O(m)$, this simplifies to a total running time of $\wt{O}(m)$.

\subsection{Application to Min Uncut}
The \emph{maximum cut} and \emph{minimum uncut} problems are extensively studied classical combinatorial optimization problems. In fact, they are closely related to the bipartiteness ratio problem. 
To see that, recall that each non-zero vector $\Bx \in \{0, \pm 1\}^V$ corresponds to a partition of $V$ such that $L = \{i \mid x_i = 1\}$, $R = \{j \mid x_j = -1\}$, and $Z = \{k \mid x_k = 0\}$. Suppose $Z \subseteq V$ is given, that is, we set the subvector $\Bx_Z$ of $\Bx$ corresponding to $Z$ to zero, then we have 
\begin{equation}
\label{eq:bip_maxcut}
    \begin{aligned}
&~\min_{\Bx \in \{0, \pm 1\}^V \setminus \{\mb{0}\},\ \Bx_Z 
= \mb{0}} \frac{\sum_{(u, v) \in E} w(u, v) \cdot |x_u + x_v|}{\sum_{u \in V} b(u) \cdot |x_u|} \\
=&~ \min_{L \cup R = V \setminus Z,\ L \cap R = \emptyset} \frac{2 \cdot w(E(L)) + 2 \cdot w(E(R)) + w(E(L \cup R, Z))}{b(L \cup R)} \\
=&~ \min_{L \cup R = V \setminus Z,\ L \cap R = \emptyset} 1 - \frac{2 \cdot w(E(L, R))}{b(L \cup R)} \\
=&~ 1 - 2 \cdot \max_{L \cup R = V \setminus Z,\ L \cap R = \emptyset} \frac{w(E(L, R))}{b(L \cup R)},  
\end{aligned} 
\end{equation}
which is reduced to computing the maximum cut of the subgraph induced by the vertex subset $V \setminus Z$ since $Z$ is given.

Likewise, given an approximate algorithm for bipartiteness ratio, we can develop some approximate algorithm for max cut~\cite{kllot13,Trevisan2012}. Given a non-zero vector $\Bx \in \{0, \pm 1\}^V$ corresponding to the partition $V = L \cup R \cup Z$, we have 
\[
\beta(\Bx) = \frac{2 \cdot w(E(L)) + 2 \cdot w(E(R)) + w(E(L \cup R, Z))}{\vol(L \cup R)} = 1 - 2 \cdot \frac{w(E(L, R))}{\vol(L \cup R)}.  
\]
If we have an upper bound for $\beta(\Bx)$, that naturally gives rise to a lower bound for $\frac{w(E(L, R))}{\vol(L \cup R)}$. Consequently, we can lower bound the fraction of cutting edges in the subgraph induced by the vertex subset $L \cup R$, and continue this process on the remaining subgraph of $Z$. The following is our result for min uncut.
\begin{theorem}\label{thm:min-uncut-intro} 
Given a graph whose minimum uncut fraction is $\eta$, there exists a randomized algorithm that returns a cut leaving at most an $O(\log n \log (1/\eta)) \cdot \eta$ fraction of uncut edges with probability at least $1 - 1/\poly(n)$ in $\wt{O}(mn)$ time.  
\end{theorem}

\Cref{tab:maxcut} compares our result with the state-of-the-art min uncut algorithms with the same style guarantees~\cite{Trevisan2012,kllot13,gvy93,acmm05}. Compared with \cite{gs11,kllot13}, our approximation guarantee is independent of the eigenvalue of the normalized Laplacian matrix. For \cite{Trevisan2012}, although our method is similar to it, our algorithm depends roughly linearly on $\eta$, while \cite{Trevisan2012}'s work depends on $\sqrt{\eta}$. 
Moreover, compared with the linear programming (LP)~\cite{cls21} and semidefinite programming (SDP)~\cite{hjstz22} based approximation~\cite{gvy93,acmm05}, although our method performs worse on the approximation ratio, the running time is $\wt{O}(mn)$, which significantly outperforms their running time. 

\begin{table}[t]
  \centering
  \caption{Summary of known min-uncut algorithms and our work. Here $\alpha_k$ denotes the $k$-th smallest eigenvalue of the matrix $2\BI_n - \Lap$ and $\lambda_{n-k}$ denotes the $(n-k)$-th smallest eigenvalue of $\Lap$. The parameter $\eps \in (0,1)$ in \cite{gs11} is an arbitrary fixed constant.}\label{tab:maxcut}
  \begin{tabular}{lll}
    Reference & Uncut fraction  & Time complexity \\ \hline\hline
    \cite{Trevisan2012} & $O(\sqrt{\eta})$ & Spectral decomposition \\ \hline
    \cite{kllot13}    & $O(\frac{k}{\alpha_k}\log\frac{\alpha_k}{k\eta}) \cdot \eta$  & Spectral decomposition \\ \hline
    \cite{gs11} & $\frac{1+\eps}{\lambda_{n-k}} \cdot \eta$  & $2^{O(k/\varepsilon^3)}\,n^{O(1/\varepsilon)}$ \\ \hline
    \cite{gvy93} & $O(\log n) \cdot \eta$ & $\wt{O}(m^\omega)$~\cite{jswz21} \\ \hline
    \cite{acmm05} & $O(\sqrt{\log n}) \cdot \eta$ & $\wt{O}(m^\omega)$~\cite{hjstz22} \\ \hline
    \bf This work (\Cref{thm:max-cut}) & $O(\log n \log(1/\eta)) \cdot \eta$ & $\wt{O}(mn)$ 
    \\ \hline 
    \end{tabular}
\end{table}

\subsection{Directed Bipartiteness Ratio}

Let $G = (V, E)$ be a directed graph.
We say that $G$ is \emph{(strongly) bipartite} if there exists a bipartition $(L, R)$ of $V$ such that all arcs in $E$ go from $L$ to $R$.
Inspired by the bipartiteness ratio of undirected graphs, we define the \emph{directed $b$-bipartiteness ratio} of $G$ as follows.
First, for $\Bx \in \R^V$, let
\begin{align} \label{def:dir_beta_b_x}
    \beta_b(\Bx) = \frac{\sum_{(i, j) \in E} \psi_{ij}(\Bx)}{\sum_{i \in V} b(i) \abs{x_i}},
    \quad \text{where } 
    \psi_{ij}(\Bx) := \begin{cases}
        \abs{x_i + x_j} & \text{if $x_i \geq x_j$}, \\
        \abs{x_i} + \abs{x_j} & \text{otherwise}.
    \end{cases}
\end{align} 
Then, the directed $b$-bipartiteness ratio of $G$ is defined as
\[
    \beta_b(G) = \min_{\Bx \in \{0, \pm 1\}^V \setminus \{\mb 0\}} \beta_b(\Bx).
\]
Intuitively, we look for a large witness of bipartiteness: a tripartition $(L,R,Z)$ with $L$ and $R$ playing the roles of the two sides, whose violations of the forward orientation are few compared with the total $b$-weight of vertices that participate.
When $\Bx \in \{0,\pm 1\}^V$ encodes such a tripartition $(L, R, Z)$ as above, 
\[
    \beta_b(\Bx) = \frac{2 \cdot (w(E(R, L)) + w(E(L)) + w(E(R)) + w(E(L \cup R, Z)) + w(E(Z, L \cup R))}{b(L \cup R)},
\]
where $E(R, L)$ denotes the set of arcs from $R$ to $L$ and $E(L)$ denotes the set of arcs within $L$.
Thus a small ratio certifies the existence of a large almost strongly bipartite subgraph, and the ratio becomes zero exactly when $G$ contains a nontrivial strongly bipartite subgraph. Similar to \eqref{eq:bip_maxcut}, the directed bipartiteness ratio has a close relation with directed min uncut. For a tripartition $(L, R, Z)$ of $\Bx \in \{0, \pm 1\}^V$ with $Z \subset V$ being fixed, we have 
\begin{align} \label{eq:bip_minuncut} 
    \min_{L \cup R = V \setminus Z,\ L \cap R = \emptyset} \beta_b (\Bx) = 1 - 2 \cdot \max_{L \cup R = V \setminus Z,\ L \cap R = \emptyset} \frac{w(E(L, R))}{b(L \cup R)},   
\end{align}
which is equivalent to solving the directed min uncut of subgraph $G[L \cup R]$. Moreover, we can verify that the function $\psi_{ij}$ is \emph{bisubmodular}~\cite{fujishige2005bisubmodular} by an exhaustive case analysis.

We show that we can approximate $\beta_b(G)$ within a factor of $O(\log n)$ in polynomial time:
\begin{theorem}\label{thm:directed-bipartiteness-ratio}
    There is a randomized polynomial-time $O(\log n)$-approximation algorithm for the directed $b$-bipartiteness ratio of a directed graph. 
    That is, the algorithm finds a nonzero vector $\bx \in \{0, \pm 1\}^V$ such that $\beta_b(\bx) \leq O(\log n) \cdot \beta_b(G)$ with probability at least $1 - 1/\poly(n)$.
\end{theorem}

The \emph{minimum directed uncut} problem is the directed analogue of minimum uncut: given a directed graph $G = (V, E)$ and a bipartition $(S, \bar{S})$ of the vertex set, an edge is uncut if it does not go from $S$ to $\bar{S}$ (that is, it stays inside $S$ or $\bar{S}$, or points from $\bar{S}$ to $S$). The goal is to choose $(S, \bar{S})$ that minimizes the total weight of such uncut edges, or equivalently maximizes the weight of the directed cut $E(S, \bar{S})$.
We have the following theorem using \Cref{thm:directed-bipartiteness-ratio}.
\begin{theorem}\label{thm:min-directed-uncut-intro}
Given a graph with the fraction of the minimum directed uncut being $\eta$, there is a randomized polynomial-time algorithm for finding a cut that leaves the fraction of edges $O(\log n \log (1/\eta)) \cdot \eta$ with probability at least $1 - 1/\poly(n)$.
\end{theorem} 

The best known approximation ratio of minimum directed uncut is $O(\sqrt{\log n})$ based on SDP~\cite{acmm05}, so our approximation ratio is weaker.
The primary advantage of our approach is its simplicity: we obtain it via a recursive reduction to approximating the directed bipartiteness ratio, mirroring the undirected setting.
If we could approximate the directed bipartiteness ratio in $\wt{O}(m)$ time, as in the undirected case, then our algorithm would run in $\wt{O}(mn)$ time, outperforming \cite{acmm05}.
Whether such a fast approximation exists is currently unclear, and we leave this as an open problem.

\subsection{Our Technique}
We briefly describe our techniques here.

\subsubsection*{Cut-matching game for bipartiteness ratio}
Our approach for the approximation algorithm for bipartiteness ratio is extending the \emph{cut-matching game} framework~\cite{krv06}, which is originally designed for sparsest cut (i.e., $b \equiv 1$), to bipartiteness ratio.

Let us first review the original cut-matching game for sparse cut.
Suppose that we want to check whether $\phi_b(G) \geq 1$ or not.
The key fact of the cut-matching game is that $\phi_b(G) \geq 1$ if and only if $G$ is \emph{well-linked}, i.e., for any disjoint vertex subsets $A, B \subseteq V$ with $\abs{A} = \abs{B}$, there exists an $A$--$B$ flow in $G$ such that it satisfies edge capacity $w$ and unit flow goes out from and in to every vertex in $A$ and $B$, respectively; see, e.g., \cite{ChekuriLect}.
Such a flow is said to be \emph{saturating}.
The cut-matching game is a repeated game of two players, the cut player and the matching player.
Let $H$ be an empty multigraph on $V$.
In each round, the cut player generates a bipartition $(S, \ol{S})$ of the vertex set $V$.
Without loss of generality, we assume that $\abs{S} \leq \abs{\ol{S}}$.
If $S$ is not well-linked to some subset in $\ol{S}$---which can be checked in a single-commodity flow computation---then the game ends; by the above characterization, $\phi_b(G) < 1$.
Indeed, we can even find a sparse cut $S$ as well by finding a minimum cut.
If $S$ is well-linked, there must exist a flow that saturates $S$ by definition.
The matching player finds such a flow and adds its demand graph to $H$ (as a multigraph).\footnote{For the unweighted case (i.e., $w \equiv 1$), the demand graph is simply a matching between $S$ and $\ol{S}$ of size $\abs{S}$, hence the name of the matching player.}
The game ends once $H$ becomes an $O(1)$-expander, i.e., $\phi_b(H) \geq \Omega(1)$.
Suppose that the game ends after $T$ rounds by finding an expander $H$.
Then, since $H$ is embeddable to $G$ with congestion $O(T)$, we have $\phi_b(G) \geq \Omega(1/T)$.
Thus, we achieve an $O(T)$-approximation for sparsest cut.
\citet{krv06} showed that there exists a randomized strategy of the cut player such that the game ends after $T = O(\log^2 n)$ rounds with high probability.
This is later improved to $O(\log n)$-approximation by \citet{osvv08}.
Furthermore, \citet{Arora2016} provides a very systematic interpretation and analysis of the cut-matching game with \emph{matrix multiplicative weight update (MMWU)} and proves $T = O(\sqrt{\log n})$, which is the current best approximation ratio.

To design a cut-matching game for bipartiteness ratio, we first prove an analogous characterization of bipartiteness ratio in terms of flows in an \emph{auxiliary graph} $G'$.
Let $V^+$ and $V^-$ be the disjoint copies of $V$. 
We denote the copies of a vertex $i$ in $V^+$ and $V^-$ by $i^+$ and $i^-$, respectively.
Let $E'$ be the set of edges between $V^+$ and $V^-$ such that for each edge $(i, j) \in E$, there are two corresponding edges $(i^+, j^-) \in E'$ and $(i^-, j^+) \in E'$.
Let $G' = (V^+ \cup V^-, E')$ be the resulting undirected graph. Note that $G'$ is bipartite by construction. For $X \subseteq V$, denote by $X^+$ (resp.~$X^-$) the corresponding set of $X$ in $V^+$ (resp.~$V^-$). For $G$ and $G'$, see~\cref{fig:ori_bip} for an illustration. 
Given a vertex weight $b$ on $V$, we can naturally induce a vertex weight $b'$ on $V'$ such that $b'(u) = b(i)$ for $i \in V$ and $u \in \{i^+, i^-\}$.
With a slight abuse of notation, we also denote the induced vertex weights on $V'$ by $b$, and similarly, the induced edge weights on $E'$ by $w$.

\begin{figure}[t] 
    \centering
    \subcaptionbox{\label{fig:ori}}[.45\linewidth]{\includegraphics[scale=0.8]{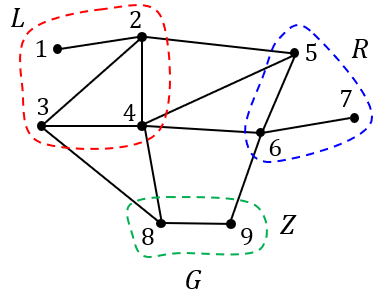}} %
    \subcaptionbox{\label{fig:bip}}[.45\linewidth]{\includegraphics[scale=0.8]{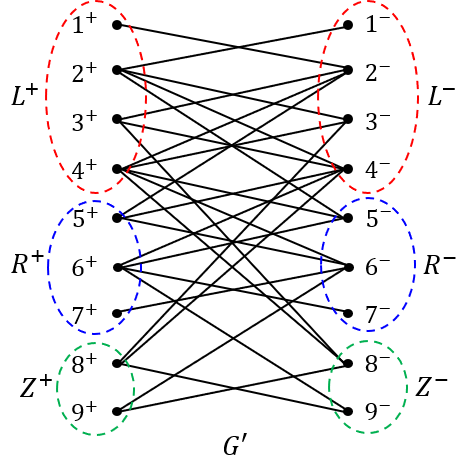}}
    \captionsetup{subrefformat=parens}
    \caption{\subref{fig:ori} the original graph $G = (V, E)$. Given a vector $\Bx = [1, 1, 1, 1, -1, -1, -1, 0, 0]^\top$, we have a corresponding partition of $V = L \cup R \cup Z$, where $L = \{1, 2, 3, 4\}$, $R = \{5, 6, 7\}$, and $Z = \{8, 9\}$. \subref{fig:bip} the corresponding bipartite graph $G' = (V^+ \cup V^-, E')$, where each edge $(u, v) \in E$ correponds to two edges $(u^+, v^-)$ and $(v^+, u^-)$ in $E'$. Moreover, the three subsets $L, R$, and $Z$ of $V$ correspond to the subsets $L^+, L^-$, $R^+, R^-$, and $Z^+, Z^-$ of $V'$, respectively.}
    \label{fig:ori_bip}
\end{figure}

The auxiliary graph reveals that bipartiteness ratio has a structure quite similar to sparsest cut.
For any $\bx \in \{0, \pm 1\}^V \setminus \{\mb{0}\}$ corresponding to tripartition $(L, R, Z)$ of $V$, let $S := L^+ \cup R^-$ and $\ol{S} := V' \setminus S$.
Then we can show  
\[
    \beta_b(\bx) = \frac{w(E'(S, \ol{S}))}{b(S)},  
\]
and thus we can represent the $b$-bipartiteness ratio as
\[
    \beta_b(G) = \min_{S = L^+ \cup R^-,\ \textnormal{disjoint } L, R \subseteq V} \frac{w(E'(S, \ol{S}))}{b(S)}.
\]
This formula is the same as generalized sparsest cut~\eqref{eq:gen_sparsest_cut}, except for the additional constraint on $S$.
Note that if $S = L^+ \cup R^-$, then $\ol{S} \supseteq L^- \cup R^+$ and therefore $\min\{b(S), b(\ol{S})\} = b(S)$.

Furthermore, the auxiliary graph $G'$ has the following symmetry: $(i^+, j^-) \in E'$ if and only if $(j^+, i^-) \in E'$.
This is a special case of \emph{skew-symmetric} graphs~\cite{Goldberg1996}.
Taking this symmetry into account, we define that $G'$ is well-linked if any \emph{symmetric} source-sink pair $A = L^+ \cup R^-$, $B = L^- \cup R^+$ for disjoint $L, R \subseteq V$ admits a saturating flow in $G'$; see \Cref{def:well-linked} for the formal definition.
Then, we show that one can characterize the bipartiteness ratio in terms of well-linkedness (\Cref{thm:linked}):
$\beta_b(G) \geq 1$ if and only if $G'$ is well-linked.
Again, this is parallel to the aforementioned characterization of sparsest cut in terms of well-linkedness.

Using our novel characterization of bipartiteness ratio, we propose the following cut-matching game for bipartiteness ratio.
Again, let $H$ be an empty multigraph on $V$.
In each round, the cut player generates a tripartition $(L, R, Z)$ of $V$.
Let $A = L^+ \cup R^-$ and $B = L^- \cup R^+$ be the corresponding symmetric source-sink pair in $G'$.
If $(A, B)$ is not well-linked in $G'$, then the game ends; we know $\beta_b(G) < 1$.
Again, we can even find $\bx \in \{0, \pm 1\}^V$ such that $\beta_b(\bx) \leq 1$ by finding a minimum cut.
Otherwise, there exists a saturating flow in $G'$ from $A$ to $B$, which the matching player finds.
Then, we add to $H$ a multigraph on $V$ induced from the demand graph of the flow.
The game ends if $H$ satisfies $\beta_b(H) \geq \Omega(1)$. 
Using a MMWU analysis similar to \cite{Arora2016}, we present a randomized strategy of the cut player such that the game ends after $T = O(\log n)$ rounds with high probability.
Since $H$ is embeddable to $G$ with congestion $O(T)$, we achieve the desired $O(\log n)$-approximation.

Each iteration of the cut matching game amounts to finding a Gram decomposition of a certain positive definite matrix provided by MMWU and a single-commodity max-flow on the auxiliary graph $G'$.
Roughly speaking, the former task is to compute a Gram decomposition of the matrix exponential of an $n \times n$ real symmetric matrix.
Although the standard method takes $O(n^3)$ time, there is a technique for computing \emph{approximate} Gram decompositions much more efficiently~\cite{Arora2016,Lau2024}, which we can adopt to our setting.

Finally, we remark that the same construction of $G'$ was used in the context of approximating min uncut~\cite{acmm05}, which is to find a partition $(L, R)$ (i.e., $Z = \emptyset$) minimizing uncut $w(E(L) \dot\cup E(R))$.
Note that since $Z = \emptyset$, minimizing uncut is equivalent to minimizing the ratio objective $\beta_b$.
We generalize their result to show that the same construction works for bipartiteness ratio, which is to find a tripartition $(L, R, Z)$ minimizing $\beta_b$.

\subsubsection*{Application to min uncut}
Building on our approximation algorithm for bipartiteness ratio, we propose an approximation algorithm for the minimum uncut problem. 
Our method is conceptually similar to those in \cite{kllot13,Trevisan2012}, which showed a close connection between min uncut and bipartiteness ratio. 
Briefly, we invoke \cref{alg:cut-matching}, which returns a partition of the vertex set $V=L \cup R \cup Z$, and then recursively apply this process to the subgraph induced by the vertex subset $Z$, where the entries of $\Bx_Z$ are zero. 
Since the size of vertex set decreases in every iteration, this method would take $\wt{O}(mn)$ time.

\subsubsection*{Directed bipartiteness ratio}

To establish \Cref{thm:directed-bipartiteness-ratio}, we follow a two-step blueprint in the spirit of the Leighton--Rao algorithm.
We first build a skew-symmetric auxiliary graph $G'$ of the input digraph by replacing every arc with a constant-size gadget.
This lets us represent the directed $b$-bipartiteness ratio as the minimum ratio-cut value over consistent subsets of $G'$, generalizing the undirected cut formulation.
We then solve a directed semimetric relaxation whose optimal value upper-bounds $\beta_b(G)$.  
Applying the directed $\ell_1$ weak-embedding theorem~\cite{charikar2006directed}, we round the solution by sampling a random directed cut in the embedded space and projecting it to a consistent cut; this yields the desired $O(\log n)$ approximation.

Applying the same recursion as in our min uncut algorithm, but now equipped with the directed bipartiteness ratio approximation, yields the algorithm for minimum directed uncut.

\subsection{Related work} 

Trevisan~\cite{Trevisan2012} defined the formulation of the bipartiteness ratio for an undirected, unweighted graph $G$, and proved an analog of Cheeger’s inequality involving $\lambda_n$, the largest eigenvalue of the normalized Laplacian of $G$. 
Additionally, Trevisan proposed an algorithm that given a graph $G$ with the fraction of the maximum cut edges being $1 - \eta$, returns a vector $\Bx$ in polynomial time such that $\beta(\Bx) \le 2 \sqrt{\eta}$. 
By recursively applying this algorithm to the subgraph induced by the zero entries of $\Bx$, one can find a cut that cuts at least a $1 - \Theta(\sqrt{\eta})$ fraction of the edges. 
Recently, Pokharanakar~\cite{pokharanakar25} extended the concept of the bipartiteness ratio from finite graphs to the setting of graphons. In this work, the author formulated an appropriate definition of bipartiteness ratio for graphons and proved a Cheeger-type inequality that relates this ratio to the top of the spectrum of the associated graphon Laplacian. 

For a positive vertex weight $b: V \to \Z_{++}$, the \emph{generalized sparsest cut} problem\footnote{The most general form of generalized sparsest cut allows arbitrary demand graphs; see \cite[Section~19.3]{KorteVygen2018}. Here, we only consider the case such that the demand graph is a complete graph and $b$ is arbitrary, which is relevant to bipartiteness ratio.} is to compute
\begin{align}\label{eq:gen_sparsest_cut}
    \phi_b(G) := \min_{\emptyset \subsetneq S \subsetneq V} \frac{w(E(S, \ol{S}))}{\min\{b(S), b(\ol{S})\}},
\end{align}
where $b(S) := \sum_{i\in S}b(i)$.
The \emph{conductance} corresponds to the case where $b(i) = \deg(i)$ for all $i \in V$.
For $b(i) = 1$ for all $i \in V$, the problem is known as the \emph{sparsest cut} problem.
For $w(e) = 1$ ($e \in E$) and $b(i) = 1$ ($i \in V$), $\phi_b(G)$ is called the \emph{edge expansion} of $G$. Computing the conductance or (generalized) sparsest cut of a graph is NP-hard~\cite{Shahrokhi1990}, so research has focused on efficient approximation algorithms. Khandekar, Rao, and Vazirani~\cite{krv06} introduced a purely combinatorial $O(\log^2 n)$‐approximation via the cut–matching game. Leighton and Rao~\cite{Leighton1999} achieved an $O(\log n)$‐approximation by solving a multicommodity flow relaxation, and Orecchia et al.~\cite{osvv08} showed the same guarantee can be obtained with a sequence of single‐commodity max‐flow calls. SDP relaxations yield better ratios: Arora, Rao, and Vazirani~\cite{Arora2009} gave an $O(\sqrt{\log n})$‐approximation by rounding an SDP. 
Sherman~\cite{Sherman2009} achieved an $O(\sqrt{\log (n) / \eps})$-approximation with $\tilde O(n^{\eps})$ max‐flow computations for any $\eps > 0$. Furthermore, Arora et al.~\cite{ahk10,Arora2016} obtained the $O(\sqrt{\log n})$ ratio in $\wt{O}(n^2)$ time. In \cite{ahk10}, they did so by efficiently constructing expander flows, whereas in~\cite{Arora2016}, they introduced a general primal–dual framework for solving SDPs via MMWU.

The cut-matching game, originally proposed by Khandekar et al.~\cite{krv06}, provides a fast combinatorial framework for approximating the sparsest cut in undirected graphs through flow-based techniques. In their work~\cite{krv06}, they designed a cut player strategy capable of constructing a graph with edge expansion $\Omega(1)$ within $O(\log^2 n)$ rounds. 
This was later improved by Orecchia et al.~\cite{osvv08}, which developed a cut player strategy that achieves edge expansion $\Omega(\log n)$ in the same number of rounds. 
Moreover, Louis~\cite{louis10} introduced a cut player strategy for cut-matching game on directed graphs, and leveraged it to develop an $O(\log^2 n)$-approximate algorithm for directed sparsest cut using $O(\log^2 n)$ max-flow computations. 
Subsequently, Lau et al.~\cite{Lau2024} improved upon the cut-matching game framework for directed graphs introduced in~\cite{louis10}, leading to an 
$O(\log n)$-approximate algorithm for directed edge expansion that runs in almost linear time. 
Additionally, it is worth noting that the cut-matching game has evolved into a versatile algorithmic primitive, finding applications in a variety of domains such as edge-disjoint paths~\cite{andrews10,chuzhoy12,cl12}, dynamic graph algorithms~\cite{ns17,ck19,bgs20,cglnps20,bgs21,grst21,chuzhoy23,cz23}, hypergraph ratio cuts~\cite{chen2025submodular,veldt23}, expander decompositions~\cite{sw19,cs20,ls22,hht24,cmm25,hhg25}, hierarchical decomposition~\cite{rst14,grst21}, network flows~\cite{peng16,hhlrs24}, etc.

For the maximum cut problem, which is equivalent to the minimum uncut problem, a simple deterministic algorithm achieves a $1/2$-approximation in polynomial time. Under the \emph{unique games conjecture}, the best approximation ratio is the Goemans–Williamson bound of $\alpha_{\mathrm{GW}}\approx0.878$~\cite{gw94}, obtained via an SDP relaxation followed by randomized hyperplane rounding. 
Suppose a maximum cut leaves an $\eta$ fraction of edges. 
Trevisan~\cite{Trevisan2012} gave a purely spectral algorithm that, in this regime, finds a cut leaving an $O(\sqrt{\eta})$ fraction of edges. 
Kwok et al.~\cite{kllot13} generalized this approach by considering higher-order eigenvalues: 
if $\alpha_k$ denotes the $k$-th smallest eigenvalue of the matrix $2\BI_n - \Lap$, then their algorithm produces a cut that leaves
$O(\frac{k}{\alpha_k} \log \frac{\alpha_k}{k \eta}) \cdot \eta$ fraction of edges.
Building on SDP hierarchies, Guruswami and Sinop~\cite{gs11} showed that, for any $\varepsilon\in(0,1)$, one can achieve a fraction of uncut edges $O(\frac{1+\eps}{\lambda_{n-k}}) \cdot \eta$ in time $2^{O(k/\varepsilon^3)}\,n^{O(1/\varepsilon)}$, where $\lambda_{n-k}$ is the $(n-k)$-th smallest eigenvalue of the normalized Laplacian and the rounding is performed on an SDP from the Lasserre hierarchy~\cite{lasserre02}. 
Since the minimum uncut is equivalent to the maximum cut under complementarity, one may also leverage approximation algorithms for the former. 
Garg et al.~\cite{gvy93} reduced min-uncut to the minimum multicut problem, yielding an $O(\log n)$-approximation for min-uncut. 
Arora et al.~\cite{acmm05} later improved this to an $O(\sqrt{\log n})$-approximation by solving SDP relaxations recursively.

It is a natural progression to generalize algorithms developed for undirected graphs to directed settings. Hajiaghayi and Räcke~\cite{hr06} devised an $O(\sqrt{n})$-approximation for directed sparsest cut via a new LP-rounding technique for the fractional relaxation, and Agarwal, Alon, and Charikar~\cite{aac07} later improved this to $\widetilde{O}(n^{11/23})$ with a refined randomized rounding scheme. Yoshida~\cite{yoshida16} introduced a nonlinear Laplacian for digraphs and proved a Cheeger-type inequality. For directed edge expansion, Lau, Tung, and Wang~\cite{Lau2024} combined triangle inequalities with the reweighted eigenvalue formulation to obtain an almost-linear-time $O(\sqrt{\log n})$-approximation via a strengthened SDP relaxation.

\section{Preliminaries}
\paragraph{Notation.} We use boldface uppercase and lowercase letters to denote matrices and vectors, respectively. For a vector $\Bx \in \R^n$, let $x_i$ or $x(i)$ denote the $i$-th entry of $\Bx$; for $S \subset [n]$, let $x(S) := \sum_{i \in S} x(i)$. Given two vectors $\Bx, \By \in \R^n$, let $\inprod{\Bx, \By} := \Bx^\top \By$. 
For any matrix $\BA \in \R^{n \times n}$, let $\tr(\BA)$ denote the trace of $\BA$. Given two matrices $\BA, \BB \in \R^{n \times n}$, we define $\inprod{\BA, \BB} := \tr(\BA \BB)$.
The operator and Frobenius norms are denoted by $\norm{\BA}$ and $\norm{\BA}_F$, respectively.

Given an undirected graph $G = (V, E)$ and two disjoint subsets $S, T \subset V$, let $E(S, T)$ denote the set of edges with one endpoint in $S$ and the other in $T$. For an edge $e = (u, v) \in E$, let $w(e)$ or $w(u, v)$ denote its weight. For an edge subset $E' \subset E$, let $w(E') := \sum_{e \in E'} w(e)$. For a vertex $v \in V$, let $\deg(v)$ denote the degree of $v$, i.e., $\deg(v) = \sum_{\text{$e \in E$: incident to $v$}} w(e)$. For a vertex subset $S \subset V$, let $E(S)$ denote the set of edges with both endpoints in $S$, and $\vol(S) := \sum_{v \in S} \deg(v)$. For graph $G$, let $\BL \in \R^{V \times V}$ denote its Laplacian matrix, and diagonal matrix $\BD \in \R^{E \times E}$ be the degree matrix such that the $i$-th diagonal entry is the degree of $i \in V$. Furthermore, let $\Lap := \BD^{-1/2} \BL \BD^{-1/2} \in \R^{V \times V}$ denote the normalized Laplacian matrix of $G$, and $0 = \lambda_1 \le \cdots \le \lambda_n \le 2$ be the eigenvalues of $\Lap$.

\begin{theorem}[Flow decomposition theorem] \label{thm:flow_dec}
Let $G = (V, E)$ be an undirected graph, $s, t \in V$ be sink and source vertices, and $c: E \to \R_{+}$ be an edge capacity function. 
For an $s$--$t$ feasible flow $f$, there is a collection of positive values $f_1, \cdots, f_k \geq 0$ and a collection of $s$--$t$ paths $P_1, \cdots, P_k$ such that 
\begin{itemize}
    \item $k \le |E|$; 
    \item the flow $f$ sends $f_i$ unit of flow through $P_i$ for each $i \in [k]$.
\end{itemize}
Furthermore, if $c$ is integer-valued, then $f_1, \cdots, f_k$ can be taken to be integers.
In other words, flow $f$ can be decomposed as a multiset of $s$--$t$ paths.
\end{theorem}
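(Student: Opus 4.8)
The plan is to prove this by the classical greedy path-extraction argument, preceded by two reductions that take care of the undirected setting. First I would orient each edge $e=(u,v)\in E$ in the direction in which $f$ routes flow and discard every edge carrying zero flow, so that $f$ becomes an ordinary flow on a directed graph $D$ whose arc set is a subset of $E$; if $f$ is specified as a pair of opposing flows on an edge, I first replace it by the net flow on that edge. Second, I would argue that we may assume $f$ is acyclic: as long as the support of $f$ contains a directed cycle $C$, subtract $\delta:=\min_{e\in C}f(e)>0$ from $f(e)$ for every $e\in C$. This preserves all capacity constraints and the conservation constraint at every vertex — in particular it does not change the value of the flow — while zeroing out at least one edge, so after at most $\abs{E}$ cancellations the support is acyclic. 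Since this changes neither the value nor the net flow at any vertex, decomposing the resulting acyclic flow suffices.

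Next comes the main loop. While the value of $f$ is positive, I trace a walk starting at $s$: from the current vertex follow any out-arc carrying positive flow. Since the value is positive, $s$ has a positive-flow out-arc, and at an internal vertex $v\notin\{s,t\}$ conservation guarantees that if positive flow has entered $v$ then positive flow also leaves $v$, so the walk can always be extended; acyclicity guarantees that no vertex is ever revisited, so the walk is a simple path and must terminate at $t$, yielding an $s$--$t$ path $P$ inside the support of $f$. Let $f_P:=\min_{e\in P}f(e)>0$ and subtract $f_P$ from $f(e)$ for every $e\in P$. This keeps $f$ feasible, keeps it acyclic, decreases its value by exactly $f_P$, and sets at least one edge of $P$ to zero. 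Iterating, each extracted path removes at least one edge from the support, so the total number $k$ of paths is at most $\abs{E}$; and once the value reaches $0$ the flow is acyclic with conservation at every vertex, which forces its support to be empty, so the extracted paths exactly account for $f$.

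For the integrality addendum, note that if $c$ is integer-valued and $f$ is an integral flow, then every quantity we ever subtract — each cycle value $\delta$ and each path value $f_P$ — is a minimum of integers, hence an integer; so by induction the residual flow stays integral throughout the procedure, and all the $f_1,\dots,f_k$ are integers.

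The greedy extraction and the edge-count bound are routine; the only place needing care is the pair of preliminary reductions — in particular justifying that replacing $f$ by its acyclic part is without loss for the statement (it alters neither the value nor the net flow at any vertex), and verifying that the walk-tracing step genuinely terminates at $t$ rather than getting stuck, which is exactly where acyclicity together with flow conservation is used.
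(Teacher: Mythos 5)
The paper states this as a classical background result in its Preliminaries section and gives no proof of its own, so there is nothing to compare the approach against; your argument is the standard greedy path-extraction proof and is correct. Two small observations worth recording explicitly. First, as you noted, the theorem as literally stated (decompose $f$ into $s$--$t$ paths only, with $k\le|E|$) is only true after the cycle-cancellation step, since a feasible flow may carry flow around cycles that no $s$--$t$ path can account for; your reduction to the acyclic part is exactly the right move, and it is the interpretation the paper needs, since it only ever decomposes a saturating flow to obtain the demand graph $M_t$. Second, the integrality clause in the paper's statement is phrased as following from $c$ being integer-valued, but as your proof makes clear what is actually used is that $f$ itself is integral (which one may assume when $c$ is integral by choosing an integral max flow); your proof silently and correctly adds that hypothesis. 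One point you might tighten when writing this up formally: in the walk-tracing step, after observing that the walk must terminate (finite, acyclic, no revisits), the clean way to pin down the terminal vertex is to note that it has zero out-flow, hence by conservation zero in-flow if it were an internal vertex, contradicting the positive in-arc just traversed; and it cannot be $s$ since that would be a revisit. You gesture at this but the cases are worth spelling out. Otherwise the argument is complete.
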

There is a randomized algorithm that finds a flow decomposition in $O(m \log n)$ time~\cite{Lee2013}.

\begin{definition}[demand graph]\label{def:demand_g}
    For a multiset of paths $\caP$ in an undirected graph $G = (V, E)$, we define the \emph{demand graph} $M$ of $\caP$ as the following multigraph.
    The vertex set of $M$ is $V$.
    For each $i, j \in V$, $M$ has $p_{i, j}$ many parallel edges $(i, j)$, where $p_{i, j}$ is the number of paths in $\caP$ between $i$ and $j$.

    For a demand graph $M = (V, E)$ with $\abs{V} = n$, define $\BD_M$ be an $n \times n$ diagonal matrix such that $(\BD_M)_{i,i}$ equals the degree of vertex $i$ in $M$. 
    Also define $\BA_M$ be an $n \times n$ matrix such that $(\BA_M)_{i,j}$ equals to the number of edges between vertices $i$ and $j$ in $M$.
\end{definition}

\subsection{Concentration Inequalities}
We employ the following standard concentration bounds for Gaussian random variables.

\begin{lemma}\label{lem:gaussian-concentration}
Let $\Bv \in \R^n$ be a vector, $\Bg \sim N(\mb{0}, \BI_n)$ be a standard Gaussian random variable, and $X = \inprod{\Bg, \Bv}$.
Then $\E[X] = 0$ and $\E[X^2] = \norm{\Bv}_2^2$.
    Furthermore, for any $t > 0$, 
    \begin{align*}
        \Pr\left(\abs{X} > t \cdot \norm{\Bv}_2 \right) \leq 2 \cdot \exp(-t^2/2). 
    \end{align*} 
\end{lemma}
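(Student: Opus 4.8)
The plan is to reduce the tail bound to the classical one-dimensional Gaussian estimate. First I would observe that $X = \inprod{\Bg, \Bv} = \sum_{i=1}^n v_i g_i$ is a finite linear combination of the independent standard Gaussians $g_1,\dots,g_n$. Linearity of expectation immediately gives $\E[X] = \sum_i v_i \E[g_i] = 0$, and since the $g_i$ are independent with unit variance, $\E[X^2] = \sum_{i,j} v_i v_j \E[g_i g_j] = \sum_i v_i^2 = \norm{\Bv}_2^2$, the off-diagonal terms vanishing. Because a linear combination of jointly Gaussian variables is again Gaussian, $X \sim N(0, \norm{\Bv}_2^2)$. If $\Bv = \zeros$ the tail claim is vacuous, so I may assume $\norm{\Bv}_2 > 0$; after normalizing, $Y := X/\norm{\Bv}_2 \sim N(0,1)$, and it suffices to prove $\Pr(\abs{Y} > t) \le 2\exp(-t^2/2)$ for all $t > 0$.

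For this I would invoke the Chernoff / moment-generating-function method. The standard Gaussian satisfies $\E[\exp(\lambda Y)] = \exp(\lambda^2/2)$ for every $\lambda \in \R$ (complete the square in the Gaussian integral), so Markov's inequality applied to $\exp(\lambda Y)$ with $\lambda > 0$ gives $\Pr(Y > t) \le \exp(-\lambda t)\,\E[\exp(\lambda Y)] = \exp(\lambda^2/2 - \lambda t)$. Choosing $\lambda = t$ minimizes the exponent and yields $\Pr(Y > t) \le \exp(-t^2/2)$; since $-Y$ is also standard normal, the same bound holds for $\Pr(Y < -t)$, and a union bound completes the argument.

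I do not expect any genuine obstacle here; the only mildly delicate point is the assertion that a linear functional of a Gaussian vector is itself Gaussian, and this can be sidestepped entirely by computing the moment generating function of $X$ directly from independence: $\E[\exp(\lambda X)] = \prod_{i=1}^n \E[\exp(\lambda v_i g_i)] = \prod_{i=1}^n \exp(\lambda^2 v_i^2/2) = \exp(\lambda^2 \norm{\Bv}_2^2 / 2)$, after which the Chernoff bound applies verbatim to $X$ with threshold $t\norm{\Bv}_2$. The interchange of expectation with the sum is trivial since it is a finite sum of $n$ terms.
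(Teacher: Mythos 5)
Your proof is correct; note that the paper merely states this lemma as a standard concentration bound without supplying a proof of its own. Your Chernoff/MGF argument — computing $\E[\exp(\lambda X)] = \exp(\lambda^2 \norm{\Bv}_2^2/2)$, applying Markov, optimizing $\lambda = t/\norm{\Bv}_2$, and symmetrizing — is the canonical route and is entirely sound, including the edge case $\Bv = \zeros$.
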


\begin{lemma}[{Laurent-Massart bound, \cite[Lemma~1]{Laurent2000}}]\label{lem:Laurent-Massart}
    For i.i.d.~standard Gaussian random variables $g_1, \dots, g_n \in \R$ and scalars $a_1, \dots, a_n \geq 0$, we have that for any $t > 0$, 
    \begin{align*}
        \Pr\left(\sum_{i=1}^n a_i(g_i^2 - 1) \leq -2 \sqrt{t} \cdot \norm{\Ba}_2 \right) \leq \exp(-t),
    \end{align*} 
where $\Ba = [a_1, \dots, a_n]^\top$.
\end{lemma}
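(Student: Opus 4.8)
The plan is to use the standard exponential-moment (Chernoff) method. Write $X := \sum_{i=1}^n a_i(g_i^2 - 1)$ and set $u := 2\sqrt{t}\,\norm{\Ba}_2$, so that the goal is to show $\Pr(X \le -u) \le e^{-t}$. For any $\lambda > 0$, Markov's inequality applied to $e^{-\lambda X}$ gives $\Pr(X \le -u) = \Pr(e^{-\lambda X} \ge e^{\lambda u}) \le e^{-\lambda u}\,\E[e^{-\lambda X}]$, so the entire argument reduces to a good upper bound on the moment generating function of $-X$ followed by an optimization over $\lambda$.

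To bound the MGF I would use independence to factor $\E[e^{-\lambda X}] = \prod_{i=1}^n e^{\lambda a_i}\,\E[e^{-\lambda a_i g_i^2}]$, and then invoke the elementary Gaussian identity $\E[e^{-s g^2}] = (1+2s)^{-1/2}$ for a standard normal $g$, valid for every $s \ge 0$. Since each $a_i \ge 0$ and $\lambda > 0$, this applies with $s = \lambda a_i$ (this is exactly where the nonnegativity of the $a_i$ is used, to keep $1 + 2\lambda a_i > 0$ for all $\lambda > 0$). Hence $\E[e^{-\lambda X}] = \exp\bigl(\sum_{i=1}^n [\lambda a_i - \tfrac12\log(1+2\lambda a_i)]\bigr)$. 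Applying the one-variable inequality $\log(1+x) \ge x - x^2/2$ for $x \ge 0$ (which follows by noting that $h(x) := \log(1+x) - x + x^2/2$ has $h(0) = 0$ and $h'(x) = x^2/(1+x) \ge 0$) term by term with $x = 2\lambda a_i$ yields $\lambda a_i - \tfrac12\log(1+2\lambda a_i) \le \lambda^2 a_i^2$, and therefore $\E[e^{-\lambda X}] \le \exp\bigl(\lambda^2\norm{\Ba}_2^2\bigr)$.

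Combining the two steps gives $\Pr(X \le -u) \le \exp\bigl(\lambda^2\norm{\Ba}_2^2 - \lambda u\bigr)$ for every $\lambda > 0$; choosing $\lambda = u/(2\norm{\Ba}_2^2)$ minimizes the exponent and produces $\Pr(X \le -u) \le \exp\bigl(-u^2/(4\norm{\Ba}_2^2)\bigr)$, and substituting $u = 2\sqrt{t}\,\norm{\Ba}_2$ makes the exponent exactly $-t$, which is the claimed bound. I do not anticipate any genuine obstacle here: the result is classical, and the only points needing a moment's care are the validity range of the Gaussian MGF identity and the scalar logarithmic inequality, both of which are routine. (An alternative is to cite \cite{Laurent2000} verbatim, since the statement is quoted from there; the sketch above reconstructs their short argument.)
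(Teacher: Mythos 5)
Your proof is correct. The paper does not prove this lemma itself but cites it directly from Laurent and Massart (2000, Lemma 1); your reconstruction is precisely their argument, the standard Chernoff-method lower-tail bound. Every step checks out: the Gaussian MGF identity $\E[e^{-sg^2}] = (1+2s)^{-1/2}$ holds for $s \geq 0$ (indeed for $s > -1/2$), the scalar inequality $\log(1+x) \geq x - x^2/2$ for $x \geq 0$ is correctly verified via $h'(x) = x^2/(1+x) \geq 0$, and the choice $\lambda = u/(2\norm{\Ba}_2^2)$ with $u = 2\sqrt{t}\norm{\Ba}_2$ gives the exponent $-u^2/(4\norm{\Ba}_2^2) = -t$ exactly. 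You also correctly identified where nonnegativity of the $a_i$ is used, namely to ensure the MGF identity applies for all $\lambda > 0$. Nothing is missing.
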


\subsection{A Useful Result on Skew-Symmetric Graphs}
An \emph{involution} on a set $V$ is a mapping $\sigma: V \to V$ such that $\sigma(v) \neq v$ and $\sigma(\sigma(v)) = v$ for all $v \in V$.
A directed graph $G = (V, E)$ is said to be \emph{skew-symmetric} if there exists an involution $\sigma: V \to V$ on the vertex set $V$ such that $(u, v) \in E$ implies $(\sigma(v), \sigma(u)) \in E$.
The edge weight $w: E \to \R$ is said to be \emph{skew-symmetric} (w.r.t.~involution $\sigma$) if $w(u, v) = w(\sigma(v), \sigma(u))$ for all $(u, v) \in E$.
We can also define a skew-symmetric undirected graph with the same condition.

\begin{lemma}\label{lem:ssym-cut}
    Let $G = (V, E)$ be a skew-symmetric directed graph and $w$ be a nonnegative skew-symmetric edge weight w.r.t.~the same involution $\sigma$.
    Let $X \subseteq V$ be a vertex subset and $X'$ be the subset of $X$ obtained by dropping all $v \in X$ with $\sigma(v) \in X$.
    Then, $w(E(X, \ol{X})) \geq w(E(X', \ol{X'}))$.
    The same holds for the undirected case.
\end{lemma}
\begin{proof}
    By the skew-symmetry of $G$ and $w$, we have 
    \begin{align*}
        w(E(X, \ol{X})) &= \frac{1}{2} \sum_{(u, v) \in E} w(u, v) \left(\kappa_{(u,v)}(X) + \kappa_{(\sigma(v), \sigma(u))}(X)\right), 
    \end{align*}
    where $\kappa_{(u,v)}(X)$ denotes the indicator function that takes $1$ if $u \in X$ and $v \notin X$, and $0$ otherwise.
    Since $w$ is nonnegative, it suffices to show that 
    \begin{align*}
        \kappa_{(u,v)}(X) + \kappa_{(\sigma(v), \sigma(u))}(X) \geq \kappa_{(u,v)}(X') + \kappa_{(\sigma(v), \sigma(u))}(X')
    \end{align*}
    for each $(u, v) \in E$.

    First, consider the case where $X$ contains both $u$ and $\sigma(u)$. 
    Then, $u, \sigma(u) \notin X'$.
    We have the following four cases:
    \begin{enumerate}[(1)]
        \item $v, \sigma(v) \in X$: Then, no edges are cut by both $X$ and $X'$. So the both sides are zero.
        
        \item $v \in X$ and $\sigma(v) \notin X$: Again, no edges are cut by both $X$ and $X'$. 

        \item $v \notin X$ and $\sigma(v) \in X$: Then, $X$ cuts only $(u,v)$ and $X'$ cuts only $(\sigma(v), \sigma(u))$. So the both sides are one.
       
        \item $v, \sigma(v) \notin X$: Then $X$ cuts only $(u,v)$, whereas $X'$ cuts none of them. So the left-hand side is one and the right-hand side is zero.
        \end{enumerate}
    This completes the proof for this case.

    Next, consider the case that $X$ contains exactly one of $u$ and $\sigma(u)$.
    By symmetry, we can further assume that $u \in X$ and $\sigma(u) \notin X$.
    If $v, \sigma(v) \in X$, then $X$ cuts only $(\sigma(v), \sigma(u))$ and $X'$ cuts only $(u, v)$. So the both sides are one.
    In the other cases, the cut edges remain the same in $X$ and $X'$. 

    Finally, suppose that $X$ contains neither $u$ nor $\sigma(u)$. 
    If $v, \sigma(v) \in X$, then $X$ cuts only $(\sigma(v), \sigma(u))$ and $X'$ cuts none.
    So the left-hand side is one, and the right-hand side is zero. 
    In the other cases, the cut edges remain the same in $X$ and $X'$. 
    This completes the proof for the directed case.

    The undirected case reduces to the directed case by bidirecting each edge.
\end{proof}

\section{Cut-Matching Game for Undirected Bipartiteness Ratio}
In this section, we generalize the cut-matching game framework for approximating the undirected bipartiteness ratio.

\subsection{A Flow-Cut Characterization of Bipartiteness Ratio}\label{sec:bipartite-flow-cut}
We will introduce a convenient representation of the bipartiteness ratio with cuts in an auxiliary graph.

\begin{definition}[auxiliary graph $G'$]
    The auxiliary graph $G' = (V', E')$ is defined as follows:
    \begin{itemize}
        \item $V' = V^+ \cup V^-$, where $V^+$ and $V^-$ are disjoint copies of $V$.
        \item $E' = \bigcup_{(i, j) \in E}\{(i^+, j^-) , (i^-, j^+)\}$, where $i^+$ and $i^-$ denote the copies of vertex $i$ in $V^+$ and $V^-$, respectively.  
    \end{itemize}
For $X \subseteq V$, denote by $X^+$ (resp.~$X^-$) the corresponding set of $X$ in $V^+$ (resp.~$V^-$). 
Given a vertex weight $b$ on $V$, let $b'$ be the weight on $V'$ such that $b'(u) = b(i)$ for $i \in V$ and $u \in \{i^+, i^-\}$.
With a slight abuse of notation, we also denote the induced vertex weights on $V'$ by $b$, and similarly, the induced edge weights on $E'$ by $w$. 
\end{definition}

See \Cref{fig:ori_bip} for an illustration of the auxiliary graph $G'$.
Note that $G$ and $w$ are skew-symmetric with involution $\sigma(i^+) = i^-$ and $\sigma(i^-) = i^+$ for $i \in V$.
Recall that each non-zero $\{0,\pm 1\}$-vector $\Bx$ corresponds to a partition of $V = L \cup R \cup Z$ such that $L = \{i \in V \mid x_i = 1\}$, $R = \{j \in V \mid x_j = -1\}$, and $Z = \{k \in V \mid x_k = 0\}$, respectively. 

\begin{claim} \label{cla:betax_conds}
For any $\Bx \in \{-1, 0, 1\}^V \setminus \{\mb{0}\}$ with partition $L \cup R \cup Z$ of $V$, let $S := L^+ \cup R^-$ and $\ol{S} := V' \setminus S$, then we have  
    \[
        \beta_b(\Bx) = \frac{w(E'(S, \ol{S}))}{b(S)},  
    \]
    and thus 
    \begin{align}\label{eq:beta-G'}
        \beta_b(G) = \min_{S = L^+ \cup R^-,\ \textnormal{disjoint } L, R \subseteq V} \frac{w(E'(S, \ol{S}))}{b(S)}.
    \end{align}
\end{claim} 

\begin{proof}
    Since $w(i^+, j^-) = w(i^-, j^+) = w(e)$ for each $e = (i, j) \in E$, we have
    \begin{align*}
        w(E'(S, \ol{S})) &= \sum_{e = (i, j) \in E} w(e) (\kappa_{(i^+,j^-)}(S) + \kappa_{(i^-,j^+)}(S)),
    \end{align*}
    where $\kappa_{u,v}(S)$ denotes the indicator function that takes $1$ if $(u, v)$ is cut by $S$ and $0$ otherwise.
    One can check that $|x_i + x_j| = \kappa_{(i^+,j^-)}(S) + \kappa_{(i^-,j^+)}(S)$ for any symmetric $S = L^+ \cup R^-$ and the corresponding $\Bx \in \{0, \pm 1\}^V$.
    Therefore, $w(E'(S, \ol{S})) = \sum_{e = (i, j) \in E} w(e) \cdot |x_i + x_j|$.

    Additionally, the denominator of $\beta_b(\Bx)$ satisfies that 
\begin{align} \label{eq:betabx_den}
\sum_{i \in V} b(i) \cdot |x_i| = \sum_{i \in L \cup R} b(i) = \sum_{u \in L^+ \cup R^-} b(u) = b(S),  
\end{align} 
where the last step follows from $b(i) = b(i^+) = b(i^-)$ for $i \in V$. 
This completes the proof.
\end{proof}

We say that a pair $(A, B)$ of subsets of $V'$ is \emph{symmetric} if there exist disjoint $L, R \subseteq V$ such that
\begin{align}\label{eq:LR-ST}
A = L^+ \cup R^-,\ B = L^- \cup R^+.
\end{align}
Note that if $(A, B)$ is symmetric, then $b(A) = b(B) = b(L) + b(R)$.
We now introduce the concept of \emph{well-linkedness} which will be used to characterize the bipartiteness ratio.
Let $r > 0$ be a parameter and $(A, B)$ be a symmetric pair of subsets of $V'$.
Consider the following undirected auxiliary network $N_{A,B,r}$ (see \cref{fig:N_STr}). 
The vertex set of the network is $V' \cup \{s^+, s^-\}$, where $s^+$ and $s^-$ are the super source and super sink, respectively. Connect $s^+$ to each $u \in A$ with an edge of capacity $b(u)$. Similarly, connect each $v \in B$ to $s^-$ with an edge of capacity $b(v)$.
Finally, connect the same edges as in $G'$ between $V^+$ and $V^-$ such that the capacity of each edge $e \in E'$ is set to be $w(e) / r$.
By construction, $N_{A,B,r}$ and its capacity function are skew-symmetric, where we regard the capacity function as an edge weight.

\begin{figure}[t] 
    \centering
    \includegraphics[scale = 0.8]{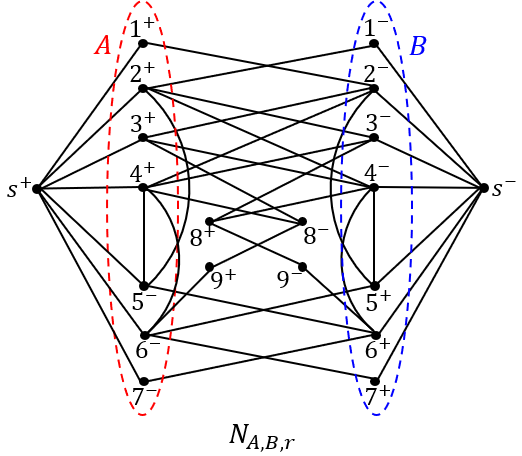}
    \caption{The figure of auxiliary network $N_{A, B, r}$, where $A = L^+ \cup R^-$, $B = L^- \cup R^+$, and $s^+, s^-$ are the source and sink. Additionally, $s^+$ has an edge to each vertex $u \in A$ with capacity $b(u)$, each $v \in B$ has an edge to $s^-$ with capacity $b(v)$, and each edge $e \in E'$ has capacity $w(e) / r$.} 
    \label{fig:N_STr}
\end{figure}

\begin{definition}[well-linkedness in $G'$]\label{def:well-linked}
    A feasible $s^+$--$s^-$ flow of the auxiliary network $N_{A,B,r}$ is said to be \emph{saturating} if all the edges from $s^+$ and to $s^-$ are saturated, i.e., their capacities are attained by the flow.
    We say that a symmetric pair $(A, B)$ is \emph{$r$-well-linked} if there exists an $s^+$--$s^-$ saturating flow in the auxiliary network $N_{A,B,r}$. We say that $G'$ is $r$-well-linked if any symmetric $(A, B)$ is $r$-well-linked.
\end{definition}

We will show that the $r$-well-linkedness of $G'$ is equivalent to the bipartiteness ratio $\beta_b(G)$ being at least $r$.
To begin, we introduce the following lemma. Let $X \subseteq V' \cup \{s^+, s^-\}$. A vertex $i \in V \cup \{s\}$ is \emph{inconsistent} in $X$ if both $i^+$ and $i^-$ are in $X$. We say that $X$ is \emph{consistent} if $X$ does not contain any inconsistent vertices.
The following lemma is immediate from the skew-symmetry of $N_{A, B, r}$ and \Cref{lem:ssym-cut}.

\begin{lemma}\label{lem:cut}
    Let $(A, B)$ be a symmetric pair and $X$ be a minimum $s^+$--$s^-$ cut in $N_{A,B,r}$.
    Let $X'$ be the consistent set obtained by dropping the copies of all inconsistent vertices from $X$.
    Then, $X'$ is also a minimum $s^+$--$s^-$ cut.
\end{lemma}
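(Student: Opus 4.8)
The plan is to show that dropping inconsistent copies can only decrease (or keep unchanged) the cut value, so since $X$ is already minimum, $X'$ is minimum too. First I would set up notation: write $\delta(X)$ for the capacity of the cut induced by a vertex set $X$ containing $s^+$ but not $s^-$, decomposing it into the source/sink edges crossing the cut plus the $E'$-edges crossing the cut. The key structural fact to exploit is the skew-symmetry of $G'$: the map $\sigma$ sending $i^+ \mapsto i^-$, $i^- \mapsto i^+$ (and $s^+ \mapsto s^-$, $s^- \mapsto s^+$) is an automorphism of $N_{A,B,r}$ that swaps source and sink, because $A,B$ form a symmetric pair (so $\sigma(A) = B$) and because $(i^+,j^-)\in E'$ iff $(i^-,j^+)\in E'$ with equal capacities. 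Hence if $X$ is a minimum $s^+$–$s^-$ cut, so is $\sigma(V' \cup \{s^+,s^-\} \setminus X)$, i.e. the complement-then-reflect of $X$.

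The main step is a submodularity / uncrossing argument. Let $Y := \sigma\bigl((V'\cup\{s^+,s^-\})\setminus X\bigr)$; this is another minimum cut (a ``source side''), and one checks that the set of inconsistent vertices of $X$ is exactly the set of $i$ such that $i^+,i^-$ are both in $X$, which after reflection corresponds to vertices that are in $X$ but \emph{not} in $Y$ in a suitable sense; concretely I would verify that $X' = X \cap Y$ (the vertices $i$ with a copy in both $X$ and the reflected-complement) and that dropping a copy of an inconsistent vertex from $X$ is the operation taking us from $X$ toward $X \cap Y$. By submodularity of the cut function, $\delta(X \cap Y) + \delta(X \cup Y) \le \delta(X) + \delta(Y) = 2\,\mathrm{OPT}$, and since both terms on the left are at least $\mathrm{OPT}$ (every $s^+$–$s^-$ cut has value $\ge \mathrm{OPT}$, and $X\cap Y, X\cup Y$ still separate $s^+$ from $s^-$), we get $\delta(X\cap Y) = \mathrm{OPT}$, so $X' = X \cap Y$ is a minimum cut.

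Alternatively — and this may be cleaner to write — I would argue directly and locally: take a single inconsistent vertex $i$ with both $i^+, i^-\in X$, and show that removing whichever of $i^+, i^-$ one likes does not increase the cut. The point is that the edges incident to $i^+$ in $G'$ are $(i^+, j^-)$ for $j \sim i$, and to $i^-$ are $(i^-, j^+)$ for the same neighbors $j$; by skew-symmetry these two ``stars'' are in bijection with matching capacities. For each neighbor $j$, the contribution of the pair of edges $\{(i^+,j^-),(i^-,j^+)\}$ to $\delta(X)$ depends only on which of $j^+, j^-$ lie in $X$, and one can case-check that this contribution is a symmetric function under swapping the membership status of $i^+$ and $i^-$, so moving to $X \setminus \{i^-\}$ changes the $E'$-part of the cut by the same amount as moving to $X\setminus\{i^+\}$ would; combined with the source/sink edges (which by symmetry of $(A,B)$ also contribute symmetrically, $i^+\in A\Leftrightarrow i^-\in B$), one of the two removals is guaranteed to be non-increasing. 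Iterating over all inconsistent vertices yields $X'$. The hard part will be handling the bookkeeping when $j$ is itself inconsistent (both $j^+,j^-\in X$) and making sure the local moves do not interfere with each other — which is exactly why the global uncrossing argument via $X\cap Y$ is the safer route, since it sidesteps the ordering issue entirely. I expect to present the uncrossing proof as the main one and relegate the local picture to intuition.
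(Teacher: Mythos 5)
Your uncrossing argument is correct and genuinely different from the paper's proof, and it is cleaner. The paper argues directly: it first shows the source/sink contributions of $X$ and $X'$ agree because symmetry of $(A,B)$ forces $b(A\cap(X\setminus X'))=b(B\cap(X\setminus X'))$, and then does a per-edge case analysis on $(i^+,j^-),(i^-,j^+)$ to show the $E'$-part of the cut does not increase when moving from $X$ to $X'$. Your approach instead packages the skew-symmetry into the automorphism $\sigma$, observes that $Y=\sigma(\overline{X})$ is another minimum $s^+$--$s^-$ cut, identifies $X'=X\cap Y$ (which is correct: $i^+\in X\cap Y$ iff $i^+\in X$ and $i^-\notin X$, which is exactly the ``drop inconsistent copies'' rule, and $s^+\in X\cap Y$ since $s$ is never inconsistent), and closes via submodularity of the undirected cut function. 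This buys you a shorter, more conceptual proof that avoids any case analysis, at the cost of invoking submodularity and the automorphism machinery; the paper's version is more elementary and self-contained. One small correction to your framing of the ``local'' alternative: the paper does not remove inconsistent vertices one at a time, so the ordering/interference worry you raise does not actually arise in the paper's argument — it compares $X$ to the final $X'$ in a single step, handling the case where $j$ is itself inconsistent inside case (1). You should present the uncrossing proof as your main one, as you intended; just drop the suggestion that the local argument is fragile for ordering reasons, since a one-shot comparison sidesteps that entirely.
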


Now we are ready to show the following theorem.

\begin{theorem}[Well-linkedness characterization of $b$-bipartiteness ratio]\label{thm:linked}
    $\beta_b(G) \geq r$ if and only if $G'$ is $r$-well-linked.
\end{theorem}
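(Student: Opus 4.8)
The plan is to prove both directions of the equivalence by relating the maximum $s^+$--$s^-$ flow value in $N_{A,B,r}$ (equivalently, by max-flow min-cut, the minimum cut value) to the bipartiteness ratio $\beta_b$. Recall from \Cref{cla:betax_conds} that $\beta_b(G) = \min_{S = L^+ \cup R^-} w(E'(S,\ol S))/b(S)$ over disjoint $L,R \subseteq V$, and that for a symmetric pair $(A,B)$ with $A = L^+ \cup R^-$, $B = L^- \cup R^+$ we have $b(A) = b(B) = b(L)+b(R) =: k$. The total capacity of edges leaving $s^+$ is exactly $k$, so $(A,B)$ is $r$-well-linked if and only if the maximum $s^+$--$s^-$ flow has value $k$, i.e. if and only if every $s^+$--$s^-$ cut has value at least $k$. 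So the theorem reduces to: $\beta_b(G) \geq r$ iff for every symmetric pair $(A,B)$, every $s^+$--$s^-$ cut in $N_{A,B,r}$ has value $\geq k$.

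First I would handle the cut side using \Cref{lem:cut}: when checking the min-cut value it suffices to consider \emph{consistent} cuts $X$ (those containing at most one copy of each vertex of $V$, and not both $s^+,s^-$ — note $s^+ \in X$, $s^- \notin X$ by definition of an $s^+$--$s^-$ cut). For a consistent $X$, writing $X \cap V' = P^+ \cup Q^-$ for disjoint $P, Q \subseteq V$, the cut value is $k - b(A \cap X) + b(B \cap X) + r^{-1} w(E'(X \cap V', \ol{X} \cap V'))$, as computed inside the proof of \Cref{lem:cut}. The cut has value $\geq k$ for all symmetric $(A,B)$ and all consistent $X$ precisely when $r^{-1} w(E'(Y,\ol Y)) \geq b(A\cap X) - b(B\cap X)$ for the relevant choices; the key observation is that the right-hand side is maximized, for a fixed $X = P^+\cup Q^-$, by taking $A = X \cap V' = P^+ \cup Q^-$ itself (i.e. $L = P$, $R = Q$), in which case $A \cap X = X\cap V'$ has weight $b(P)+b(Q)$ and $B \cap X = \emptyset$ (since $B = P^- \cup Q^+$ is disjoint from $P^+ \cup Q^-$). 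So the worst case is $A = X \cap V'$, and the min-cut-$\geq k$ condition becomes exactly $w(E'(S,\ol S)) \geq r \cdot b(S)$ for every $S$ of the form $P^+ \cup Q^-$ — which by \Cref{cla:betax_conds} is exactly $\beta_b(G) \geq r$.

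Concretely, for the ``only if'' direction I would assume $\beta_b(G) \geq r$, fix an arbitrary symmetric pair $(A,B)$, and show every $s^+$--$s^-$ cut has value $\geq k$; by \Cref{lem:cut} reduce to consistent $X$, write $X \cap V' = P^+ \cup Q^-$, bound $b(A\cap X) - b(B\cap X) \leq b(P)+b(Q) = b(P^+ \cup Q^-)$ (using that $A,B$ are disjoint and symmetric, so a vertex copy in $X$ contributes to at most one of $A\cap X$, $B\cap X$), and then apply $w(E'(P^+\cup Q^-, \ol{\cdot})) \geq r\, b(P^+ \cup Q^-)$ from the characterization. For the ``if'' direction I would contrapose: if $\beta_b(G) < r$, pick $S = L^+ \cup R^-$ with $w(E'(S,\ol S)) < r\, b(S)$, take $A := S$, $B := L^- \cup R^+$, and $X := \{s^+\} \cup S$; then $b(A \cap X) = b(S) = k$, $b(B \cap X) = 0$, so the cut value is $k - k + 0 + r^{-1} w(E'(S,\ol S)) = r^{-1} w(E'(S,\ol S)) < b(S) = k$, exhibiting an $s^+$--$s^-$ cut of value $< k$, hence no saturating flow, so $G'$ is not $r$-well-linked.

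The main obstacle — really the only subtle point — is the reduction to consistent cuts and the bookkeeping that the ``$-b(A\cap X) + b(B\cap X)$'' term is genuinely bounded by $b(X \cap V')$ with the bound tight exactly at $A = X \cap V'$; but \Cref{lem:cut} already does the heavy lifting of the first part, so what remains is the short symmetry argument that $A \cap X$ and $B \cap X$ are disjoint (as subsets of $V'$, since $A \cap B = \emptyset$) together with the explicit cut-value formula. Everything else is the max-flow--min-cut theorem and \Cref{cla:betax_conds}.
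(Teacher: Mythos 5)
Your proposal is correct and follows essentially the same route as the paper's proof: both reduce to comparing minimum $s^+$--$s^-$ cut value against $b(A)$ via max-flow min-cut, both invoke \Cref{lem:cut} to restrict attention to consistent cuts, both use the explicit cut-value formula $b(A) - b(A\cap S) + b(B\cap S) + r^{-1}w(E'(S,\ol S))$ together with the bound $b(A\cap S) - b(B\cap S) \le b(S)$ (from disjointness of $A$ and $B$), and both connect to $\beta_b$ through \Cref{cla:betax_conds} and the trivial cut $A \cup \{s^+\}$. The only cosmetic difference is that you prove each direction in the contrapositive/direct form opposite to the paper's choice, and you add the (correct, mildly illuminating, but unnecessary) observation that the bound $b(A\cap X)-b(B\cap X)\le b(X\cap V')$ is tight exactly when $A = X\cap V'$.
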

\begin{proof}
    By the max-flow min-cut theorem, it suffices to show that $\beta_b(G) \geq r$ if and only if the minimum $s^+$--$s^-$ cut of $N_{A,B,r}$ is at least $b(A)$ for any symmetric $(A, B)$.

    (If part)
    Take an arbitrary symmetric $(A, B)$ and assume that the minimum cut in $N_{A,B,r}$ is at least $b(A)$.
    Because $A \cup s^+$ is an $s^+$--$s^-$ cut of value $r^{-1} \cdot w(E'(A, \ol{A}))$, we have $r^{-1} \cdot w(E'(A, \ol{A})) \geq b(A)$.
    Since $(A, B)$ is arbitrary, this implies that $\beta_b(G) \geq r$.
    
    (Only if part)
    We show the contrapositive.
    Assume that $G'$ is not $r$-well-linked, i.e., for some symmetric $(A, B)$, the minimum $s^+$--$s^-$ cut in $N_{A,B,r}$ is less than $b(A)$.
    Let $X$ be a minimum $s^+$--$s^-$ cut in $N_{A,B,r}$.
    By \cref{lem:cut}, without loss of generality, we can assume that $X$ is consistent, i.e., there exists consistent $S \subseteq V'$ such that $X = S \cup s^+$.
    Then, the cut value of $X$ can be bounded as 
    \begin{align*}
        b(A) - b(A \cap S) + b(B \cap S) + r^{-1} \cdot w(E'(S, \ol{S})) < b(A),
    \end{align*}
    and therefore
    \begin{align*}
        r^{-1} \cdot w(E'(S, \ol{S})) < b(A \cap S) - b(B \cap S) \leq b(S).
    \end{align*}
    Thus $S \neq \emptyset$ (otherwise $b(S) = 0$ and the above inequality is violated) and we have $\beta_b(S) < r$.
\end{proof}

Lastly, we rephrase \Cref{thm:linked} in terms of congestion, which is convenient for dealing with different values of $r$.
Let $N_{A,B}$ denote the auxiliary network $N_{A,B,r}$ with $r = 1$.
Recall that the congestion of a flow $f$ in $N_{A,B}$ (not necessarily satisfying the capacity constraint) is the maximum ratio of $f(e)/c(e)$ over edge $e$ in $N_{A,B}$, where $c$ is the edge capacity function.

\begin{corollary}[Congestion characterization of $b$-bipartiteness ratio]\label{cor:congestion}
    $\beta_b(G) \geq r$ if and only if for any symmetric $(A, B)$, the auxiliary network $N_{A,B}$ has an $s^+$--$s^-$ saturating flow with congestion at most $1/r$.
\end{corollary}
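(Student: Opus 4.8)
The plan is to derive \Cref{cor:congestion} directly from \Cref{thm:linked} by a rescaling argument that converts between the capacity scaling in $N_{A,B,r}$ and the congestion of a flow in the fixed network $N_{A,B} = N_{A,B,1}$. The key observation is that the network $N_{A,B,r}$ is obtained from $N_{A,B}$ by leaving the source/sink edges untouched (they keep capacity $b(u)$) while dividing every edge in $E'$ by $r$. So a flow in $N_{A,B,r}$ and a flow in $N_{A,B}$ with the same flow values on every edge will saturate exactly the same source/sink edges, but their congestion relative to the $E'$-edges differs by a factor of $r$.

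First I would spell out this correspondence precisely. Fix a symmetric pair $(A,B)$. Suppose $f$ is an $s^+$--$s^-$ saturating flow in $N_{A,B,r}$, i.e. it is feasible for all capacities and saturates every source/sink edge. View the same function $f$ as a flow in $N_{A,B}$: it still obeys conservation and still saturates the source/sink edges (those capacities are identical in both networks), and for every edge $e \in E'$ we have $f(e) \le w(e)/r$, hence $f(e)/w(e) \le 1/r$. Thus $f$ is an $s^+$--$s^-$ saturating flow in $N_{A,B}$ with congestion at most $1/r$. Conversely, if $g$ is an $s^+$--$s^-$ saturating flow in $N_{A,B}$ with congestion at most $1/r$, then $g(e) \le w(e)/r$ for every $e \in E'$ and $g$ respects the source/sink capacities exactly (a saturating flow in particular respects them), so $g$ is feasible for $N_{A,B,r}$ and still saturates all source/sink edges; hence $(A,B)$ is $r$-well-linked. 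This establishes that, for each symmetric $(A,B)$: $(A,B)$ is $r$-well-linked $\iff$ $N_{A,B}$ admits an $s^+$--$s^-$ saturating flow with congestion at most $1/r$.

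Then I would combine this equivalence with \Cref{thm:linked}: $\beta_b(G) \ge r$ iff $G'$ is $r$-well-linked iff every symmetric $(A,B)$ is $r$-well-linked iff (by the displayed equivalence) every symmetric $(A,B)$ has an $s^+$--$s^-$ saturating flow in $N_{A,B}$ with congestion at most $1/r$, which is exactly the statement of \Cref{cor:congestion}.

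I do not anticipate a genuine obstacle here — the whole content is the rescaling bookkeeping. The one point that deserves care is the definition of "saturating flow with congestion at most $1/r$": a flow that overloads an $E'$-edge by a factor up to $1/r$ but still exactly meets every source/sink capacity. One must check that the source/sink edges are treated as hard constraints (saturation forces both feasibility and tightness there), while the $E'$-edges are the ones allowed to be congested; since the source/sink capacities are the same in $N_{A,B}$ and in $N_{A,B,r}$ and are $\ge w(e)/r$-independent, this asymmetry is exactly what makes the two formulations line up. I would state this carefully and then the corollary follows in a couple of lines.
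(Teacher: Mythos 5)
Your proof is correct and is exactly the rescaling bookkeeping the paper implicitly intends---indeed the paper states \Cref{cor:congestion} with no proof, treating it as a direct rephrasing of \Cref{thm:linked}. You have carried out the intended correspondence between feasible saturating flows in $N_{A,B,r}$ and saturating flows of congestion $\le 1/r$ in $N_{A,B}$, and correctly flagged the only subtlety (source/sink edges have identical capacities in both networks, so saturation pins them while the $E'$-edges absorb the congestion).
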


\subsection{Cut-Matching Game via Matrix Multiplicative Weight Update}
Following \cite{Arora2016,Lau2024}, we propose a cut-matching game (see \cref{alg:cut-matching}) derived from MMWU. 

\begin{algorithm}[t] 
    \caption{Cut-Matching Game for Bipartiteness Ratio}
    \label{alg:cut-matching}   

    \Input{an undirected graph $G = (V, E)$ and $r > 0$ with $1/r$ being an integer.}

    \Output{either a vector $\bx \in \{0, \pm 1\}^V$ with $\beta_b(\bx) < r$ or a certificate $H$ that proves $\beta_b(G) \geq \Omega(r/\log n)$.}

    Set $T = O(\log^2 n)$, $\Tproj = O(\log n)$, $\delta \in (0, 1)$.  

    \For{$t = 1$ \KwTo $T$}
    {
       Compute an approximate Gram decomposition $\Bv_1, \cdots, \Bv_n$ of $\BD_b^{-1/2} \BX_t \BD_b^{-1/2}$ by \Cref{lem:app_gram}, where $\BX_t$ is the MMWU iterate \eqref{eq:MMWU}.

       Sample a standard gaussian vector $\Bg \sim \mc{N}(\mb{0}, \BI_n)$ and compute $\wt{v}_i \gets \Bg^\top \Bv_i$ for each $i \in V$.
       If $\sum_{i \in V} b(i) \abs{\wt{v}_i}^2 < 1/4$, then sample $\Bg$ again and repeat.
       If it fails to find such $\Bg$ after $\Tproj$ times, then \textbf{fail}.

       Let $L' \gets \{i \in V \mid \wt{v}_i > 0\}$ and $R' \gets \{i \in V \mid \wt{v}_i < 0\}$.

       \uIf{$\sum_{i \in L'} b(i)\abs{\wt{v}_i}^2 < \sum_{i \in R'} b(i)\abs{\wt{v}_i}^2$}
       {
          $(L', R') \gets (R', L')$.
       }

       Let $(L, R) \gets (L', \emptyset)$ and $(A, B)$ be the corresponding symmetric sink-source pair \eqref{eq:LR-ST}.

    \uIf{$(A, B)$ is not $r$-well-linked} 
    {
        Find a consistent minimum $s^+$--$s^-$ cut in the auxiliary network $N_{A,B,r}$ and $\bx \in \{0, \pm 1\}^V$ be the vector corresponding to the cut.

        \Return $\bx$ \Comment{This case we find $\bx$ such that $\beta_b(\bx) < r$.} \label{line:return-x}

    }
    \Else
    {
    Find an $s^+$--$s^-$ integral saturating flow in the auxiliary network $N_{A,B,r}$.

    Decompose the flow into a multiset $\caP_t$ of odd $L$--$L$ paths.

    Let $M_t$ be the demand graph of $\caP_t$.

    $\BF_t \gets \BD_b^{-1/2} \sum_{(i, j) \in M_t} (\Be_i + \Be_j) (\Be_i + \Be_j)^\top \BD_b^{-1/2}$.
    }
    }    
    \Return $H \gets M_1 \oplus \cdots \oplus M_T$ \Comment{This case we find a certificate $H$ that proves $\beta_b(G) \geq \Omega(r/\log n)$.} \label{line:return-H}
\end{algorithm}

In MMWU, we maintain $n \times n$ symmetric positive definite matrix $\BX_t$ with trace one (i.e., density matrix).
For each round $t$, we receive a symmetric matrix $\BF_t$ and update the density matrix by 
\begin{align}\label{eq:MMWU}
    \BX_{t+1} = \frac{\exp(-\delta \sum_{\tau=1}^t \BF_\tau)}{\tr(\exp(-\delta \sum_{\tau=1}^t \BF_\tau))},
\end{align}
where $\delta > 0$ is a parameter called step size.
Conventionally, we define $\BX_1 = \frac{1}{n} \BI_n$.
The following is the standard regret bound of MMWU. 

\begin{lemma}[Theorem 10 in \cite{kale07}] \label{lem:mmwu}
    Given $\mb{0} \preceq \BF_t \preceq \rho \BI$ and $\delta \in (0, 1)$, it holds that 
    \[
    \lambda_{\min}\left(\sum_{t =1}^{T} \BF_t\right) \ge (1 - \rho \delta) \cdot \sum_{t=1}^{T} \inprod{\BF_t, \BX_t} - \frac{\ln n}{\delta}. 
    \]
\end{lemma}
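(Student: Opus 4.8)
The plan is to prove the regret bound by the standard matrix-exponential potential argument. Write $\BW_t := \exp(-\delta \sum_{\tau=1}^{t-1} \BF_\tau)$, so that $\BW_1 = \BI_n$, and let $\Phi_t := \tr(\BW_t)$ be the potential; then $\BX_t = \BW_t / \Phi_t$. I will sandwich $\Phi_{T+1}$ between a spectral lower bound and a per-round upper bound and then compare the two. For the lower bound, observe that $\exp(-\delta \sum_{t=1}^T \BF_t)$ is positive definite with largest eigenvalue $\exp(-\delta\, \lambda_{\min}(\sum_{t=1}^T \BF_t))$, and the trace of a positive semidefinite matrix is at least its largest eigenvalue, so $\Phi_{T+1} \ge \exp(-\delta\, \lambda_{\min}(\sum_{t=1}^T \BF_t))$.

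For the upper bound I would establish the one-step inequality $\Phi_{t+1} \le \Phi_t \cdot (1 - \delta(1-\rho\delta) \inprod{\BF_t, \BX_t})$. Two ingredients are needed. First, the Golden--Thompson inequality $\tr(\exp(\BP+\BQ)) \le \tr(\exp(\BP)\exp(\BQ))$, applied with $\BP = \log \BW_t = -\delta\sum_{\tau < t}\BF_\tau$ and $\BQ = -\delta\BF_t$, gives $\Phi_{t+1} \le \tr(\BW_t \exp(-\delta\BF_t))$. Second, the elementary scalar estimate $e^{-a} \le 1 - a + a^2$ (valid for all $a \ge 0$) yields, for $x \in [0, \rho]$, that $e^{-\delta x} \le 1 - \delta x + \delta^2 x^2 \le 1 - \delta(1-\rho\delta)x$; since the eigenvalues of $\BF_t$ lie in $[0,\rho]$, applying this inequality to the spectrum of $\BF_t$ gives $\exp(-\delta\BF_t) \preceq \BI_n - \delta(1-\rho\delta)\BF_t$. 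Using the fact that $\tr(\BW_t\BN) \le \tr(\BW_t\BM)$ whenever $\BN \preceq \BM$ and $\BW_t \succeq \zeros$ (write $\tr(\BW_t(\BM-\BN)) = \tr(\BW_t^{1/2}(\BM-\BN)\BW_t^{1/2}) \ge 0$), together with $\tr(\BW_t\BF_t) = \Phi_t \inprod{\BF_t, \BX_t}$, we obtain the claimed one-step bound; then $1+y \le e^y$ turns it into $\Phi_{t+1} \le \Phi_t \exp(-\delta(1-\rho\delta)\inprod{\BF_t,\BX_t})$. Telescoping from $\Phi_1 = \tr(\BI_n) = n$ gives $\Phi_{T+1} \le n \exp(-\delta(1-\rho\delta)\sum_{t=1}^T \inprod{\BF_t, \BX_t})$.

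It then remains to combine the two estimates: $\exp(-\delta\, \lambda_{\min}(\sum_{t=1}^T \BF_t)) \le n \exp(-\delta(1-\rho\delta)\sum_{t=1}^T \inprod{\BF_t, \BX_t})$; taking logarithms and dividing by $-\delta$ (which reverses the inequality) gives exactly $\lambda_{\min}(\sum_{t=1}^T \BF_t) \ge (1-\rho\delta)\sum_{t=1}^T \inprod{\BF_t, \BX_t} - \frac{\ln n}{\delta}$. The only step that is not routine scalar calculus or finite-dimensional linear algebra is the Golden--Thompson inequality, which is precisely what handles the fact that the matrices $\BF_\tau$ need not commute; I would invoke it as a black box rather than reprove it. (Incidentally, when $\rho\delta \ge 1$ the claimed bound is vacuous, since $\inprod{\BF_t,\BX_t} \ge 0$ and $\lambda_{\min}(\sum_t\BF_t) \ge 0$, so the interesting regime is $\rho\delta < 1$; but the argument above needs no such assumption.)
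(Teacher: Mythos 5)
Your proof is correct: the matrix-exponential potential argument, using Golden--Thompson to handle noncommutativity, the operator inequality $\exp(-\delta\BF_t)\preceq\BI_n-\delta(1-\rho\delta)\BF_t$ obtained from the scalar estimate $e^{-a}\le 1-a+a^2$ on the spectrum of $\BF_t$, and the telescoping bound $\Phi_{T+1}\le n\exp(-\delta(1-\rho\delta)\sum_t\inprod{\BF_t,\BX_t})$ against the spectral lower bound $\Phi_{T+1}\ge\exp(-\delta\lambda_{\min}(\sum_t\BF_t))$, is the standard proof of this matrix MWU regret bound. The paper does not reprove the lemma but cites it as Theorem~10 of Kale's thesis~\cite{kale07}, which uses essentially the same potential argument you give.
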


To connect MMWU with a cut-matching game, we need an ``oracle'' that finds either (i) an $n \times n$ symmetric matrix $\BF_t$ with $\inprod{\BF_t, \BX_t} \geq \gamma$ and $\BF_t \preceq \rho \BI$, or (ii) symmetric $(A, B)$ that is not $r$-well-linked, where $r$ is a guess of $\beta_b(G)$.
The parameter $\rho$ is often called the \emph{width}.
Furthermore, $\BF_t$ must correspond to a subgraph of $G$ in some manner so that if $\lambda_{\min}\left(\sum_{t=1}^T \BF_t\right)$ is large, then the bipartiteness ratio of $G$ is also large. 
Later, we take $\BF_t$ as the demand matrix of a subgraph $G_t$ of $G$. 

Suppose that the oracle returns $(A, B)$ that is $r$-well-linked.
Then, we can find a saturating  $s^+$--$s^-$ flow of in $N_{A,B,r}$ by definition.
By the flow decomposition theorem (see \cref{thm:flow_dec}), there is a multiset of $A$--$B$ paths realizing the flow.
These paths correspond to a multiset $\caP$ of odd $L$--$L$ paths, odd $R$--$R$ paths, and even $L$--$R$ paths in $G$.
Let $M$ be the demand graph of $\caP$ (see \cref{def:demand_g}). 
Note that $\deg_M(i) = 2b(i)$ for each $i \in L \cup R$ because the flow saturates all edges connecting to $s^+$ and $s^-$. 
We set 
\begin{align*}
    \BF_t = \BD_b^{-1/2}\sum_{(i, j) \in M_t} (\be_i + \be_j) (\be_i + \be_j)^\top \BD_b^{-1/2} = \BD_b^{-1/2}(\BD_{M_t} + \BA_{M_t}) \BD_b^{-1/2}.
\end{align*}
Note that $i$ and $j$ can be identical if $(i, j)$ is a self-loop.
Since $\deg_{M_t}(i) \leq 2b(i)$ for $i \in V$, we have $\BD_{M_t} \preceq 2 \BD_b$ and 

\begin{align*}
\BF_t = \BD_b^{-1/2} (\BD_{M_t} + \BA_{M_t}) \BD_b^{-1/2} \preceq 2 \BD_b^{-1/2} \BD_{M_t} \BD_b^{-1/2} \preceq 4 \BI_n, 
\end{align*} 
where the second step follows from $\BA_{M_t} \preceq \BD_{M_t}$; the third step follows from $\BD_{M_t} \preceq 2 \BD_b$. Therefore, we have $\BF_t \preceq 4 \BI$ and $\rho = O(1)$.

The remaining task is to bound $\gamma$. Letting $\Bv_1, \dots, \Bv_n$ be a Gram decomposition of $\BD_b^{-1/2} \BX_t \BD_b^{-1/2}$, that is, $\BD_b^{-1/2} \BX_t \BD_b^{-1/2} = \BV^\top \BV$ with $\BV = [\Bv_1, \cdots, \Bv_n]$, then we have  
\begin{align*}
    \inprod{\BF_t, \BX_t} 
    &= \left<\BD_b^{-1/2} \sum_{(i, j) \in M_t} (\Be_i + \Be_j) (\Be_i + \Be_j)^\top \BD_b^{-1/2}, \BX_t \right> \\ 
    &= \sum_{(i, j) \in M_t} \inprod{\BD_b^{-1/2} (\Be_i + \Be_j) (\Be_i + \Be_j)^\top \BD_b^{-1/2}, \BX_t} \\ 
    &= \sum_{(i, j) \in M_t} \inprod{(\be_i + \be_j)(\be_i + \be_j)^\top, \BD_b^{-1/2} \BX_t \BD_b^{-1/2}} \\
    &= \sum_{(i, j) \in M_t} \inprod{(\be_i + \be_j)(\be_i + \be_j)^\top, \BV^\top \BV} \\ 
    &= \sum_{(i, j) \in M_t} (\BV (\Be_i + \Be_j))^\top \BV (\Be_i + \Be_j) \\
    &= \sum_{(i, j) \in M_t} \norm{\Bv_i + \Bv_j}^2. 
\end{align*}
Note that 
\begin{align} \label{eq:sum_bi_vi}
\sum_{i \in V} b(i) \cdot \norm{\Bv_i}^2 =\tr(\BV \BD_b \BV^\top) = \tr(\BD_b^{1/2} \BV^\top \BV \BD_b^{1/2}) = \tr(\BX_t) = 1.  
\end{align} 
So now the goal is to find $(L, R)$ such that $\sum_{(i, j) \in M} \norm{\Bv_i + \Bv_j}^2$ is large for \emph{any} demand graph $M_t$.
We will show that there is a simple way of choosing $(L, R)$ with $\gamma = \Omega(1/\log n)$.
In what follows, we denote $M_t$ by simply $M$.

\subsubsection{One-dimensional case}
First, let us pretend that the vector $\Bv_i$ is a scalar $v_i$.
The general case will be reduced to this case by the standard Gaussian projection trick. Let $L' = \{i \in V : v_i > 0\}$ and $R' = \{i \in V : v_i < 0\}$, then \eqref{eq:sum_bi_vi} gives $\sum_{i \in L'} b(i) \cdot \abs{v_i}^2 + \sum_{i \in R'} b(i) \cdot \abs{v_i}^2 = 1$.
Without loss of generality, we can assume that $\sum_{i \in L'} b(i) \cdot \abs{v_i}^2 \geq 1/2$; otherwise just swap $L'$ and $R'$. Let $(L, R) = (L', \emptyset)$, then $M$ consists of edges (possibly self-loops) connecting vertices in $L$.
Furthermore, $\deg_M(i) = 2b(i)$ for each $i \in L$ by construction.
For each $\{i, j\} \in M$, 
\begin{align*}
    \abs{v_i + v_j}^2 = (\abs{v_i} + \abs{v_j})^2 = \abs{v_i}^2 + \abs{v_j}^2 + 2\abs{v_i}\abs{v_j} \geq \abs{v_i}^2 + \abs{v_j}^2.
\end{align*}
Summing this over $\{i, j\} \in M$, we have
\begin{align*}
    \sum_{\{i, j\} \in M} \abs{v_i + v_j}^2  
    \geq \sum_{\{i, j\} \in M} (\abs{v_i}^2 + \abs{v_j}^2)
    = \sum_{i \in L} \deg_M(i) \cdot \abs{v_i}^2
    = 2\sum_{i \in L} b(i) \cdot \abs{v_i}^2 
    \geq 1, 
\end{align*}
where the last step follows from $L = L'$ and $\sum_{i \in L'} b(i) \cdot |v_i|^2 \ge 1/2$. Therefore, in this case, $\gamma = 1$.

\subsubsection{General case}

Let us now describe how to choose $(L, R)$ formally. We first compute a Gram decomposition $\Bv_1, \dots, \Bv_n$ of $\BD_b^{1/2} \BX_t \BD_b^{1/2}$, then sample $\Bg \sim N(\mb{0}, \BI_n)$ and compute $\wt{v_i} = \inprod{\Bg, \Bv_i} \in \R$ for each $i \in V$. By \Cref{lem:gaussian-concentration}, we have $\E_g[\wt v_i^2] = \norm{\Bv_i}_2^2$ and 
\begin{align}\label{eq:tilde-v-ineq}
     \abs{\wt v_i + \wt v_j} = \abs{\inprod{\Bv_i + \Bv_j, \Bg}} \leq O\left(\sqrt{\log n}\right) \cdot \norm{\Bv_i + \Bv_j}_2 
\end{align}
for any $i, j$ with probability $1 - 1/\poly(n)$.

We also need to ensure that $\sum_{i \in V} b(i) \cdot \abs{\wt v_i}^2$ is not too small.
Note that $\E\left[\sum_{i \in V} b(i) \cdot \abs{\wt v_i}^2 \right] = \sum_{i \in V} b(i) \cdot \norm{\Bv_i}_2^2 = 1$. Next we will show that $\sum_{i \in V} b(i) \cdot \abs{\wt v_i}^2 \geq 1/2$ with at least a constant probability.
\begin{lemma}
    \[
        \Pr\left(\sum_{i \in V} b(i) \cdot \abs{\tilde v_i}^2 < \frac{1}{2} \right) \leq e^{-1/16}.
    \]
\end{lemma}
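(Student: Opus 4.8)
The plan is to write $Y := \sum_{i \in V} b(i) \cdot \abs{\wt v_i}^2$ as a positive linear combination of squared centered Gaussians and apply the Laurent--Massart lower tail bound (\Cref{lem:Laurent-Massart}). Concretely, stack the vectors $\Bv_i$ into the matrix $\BV = [\Bv_1, \dots, \Bv_n] \in \R^{n \times n}$, so that $\wt v_i = \inprod{\Bg, \Bv_i} = (\BV^\top \Bg)_i$ and $Y = \Bg^\top \BV \BD_b \BV^\top \Bg = \Bg^\top \BM \Bg$ where $\BM := \BV \BD_b \BV^\top \succeq \zeros$. Diagonalizing $\BM = \BU \Lambda \BU^\top$ with eigenvalues $a_1, \dots, a_n \geq 0$ and setting $\Bh := \BU^\top \Bg$ (which is again standard Gaussian by rotational invariance), we get $Y = \sum_{i=1}^n a_i h_i^2$. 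Crucially, $\sum_i a_i = \tr(\BM) = \tr(\BV \BD_b \BV^\top) = \tr(\BD_b^{1/2} \BV^\top \BV \BD_b^{1/2}) = \tr(\BX_t) = 1$ by the same computation as in \eqref{eq:sum_bi_vi}, so $\E[Y] = \sum_i a_i = 1$ and $\norm{\Ba}_2 \leq \norm{\Ba}_1 = 1$.

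Now I would apply \Cref{lem:Laurent-Massart} to $\sum_{i=1}^n a_i (h_i^2 - 1) = Y - 1$. For any $t > 0$ it gives
\[
    \Pr\left(Y - 1 \leq -2\sqrt{t} \cdot \norm{\Ba}_2\right) \leq e^{-t}.
\]
Since $\norm{\Ba}_2 \leq 1$, the event $\{Y < 1/2\}$ is contained in the event $\{Y - 1 \leq -1/2\} \subseteq \{Y - 1 \leq -2\sqrt{t}\,\norm{\Ba}_2\}$ precisely when $2\sqrt{t} \leq 1/2$, i.e. $t \leq 1/16$. Taking $t = 1/16$ yields $\Pr(Y < 1/2) \leq e^{-1/16}$, which is the claimed bound.

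This argument is essentially a direct invocation of the stated concentration lemma, so there is no serious obstacle; the only points requiring care are (i) verifying that $\wt v_i$ really is the $i$-th coordinate of $\BV^\top \Bg$ with $\BV$ having the $\Bv_i$ as \emph{columns} (a transpose bookkeeping issue), and (ii) confirming $\tr(\BM) = 1$, which reuses the identity \eqref{eq:sum_bi_vi} already established — note that here $\Bv_i$ is the Gram decomposition of $\BD_b^{-1/2}\BX_t\BD_b^{-1/2}$, so $\BV^\top\BV = \BD_b^{-1/2}\BX_t\BD_b^{-1/2}$ and indeed $\tr(\BD_b^{1/2}\BV^\top\BV\BD_b^{1/2}) = \tr(\BX_t) = 1$. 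One subtlety is that the $\Bv_i$ are only an \emph{approximate} Gram decomposition (from \Cref{lem:app_gram}); I would note that the approximation error can be absorbed into constants, or that $\tr(\BM)$ is within $1 \pm o(1)$ of $1$, so the same bound holds with a slightly adjusted constant — but if the paper's convention treats the decomposition as exact for this lemma, the computation above is complete as written.
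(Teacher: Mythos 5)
Your proof is correct and follows essentially the same path as the paper's: rewrite the sum as a Gaussian quadratic form $\Bg^\top \BM \Bg$ with $\BM = \sum_i b(i)\Bv_i\Bv_i^\top$ of unit trace, diagonalize via rotational invariance, and apply the Laurent--Massart lower-tail bound. The only cosmetic difference is the choice of $t$: the paper sets $t = \frac{1}{16\norm{\bm{\lambda}}_2^2}$ so the threshold is exactly $1/2$ and then bounds $\exp(-t) \leq e^{-1/16}$ via $\norm{\bm{\lambda}}_2 \leq \norm{\bm{\lambda}}_1 = 1$, whereas you fix $t = 1/16$ and use $\norm{\Ba}_2 \leq 1$ to establish the event inclusion — these are equivalent rearrangements of the same computation.
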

\begin{proof}
    Note that
    \begin{align*}
        \sum_{i \in V} b(i) \cdot \abs{\wt v_i}^2 
        = \sum_{i \in V} b(i) \cdot \inprod{\Bg, \Bv_i}^2 
        = \Bg^\top \left(\sum_{i \in V} b(i) \cdot \Bv_i \Bv_i^\top \right) \Bg, 
    \end{align*}
where the matrix $\sum_{i \in V} b(i) \Bv_i \Bv_i^\top$ is positive semidefinite and has trace one.
    Since the gaussian distribution is invariant under orthogonal transformations, we can assume that $\sum_{i \in V} b(i) \Bv_i \Bv_i^\top = \Diag(\lambda_1, \dots, \lambda_n)$, where $\lambda_1, \dots, \lambda_n \geq 0$ are the eigenvalues of $\sum_{i \in V} b(i) \cdot \Bv_i \Bv_i^\top$ and hence $\sum_i \lambda_i = 1$. As a consequence, we have  
    \begin{align*}
        \sum_{i \in V} b(i) \cdot \abs{\wt v_i}^2 = \Bg^\top \Diag(\lambda_1, \cdots, \lambda_n) \Bg  
        = \sum_{i=1}^n \lambda_i g_i^2,
    \end{align*}
    where $g_1, \dots, g_n \in \R$ are independent standard Gaussian random variables.
    By the Laurent-Massart bound (see \Cref{lem:Laurent-Massart}), we have that for any $t > 0$, 
    \[
        \Pr\left(\sum_{i=1}^n \lambda_i g_i^2 < 1 - 2 \sqrt{t} \cdot \norm{\bm{\lambda}}_2 \right) \leq \exp(-t), 
    \]
    where $\bm{\lambda} = [\lambda_1, \cdots, \lambda_n]^\top$. 
    Setting $t = \frac{1}{16\norm{\bm{\lambda}}_2^2}$ gives rise to 
    \[
        \Pr\left(\sum_{i=1}^n \lambda_i g_i^2 < \frac{1}{2} \right) \leq \exp\left(-\frac{1}{16\norm{\bm{\lambda}}_2^2}\right) \leq \exp\left(-\frac{1}{16 \norm{\bm{\lambda}}_1^2}\right) = e^{-1/16},
    \]
    where the last step follows from $\norm{\bm{\lambda}}_1 = \sum_i \lambda_i = 1$.
\end{proof}

Therefore, we can repeatedly sample $\Bg$ at most $O(\log n)$ times until 
\begin{align}\label{eq:sum-tilde-v}
\sum_{i \in V} b(i) \cdot \abs{\tilde v_i}^2 \geq \frac{1}{2}
\end{align}
holds.
The success probability is at least $1 - 1/\poly(n)$ by the above lemma.
Assume that \eqref{eq:tilde-v-ineq} and \eqref{eq:sum-tilde-v} hold in what follows.
Let $L' = \{i \in V: \tilde v_i > 0\}$ and $R' = \{i \in V: \tilde v_i < 0\}$.
Again, without loss of generality, we can assume that $\sum_{i \in L'} b(i) \cdot \abs{\tilde v_i}^2 \geq 1/4$.
Setting $(L, R) = (L', \emptyset)$ gives us 
\begin{align*}
    \sum_{\{i, j\} \in M} \norm{\Bv_i + \Bv_j}^2
    \geq \frac{1}{O(\log n)} \sum_{\{i, j\} \in M} \abs{\wt{v_i} + \wt{v_j}}^2
    \geq \frac{1}{O(\log n)} \sum_{i \in L} b(i) \cdot \abs{\wt{v_i}}^2
    \geq \frac{1}{O(\log n)}.
\end{align*}
Therefore, we have $\gamma = \Omega(1/\log n)$.

\subsubsection{Putting things together}

Assume that the game goes for $T$ rounds. We have matrices $\BF_t$ for each $t \in [T]$ which correspond to a multiset $\caP_t$ of paths with congestion at most $r^{-1}$ in $G'$.
Substituting $\rho = O(1)$, $\delta = O(1)$, and $\gamma = \Omega(1/\log n)$ to the MMWU bound (\cref{lem:mmwu}), we have
\begin{align*}
\lambda_{\min}\left(\sum_{t=1}^T \BF_t\right) 
    &\gtrsim \frac{T}{\log n } - \log n.
\end{align*}
Therefore, setting $T = O(\log^2 n)$, we have $\lambda_{\min}(\sum_{t=1}^T \BF_t) = \Omega(T/\log n) = \Omega(\log n)$.
Recall that each $\BF_t$ corresponds to the demand graph $M_t$ of $\caP_t$.
Let $H := M_1 \oplus \dots \oplus M_T$ be the multigraph on $V$.
From the eigenvalue lower bound, we have the following lemma.

\begin{lemma}
    $\beta_b(H) = \Omega(\log n)$. 
\end{lemma}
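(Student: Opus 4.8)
The goal is to convert the eigenvalue lower bound $\lambda_{\min}(\sum_{t=1}^T \BF_t) = \Omega(\log n)$ into the bound $\beta_b(H) = \Omega(1/\log n)$. The plan is to first express $\beta_b(H)$ variationally in terms of the matrices $\BF_t$ and then apply the eigenvalue bound. Recall that $H = M_1 \oplus \cdots \oplus M_T$, so $\BD_H = \sum_t \BD_{M_t}$ and $\BA_H = \sum_t \BA_{M_t}$, and hence the (unnormalized) signless-Laplacian-type quantity satisfies $\BD_H + \BA_H = \sum_t (\BD_{M_t} + \BA_{M_t})$. For any vector $\bx \in \{0,\pm 1\}^V \setminus \{\zeros\}$, the numerator of $\beta_b(\bx)$ evaluated on $H$ is $\sum_{(i,j) \in H} |x_i + x_j| \ge \sum_{(i,j)\in H} (x_i+x_j)^2 / 2$ — wait, more precisely, since $x_i, x_j \in \{0,\pm 1\}$ we have $|x_i+x_j| \in \{0,1,2\}$ and $(x_i+x_j)^2 \in \{0,1,4\}$, so $|x_i+x_j| \ge (x_i+x_j)^2/2$. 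Thus the numerator of $\beta_b^H(\bx)$ is at least $\tfrac12 \bx^\top(\BD_H + \BA_H)\bx = \tfrac12 \sum_t \bx^\top(\BD_{M_t}+\BA_{M_t})\bx$.

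The key step is to lower-bound $\sum_t \bx^\top(\BD_{M_t}+\BA_{M_t})\bx$ using the eigenvalue bound. Substituting $\by = \BD_b^{1/2}\bx$ (so $\bx = \BD_b^{-1/2}\by$), we get $\bx^\top(\BD_{M_t}+\BA_{M_t})\bx = \by^\top \BD_b^{-1/2}(\BD_{M_t}+\BA_{M_t})\BD_b^{-1/2}\by = \by^\top \BF_t \by$. Therefore $\sum_t \bx^\top(\BD_{M_t}+\BA_{M_t})\bx = \by^\top \big(\sum_t \BF_t\big)\by \ge \lambda_{\min}\big(\sum_t \BF_t\big) \cdot \|\by\|^2 = \Omega(\log n) \cdot \by^\top\by = \Omega(\log n) \cdot \sum_{i\in V} b(i) x_i^2 = \Omega(\log n) \cdot \sum_{i\in V} b(i)|x_i|$, where the last equality uses $x_i \in \{0,\pm1\}$ so $x_i^2 = |x_i|$. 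Combining, the numerator of $\beta_b^H(\bx)$ is at least $\Omega(\log n) \cdot \sum_i b(i)|x_i|$, which is exactly $\Omega(\log n)$ times the denominator; hence $\beta_b^H(\bx) = \Omega(\log n)$ for every nonzero $\bx$, so... this gives $\beta_b(H) = \Omega(\log n)$, which is too strong and wrong in general. The resolution is that the congestion/scaling of $T = O(\log^2 n)$ rounds must be accounted for: either the statement normalizes $H$ by $T$, or more likely the demand matrices were already scaled so that $\sum_t \BF_t \preceq O(\log^2 n) \cdot \BI$ and the relevant quantity is a ratio. I would reconcile this by carefully tracking that $\beta_b(H)$'s denominator for the natural "per-round-normalized" $H$ picks up a factor $T = O(\log^2 n)$ — i.e., $\vol_H(L\cup R)$ or $\sum b_H(i)|x_i|$ scales like $T$ while the numerator scales like $\lambda_{\min} = \Omega(T/\log n)$, giving the ratio $\Omega(1/\log n)$.

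So the correct accounting is: the numerator of $\beta_b^H(\bx)$ is $\Theta\big(\lambda_{\min}(\sum_t \BF_t)\big)\cdot\sum_i b(i)|x_i|$-ish only if we use the vertex weight $b$; but $H$ naturally comes with its own degree-based volume. The cleanest route is to keep the \emph{same} vertex weight $b$ throughout (as the algorithm does — $M_t$ has $\deg_{M_t}(i) = 2b(i)$, summed over $T$ rounds gives $\deg_H(i) = 2Tb(i)$), so $\sum_{i} b_H(i)|x_i|$ if we insist $b_H = b$ stays $\sum_i b(i)|x_i|$, and then $\beta_b(H) \ge \tfrac12\lambda_{\min}(\sum_t\BF_t)/1 = \Omega(\log n)$... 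The genuine subtlety — and the main obstacle — is pinning down exactly which normalization convention makes $\beta_b(H) = \Omega(1/\log n)$ rather than $\Omega(\log n)$: presumably $\beta_b(H)$ is defined with $b$ replaced by the $H$-degree $\deg_H$ (the natural "conductance-like" normalization), so the denominator becomes $\sum_i \deg_H(i)|x_i| = 2T\sum_i b(i)|x_i|$, and then $\beta_b(H) \ge \tfrac12 \Omega(T/\log n)/(2T) = \Omega(1/\log n)$. I would therefore (i) state the convention for $\beta_b(H)$ explicitly, (ii) bound the numerator below by $\tfrac12 \by^\top(\sum_t\BF_t)\by$ via the pointwise inequality $|x_i+x_j| \ge (x_i+x_j)^2/2$, (iii) invoke $\lambda_{\min}(\sum_t \BF_t) = \Omega(\log n)$ — wait, we need $\Omega(T/\log n) = \Omega(\log n)$ since $T = O(\log^2 n)$ — hmm, then the two conventions disagree by the factor $T$. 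The safest write-up simply carries $T$ symbolically: numerator $\gtrsim \lambda_{\min}(\sum\BF_t)\sum_i b(i)|x_i| \gtrsim (T/\log n)\sum_i b(i)|x_i|$ and denominator (with $\deg_H$-normalization) $= 2T\sum_i b(i)|x_i|$, ratio $\gtrsim 1/\log n$, done. I expect the only real work is this bookkeeping of normalizations; everything else is the quadratic-form manipulation above plus the elementary inequality $|x_i+x_j|\ge(x_i+x_j)^2/2$ for $x_i,x_j\in\{0,\pm1\}$.
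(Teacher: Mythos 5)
Your initial calculation is exactly the paper's proof, and it is correct. The paper writes $\lambda_{\min}(\sum_t \BF_t)$ as the minimum of the Rayleigh quotient $\frac{\bx^\top \sum_t (\BD_{M_t}+\BA_{M_t})\bx}{\bx^\top \BD_b \bx}$ (which is your substitution $\by=\BD_b^{1/2}\bx$), and for $\bx\in\{0,\pm1\}^V$ it uses $(x_i+x_j)^2 \le 2|x_i+x_j|$ (your inequality $|x_i+x_j|\ge (x_i+x_j)^2/2$) to conclude $\beta_b(H)\ge \tfrac12\lambda_{\min}(\sum_t\BF_t)=\Omega(\log n)$. The vertex weight in $\beta_b(H)$ is the same fixed $b$ used for $G$; there is no re-normalization by $\deg_H$, so the detour you took in the second half of your write-up is not part of the paper's argument and you should abandon it.

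The thing that threw you off is that the lemma's stated bound $\Omega(1/\log n)$ is weaker than (indeed orders of magnitude below) what the proof actually shows, and this is apparently a typo: the downstream correctness argument invokes this lemma together with \cref{cor:congestion} to extract a saturating flow in $H$ with \emph{congestion} $O(\log^{-1}n)$, which requires $\beta_b(H)=\Omega(\log n)$, and then reroutes through $G$ picking up the factor $r^{-1}T=O(r^{-1}\log^2 n)$ to get total congestion $O(r^{-1}\log n)$ and hence the $O(\log n)$-approximation. So you should trust the derivation you already had: $\beta_b(H)\ge\tfrac12\lambda_{\min}(\sum_t\BF_t)=\Omega(\log n)$, which in particular implies the stated $\Omega(1/\log n)$. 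Your proposal, cleaned of the unnecessary normalization discussion, matches the paper's proof.
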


\begin{proof}
By the above argument, we have $\lambda_{\min}(\sum_{t=1}^T \BF_t) = \Omega(\log n)$.
Therefore, it suffices to show that $\beta_b(H) \gtrsim \lambda_{\min}(\sum_{t=1}^T \BF_t)$.
Since the minimum eigenvalue of a symmetric matrix equals the minimum of its Rayleigh quotient, we have
\begin{align*}
    \lambda_{\min}\left(\sum_{t=1}^T \BF_t \right) 
    &= \min_{\Bx \in \R^n \setminus \{\zeros\}} \frac{\Bx^\top \sum_{t=1}^T \BF_t \Bx}{\Bx^\top \Bx} \\
    &= \min_{\Bx \in \R^n \setminus \{\zeros\}} \frac{\Bx^\top \BD_b^{-1/2} \sum_{t=1}^T (\BD_{M_t} + \BA_{M_t}) \BD_b^{-1/2} \Bx}{\Bx^\top \Bx} \\
    &= \min_{\Bx \in \R^n \setminus \{\zeros\}} \frac{\Bx^\top \sum_{t=1}^T (\BD_{M_t} + \BA_{M_t}) \Bx}{\Bx^\top \BD_b \Bx}.
\end{align*}
For any $\{0, \pm 1\}^n$-vector $\Bx$, we have
\begin{align*}
    \Bx^\top \sum_{t=1}^T (\BD_{M_t} + \BA_{M_t}) \Bx &= \sum_{(i,j) \in E(H)} w(i, j) \cdot (x_i + x_j)^2 
    \leq 2 \sum_{(i,j) \in E(H)} w(i, j) \cdot \abs{x_i + x_j}, \\
    \Bx^\top \BD_b \Bx &= \sum_{i \in V} b(i) \cdot x_i^2 = \sum_{i \in V} b(i) \cdot  \abs{x_i}. 
\end{align*}
Therefore,
\begin{align*}
    \beta_b(H) &= \min_{\Bx \in \{0, \pm 1\}^n \setminus \{\zeros\}} \frac{\sum_{(i,j) \in E(H)} w(i, j) \cdot \abs{x_i + x_j}}{\sum_{i \in V} b(i) \cdot \abs{x_i}} 
    \geq \frac{1}{2} \cdot \min_{\Bx \in \{0, \pm 1\}^n \setminus \{\zeros\}} \frac{\Bx^\top \sum_{t=1}^T (\BD_{M_t} + \BA_{M_t}) \Bx}{\Bx^\top \BD_b \Bx} \\
    &\geq \frac{1}{2} \lambda_{\min}\left(\sum_{t=1}^T \BF_t\right). 
\end{align*}
Thus, $\beta_b(H) \gtrsim \lambda_{\min}(\sum_{t=1}^T \BF_t)$, which completes the proof. 
\end{proof}

We are now ready to show the correctness of our algorithm.

\begin{lemma}
If \Cref{alg:cut-matching} returns a vector $\Bx \in \{0, \pm 1\}^V$ in \Cref{line:return-x}, then $\beta_b(G) \leq \beta_b (\Bx) < r$ with probability $1$. 
If \cref{alg:cut-matching} returns a multigraph $H$ in \Cref{line:return-H}, then $\beta_b(G) = \Omega(r / \log n)$ with probability at least $1 - 1/\poly(n)$.
\end{lemma}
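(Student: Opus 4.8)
The plan is to treat the two return branches of \Cref{alg:cut-matching} separately: the $\bx$-branch is essentially the ``only if'' direction of \Cref{thm:linked}, while the $H$-branch combines the two preceding lemmas with a congestion (``$H$ embeds into $G$'') argument. Suppose first that the algorithm returns $\bx$ in \Cref{line:return-x}; then the symmetric pair $(A,B)=(L'^+,L'^-)$ was found not to be $r$-well-linked. Since a saturating flow in $N_{A,B,r}$ would have value $b(A)$, its absence means (by max-flow/min-cut) that the minimum $s^+$--$s^-$ cut of $N_{A,B,r}$ has value $<b(A)$, and by \Cref{lem:cut} we may take this cut to be $X=S\cup\{s^+\}$ with $S$ consistent. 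The cut-value computation in \Cref{thm:linked} then gives $r^{-1}w(E'(S,\ol S))<b(A\cap S)-b(B\cap S)\le b(S)$, so $S\ne\emptyset$ and $w(E'(S,\ol S))/b(S)<r$. Writing $S=L_0^+\cup R_0^-$ for the disjoint $L_0,R_0\subseteq V$ it determines and taking $\bx$ to be the corresponding nonzero vector in $\{0,\pm1\}^V$, \Cref{cla:betax_conds} yields $\beta_b(\bx)=w(E'(S,\ol S))/b(S)<r$, and $\beta_b(G)\le\beta_b(\bx)$ by definition of $\beta_b(G)$. This branch is deterministic, hence the claim with probability $1$.

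Now suppose the algorithm returns $H$ in \Cref{line:return-H}. I would first fix the good event $\mc{E}$ that in each of the $T=O(\log^2 n)$ rounds the Gaussian bound \eqref{eq:tilde-v-ineq} holds and some resampled $\Bg$ among the $\Tproj=O(\log n)$ attempts satisfies $\sum_i b(i)\abs{\wt{v}_i}^2\ge\tfrac14$; each of these ingredients fails with probability $1/\poly(n)$, so a union bound over the $O(\log^2 n)$ rounds gives $\Pr[\mc{E}]\ge 1-1/\poly(n)$, and on $\mc{E}$ the projection analysis yields $\gamma=\Omega(1/\log n)$ in every round. (Returning $H$ already excludes an early termination or a \textbf{fail}, so all $\caP_t$, $M_t$, and $H=M_1\oplus\cdots\oplus M_T$ are well defined.) Feeding $\rho=O(1)$, $\delta=O(1)$, $\gamma=\Omega(1/\log n)$ into \Cref{lem:mmwu} gives $\lambda_{\min}(\sum_{t=1}^T\BF_t)=\Omega(\log n)$, and, as the preceding lemma's proof shows, $\beta_b(H)\ge\tfrac12\lambda_{\min}(\sum_{t=1}^T\BF_t)$, hence $\beta_b(H)=\Omega(\log n)$ on $\mc{E}$.

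The crux is transferring this bound to $G$ via the fact that each $M_t$ is realized by a low-congestion flow in $G'$. The saturating flow in $N_{A_t,B_t,r}$ puts at most $w(e)/r$ flow on every $e\in E'$, so its unit-path decomposition $\caP_t$ sends at most $w(e)/r$ paths through $e$; using the skew-symmetry $(i^+,j^-)\in E'\Leftrightarrow(j^+,i^-)\in E'$, the image $\sigma(P)$ of each $P\in\caP_t$ under the involution $x^+\leftrightarrow x^-$ is a path of the opposite orientation, and the multiset $\{P,\sigma(P):P\in\bigcup_t\caP_t\}$ routes every edge of the auxiliary graph $H'$ of $H$ while loading each $e\in E'$ by at most $2Tw(e)/r$ paths. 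Now, given any symmetric $(A',B')$, I would start from a saturating $s^+$--$s^-$ flow for $(A',B')$ in the network over $H'$ with per-edge load at most $w_H(e)/\Omega(\log n)$ — which exists because $\beta_b(H)=\Omega(\log n)$ and \Cref{thm:linked} applied to $H$ — and push it along the above routing, spreading each $H'$-edge's flow evenly over the equally many $G'$-paths that represent its copies so that the two load bounds multiply; this produces a saturating $s^+$--$s^-$ flow in $N_{A',B'}$ of congestion at most $\frac{2T}{r\cdot\Omega(\log n)}=O(\log n)/r$. Since $(A',B')$ is arbitrary, \Cref{cor:congestion} gives $\beta_b(G)\ge\Omega(r/\log n)$ on $\mc{E}$, i.e.\ with probability $1-1/\poly(n)$, completing the proof. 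I expect this last composition — arranging that the per-edge loads of the $H'$-flow and of the $H'\!\to\!G'$ routing multiply rather than merely add — to be the main technical obstacle; everything else is deterministic bookkeeping or a routine union bound.
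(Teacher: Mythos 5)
Your proof is correct and follows essentially the same route as the paper's: the $\bx$-branch is handled via the ``only if'' direction of \Cref{thm:linked} together with \Cref{lem:cut} and \Cref{cla:betax_conds}, and the $H$-branch applies \Cref{cor:congestion} to $H$ and then composes the resulting $H'$-flow with the $\caP_t$-paths realizing $H$ in $G'$. Two small remarks. First, the paper's terse claim that ``each edge of $H'$ corresponds to a path in $\bigoplus_t \caP_t$'' actually needs the involution argument you spell out: each path $P\in\caP_t$ from $A_t$ to $B_t$ covers only one of the two $H'$-edges $(i^+,j^-)$, $(i^-,j^+)$ contributed by the $M_t$-edge $(i,j)$, so you must also throw in $\sigma(P)$, which doubles the load to $2Tw(e)/r$ (a harmless constant). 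Your version is more careful here and is a genuine improvement over the paper's exposition. Second, the ``composition'' you flag as the main technical obstacle is in fact routine: decompose the saturating $H'$-flow into paths, replace each $H'$-edge by its chosen $G'$-path, and observe that the load on $e\in E(G')$ is at most (maximum per-unit congestion of the $H'$-flow, namely $O(1/\log n)$) $\times$ (number of $H'$-edges routed through $e$, namely $\le 2Tw(e)/r$), so the per-edge loads multiply rather than add because the flow amount through each $H'$-edge is already bounded by its capacity times the congestion. Finally, note that your statement $\beta_b(H)=\Omega(\log n)$ is the one actually proved by the preceding lemma's argument; the lemma as stated in the paper reads $\Omega(1/\log n)$, which appears to be a typo (it is also what feeds, via \Cref{cor:congestion}, into the claimed $O(\log^{-1}n)$ congestion on $H'$). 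Your reading is the correct one.
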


\begin{proof}
Suppose that the algorithm returns $\Bx$ in \Cref{line:return-x}.
This case happens only if $(A, B)$ is not $r$-well-linked.
By the proof of \Cref{thm:linked}, a minimum consistent cut $X \cup s^+$ ($X \subseteq V'$) satisfies $r^{-1} w(E(X, \ol{X})) < b(X)$ and therefore the corresponding $\Bx$ satisfies $\beta_b(\Bx) < r$.
Consequently, we have $\beta_b(G) \le \beta(\Bx) < r$. 

Assume that the algorithm returns $H$ in \Cref{line:return-H}.
We will show that $\beta_b(G) = \Omega(r / \log n)$ using \Cref{cor:congestion}.
Let us fix symmetric $(A, B)$ arbitrarily.
By $\beta_b(H) = \Omega(\log n)$ and \Cref{cor:congestion}, there exists a saturating $s^+$--$s^-$ flow in the auxiliary network of $(H', A, B)$ (with $r = 1$) with congestion $O(\log^{-1} n)$. 
Since each edge of $H'$ corresponds to a path in $\bigoplus_{t=1}^T \caP_t$, we obtain a saturating $s^+$--$s^-$ flow in the auxiliary network of $(G, A, B)$ (with $r = 1$) by rerouting.
Since the congestion of each $\caP_t$ is at most $r^{-1}$ and there are $T$ multisets, the congestion of this flow is at most $r^{-1}T \cdot O(\log^{-1} n) = O(r^{-1} \log n)$.
Since $(A,B)$ was arbitrary, $\beta_b(G) = \Omega(r\log^{-1} n)$ by \Cref{cor:congestion}.
\end{proof}

\subsection{Fast Implementation with Approximate Gram Decomposition}
Now we discuss fast implementation using approximate Gram decompositions and the time complexity of \cref{alg:cut-matching}.
We need the following lemma to compute a Gram decomposition efficiently.

\begin{lemma} [cf.~Lemma 4.18 in \cite{Lau2024}] \label{lem:app_gram}
    Let $\Bv_1, \cdots, \Bv_n$ be a Gram decomposition of the matrix $\BD_b^{-1/2} \BX_t \BD_b^{-1/2}$.
    There exists a randomized algorithm that computes vectors $\wh{\Bv}_1, \cdots, \wh{\Bv}_n \in \R^d$ for $d = O(\eps^2 \log n)$ in $O(\eps^{-2} \log n \cdot \max\{\log^2 n, \log b(V)\} \cdot \min\{b(V), n^2\})$ time such that 
    \begin{align*}
        \norm{\wh{\Bv}_i}^2 &\in (1 \pm \eps) \norm{\Bv_i}^2 \pm \frac{1}{\poly(n, b(V))} \quad (i \in V)  \\
        \norm{\wh{\Bv}_i + \wh{\Bv}_j}^2 &\in (1 \pm \eps) \norm{\Bv_i + \Bv_j}^2 \pm \frac{1}{\poly(n, b(V))} \quad (i, j \in V)
    \end{align*}
    with probability at least $1 - 1/\poly(n)$. 
\end{lemma}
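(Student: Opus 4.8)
The plan is to reduce the computation of an approximate Gram decomposition to two tasks: (i) approximating the matrix exponential $\exp(-\delta \sum_{\tau<t}\BF_\tau)$ by a low-degree polynomial of the matrix $\BM_t := \delta\sum_{\tau<t}\BF_\tau$, and (ii) sketching the resulting Gram vectors down to dimension $d = O(\eps^{-2}\log n)$ by a random Gaussian (or Johnson–Lindenstrauss) projection, so that the pairwise quantities $\norm{\Bv_i+\Bv_j}^2$ and $\norm{\Bv_i}^2$ are preserved up to a $(1\pm\eps)$ factor. This is exactly the strategy of Lemma 4.18 in \cite{Lau2024} and Section 5 of \cite{Arora2016}, adapted to our density matrix $\BX_t = \exp(-\BM_t)/\tr(\exp(-\BM_t))$ and to the rescaling by $\BD_b^{-1/2}$.

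First I would bound the spectral norm $\norm{\BM_t}$. Since each $\BF_\tau \preceq 4\BI_n$ (shown earlier in the excerpt) and $\delta\in(0,1)$, we have $\zeros \preceq \BM_t \preceq 4\delta T\,\BI_n = O(\log^2 n)\,\BI_n$, so all eigenvalues of $\BM_t$ lie in an interval of length $O(\log^2 n)$. Standard results on polynomial approximation of $e^{-x}$ on $[0,\Lambda]$ (e.g.\ via a truncated Taylor series or a Chebyshev expansion, as in \cite{Arora2016,Lau2024}) give a polynomial $p$ of degree $k = O(\sqrt{\Lambda}\log(1/\nu)) = O(\log n \cdot \mathrm{polylog})$ with $\abs{e^{-x}-p(x)}\le \nu$ uniformly; taking $\nu = 1/\poly(n, b(V))$ suffices to produce the additive $1/\poly(n,b(V))$ slack in the statement, after accounting for the normalization $\tr(\exp(-\BM_t))$, which lies in $[e^{-\Lambda}, n]$ and hence costs only another $\poly(n)$ factor in the target accuracy. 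Writing $\wt\BX_t := \BD_b^{-1/2}\, p(\BM_t)\,\BD_b^{-1/2} / \tr(p(\BM_t))$, the columns of its Cholesky/square-root factor are the exact Gram vectors $\Bv_i'$ approximating $\Bv_i$ up to the stated additive error. The point of using a polynomial is that $p(\BM_t)\Bu$ can be evaluated for any vector $\Bu$ using $k$ matrix–vector products with $\BM_t$, and each such product costs $O(\sum_\tau \mathrm{nnz}(\BF_\tau)) = O(\min\{b(V), n^2\})$ time since $\BF_\tau = \BD_b^{-1/2}(\BD_{M_\tau}+\BA_{M_\tau})\BD_b^{-1/2}$ and $M_\tau$ has total degree $\sum_i \deg_{M_\tau}(i) = 2b(V)$ (but also at most $O(n^2)$ distinct entries). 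Rather than forming $p(\BM_t)$ explicitly, I would apply the projection first: sample a Gaussian matrix $\BG\in\R^{d\times n}$, compute $\BG\, p(\BM_t)^{1/2}$ implicitly — more precisely, since $\norm{\Bv_i+\Bv_j}^2 = (\be_i+\be_j)^\top \wt\BX_t (\be_i+\be_j)$, it suffices to compute $\BY := \BG \,\BD_b^{-1/2} q(\BM_t)\,\BD_b^{-1/2}$ where $q = p$ and interpret rows appropriately, or equivalently run the power-series recursion on the $d$ seed vectors simultaneously. Each of the $d$ seed vectors undergoes $k$ matrix–vector products, giving total time $O(d\cdot k\cdot \min\{b(V),n^2\}) = O(\eps^{-2}\log n \cdot \max\{\log^2 n,\log b(V)\}\cdot \min\{b(V), n^2\})$, matching the claimed bound (the $\max\{\log^2 n, \log b(V)\}$ factor absorbs both the degree $k=O(\log^2 n)$ coming from $\Lambda = O(\log^2 n)$ and the $\log b(V)$ needed when the required accuracy $1/\poly(n,b(V))$ is dominated by $b(V)$).

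The correctness of the dimension reduction step is the Johnson–Lindenstrauss lemma applied to the $O(n^2)$ vectors $\{\Bv_i'\}_i \cup \{\Bv_i'+\Bv_j'\}_{i,j}$ (equivalently, $\{\be_i\}\cup\{\be_i+\be_j\}$ under the PSD bilinear form $\wt\BX_t$): a random projection to $d = O(\eps^{-2}\log n)$ dimensions preserves all these squared norms within a $(1\pm\eps)$ multiplicative factor with probability $1 - 1/\poly(n)$ by a union bound over the $O(n^2)$ pairs. Composing the two error sources — the multiplicative $(1\pm\eps)$ from JL and the additive $1/\poly(n,b(V))$ from the polynomial truncation and normalization — yields the two displayed guarantees. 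The main obstacle is the careful bookkeeping in the polynomial-approximation step: one must verify that truncating $e^{-x}$ at degree $k = \wt O(\sqrt{\Lambda})$ with $\Lambda = O(\log^2 n)$ simultaneously controls (a) the entrywise error in $p(\BM_t)$ after conjugation by $\BD_b^{-1/2}$ (whose entries can be as small as $1/b(V)$, hence the $\poly(n,b(V))$ accuracy target rather than $\poly(n)$), and (b) the error introduced by approximating the normalizing trace $\tr(p(\BM_t))$, all while keeping the degree — and therefore the running time — within the stated budget. This is precisely the computation carried out in \cite[Lemma 4.18]{Lau2024}, and I would invoke that analysis with $\Lambda$ and the accuracy parameter set as above.
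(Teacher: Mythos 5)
Your overall strategy matches the paper's: both proofs approximate the matrix exponential by a truncated power series of degree $k = O(\max\{\log^2 n, \log b(V)\})$ (so that $k \gtrsim \norm{\BM_t}$ and $k \gtrsim \log(1/\tau)$), both sketch with a Johnson--Lindenstrauss projection to $d = O(\eps^{-2}\log n)$ dimensions, and both use the same accounting ($d$ seed vectors, $k$ matrix--vector products with $\BM_t$ each at cost $O(\min\{b(V),n^2\})$) to arrive at the claimed time bound. The paper then carries out the error bookkeeping you correctly identify as the crux (its Lemma \ref{lem:err_Yp_Ypp} bounds $\norm{\BY'-\BY''}$ in terms of $n^{3/2}\norm{\BD}^{-1}\tau$), which is what you defer to \cite[Lemma~4.18]{Lau2024}.

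One technical wrinkle in your write-up: the sketch you propose, $\BY := \BG\,\BD_b^{-1/2}q(\BM_t)\BD_b^{-1/2}$ with $q = p \approx e^{-x}$, does not produce vectors whose Gram matrix approximates $\BD_b^{-1/2}\exp(-\BM_t)\BD_b^{-1/2}$: that would require $q(\BM_t)\BD_b^{-1}\,\BG^\top\BG\,\BD_b^{-1}q(\BM_t) \approx \exp(-\BM_t)\BD_b^{-1}$, which fails both because the polynomial approximates $e^{-x}$ rather than $e^{-x/2}$ and because the diagonal scaling is doubled. The fix is exactly what the paper does: work with $\exp(\BA/2)$ as the Gram factor (i.e., approximate $e^{-x/2}$, whose square is $e^{-x}$), so that $\BD_b^{-1/2}\exp(\BA/2)$ factors $\BD_b^{-1/2}\exp(\BA)\BD_b^{-1/2}$, and apply the projection on the right as $\BZ = \sum_{i=0}^k \frac{(\BA/2)^i}{i!}\BU^\top$ followed by a single left multiplication by $\BD_b^{-1/2}$. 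Your parenthetical "compute $\BG\,p(\BM_t)^{1/2}$ implicitly $\dots$ or equivalently run the power-series recursion on the $d$ seed vectors" gestures at this but does not pin it down; stating it as "truncate the Taylor series of $e^{-x/2}$" closes the gap cleanly, and then the rest of your argument (JL over $O(n^2)$ vectors, absorbing the additive error from truncation and from the approximated trace normalization, setting $\tau = 1/\poly(n,b(V))$) goes through as you describe.
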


The lemma follows from Johnson--Lindenstrauss dimension reduction and truncated Taylor expansion similar to \cite{Arora2016,Lau2024}.
We place a formal proof in \Cref{sec:gram} for completeness.

Now we prove the main theorem of this paper.

\begin{theorem}\label{thm:cut-matching}
    Given $r \in (0,1]$ with $1/r$ being an integer, \Cref{alg:cut-matching} finds either $\Bx \in \{0, \pm 1\}^V \setminus \{\mb{0}\}$ with $\beta_b(\Bx) < r$ or a certificate proving that $\beta_b(G) \geq \Omega(r/\log n)$ with probability at least $1 - 1/\poly(n)$.
    The time complexity is $O(\log^3 n \cdot \max\{\log^2 n, \log b(V) \} \cdot \min\{b(V), n^2\})$ arithmetic operations and $O(\log^2 n)$ single-commodity max-flow computations.
    
    By binary search on $r$, we can obtain an $O(\log n)$-approximation randomized algorithm for the $b$-bipartiteness ratio of undirected graphs.
    The time complexity is $O(\log (w(E) \cdot b(V)) \cdot \log^3 n \cdot \max\{\log^2 n, \log b(V)\} \cdot \min\{b(V), n^2\})$ arithmetic operations and $O(\log (w(E)\cdot b(V)) \cdot \log^2 n)$ single-commodity max-flow computations.
\end{theorem}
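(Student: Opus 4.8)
The plan is to assemble the pieces already developed in this section. The correctness of a single invocation of \Cref{alg:cut-matching} for a fixed guess $r$ has essentially been established by the preceding lemmas: if the algorithm halts in \Cref{line:return-x}, then $(A,B)$ failed to be $r$-well-linked, a consistent minimum cut $X\cup s^+$ in $N_{A,B,r}$ satisfies $r^{-1}w(E'(X,\ol X))<b(X)$ by the proof of \Cref{thm:linked}, so the corresponding $\bx$ has $\beta_b(\bx)<r$; and if it halts in \Cref{line:return-H}, the width bound $\BF_t\preceq 4\BI$, the oracle gain $\gamma=\Omega(1/\log n)$, and the MMWU regret bound (\Cref{lem:mmwu}) with $T=O(\log^2 n)$ give $\lambda_{\min}(\sum_t\BF_t)=\Omega(\log n)$, hence $\beta_b(H)=\Omega(1/\log n)$, and then routing each edge of $H'$ through its defining path in $\bigoplus_t\caP_t$ together with \Cref{cor:congestion} yields $\beta_b(G)=\Omega(r/\log n)$. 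First I would collect these statements into the claimed high-probability guarantee, being careful to union-bound the failure events: the $O(\log^2 n)$ Gaussian-projection resampling loops (each failing with probability $e^{-\Omega(\log n)}$ after $\Tproj=O(\log n)$ tries), the $O(\log^2 n)$ invocations of the concentration bound \eqref{eq:tilde-v-ineq}, and the $O(\log^2 n)$ calls to \Cref{lem:app_gram}, are all $1/\poly(n)$ by choosing the constants large enough, so their union is still $1/\poly(n)$.

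Next I would bound the time complexity of one invocation. Each of the $T=O(\log^2 n)$ rounds does: one approximate Gram decomposition via \Cref{lem:app_gram} (with $\eps$ a constant), costing $O(\log n\cdot\max\{\log^2 n,\log b(V)\}\cdot\min\{b(V),n^2\})$ arithmetic operations; up to $\Tproj=O(\log n)$ Gaussian projections, each costing $O(nd)=O(n\log n)$ for the matrix–vector products, which is dominated; a single single-commodity max-flow on $N_{A,B,r}$, an undirected graph of size $O(m+n)$ (note $1/r$ integral keeps capacities rational with bounded denominators); flow decomposition and formation of $M_t$ and $\BF_t$, all nearly linear. Multiplying by $T$ gives $O(\log^3 n\cdot\max\{\log^2 n,\log b(V)\}\cdot\min\{b(V),n^2\})$ arithmetic operations and $O(\log^2 n)$ max-flow computations, as claimed.

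Finally I would handle the binary search. Since all edge and vertex weights are positive integers, for any $\bx\in\{0,\pm1\}^V\setminus\{\zeros\}$ the value $\beta_b(\bx)$ is a ratio of integers with numerator in $\{0,1,\dots,2w(E)\}$ and denominator in $\{1,\dots,b(V)\}$; hence the nonzero values of $\beta_b(\cdot)$ are separated by at least $1/(w(E)\cdot b(V))^2$ and lie in $(0,2]$, so $O(\log(w(E)\cdot b(V)))$ steps of binary search over $r$ of the form $r=1/k$ with $k$ an integer in $[1,\poly(w(E),b(V))]$ suffice to locate, up to a constant factor, the smallest $r$ for which \Cref{alg:cut-matching} returns a certificate; the $\bx$ returned at the previous (smaller-gap) step then satisfies $\beta_b(\bx)=O(\log n)\cdot\beta_b(G)$. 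Multiplying the single-invocation bounds by this $O(\log(w(E)\cdot b(V)))$ factor, and union-bounding the failure probabilities over the $O(\log(w(E)\cdot b(V)))$ invocations, gives the stated complexity and the $1-1/\poly(n)$ success probability. The main obstacle I anticipate is not any single step but the bookkeeping: making the binary-search rounding consistent with the constraint that $1/r$ be an integer while still certifying an $O(\log n)$ multiplicative gap, and ensuring the additive $1/\poly(n,b(V))$ errors from \Cref{lem:app_gram} are absorbed into the constant-factor slack in $\gamma$ and in \eqref{eq:sum-tilde-v} rather than degrading the approximation ratio.
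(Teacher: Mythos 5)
Your proposal is correct and follows essentially the same route as the paper: collect the preceding lemmas (MMWU regret with $\rho=O(1)$, $\gamma=\Omega(1/\log n)$, the well-linkedness characterization, and \Cref{lem:app_gram} absorbing the $(1\pm\eps)$ and $1/\poly$ errors into $\gamma$), then multiply per-round costs (one approximate Gram decomposition, $O(\log n)$ Gaussian projections, one max-flow) by $T=O(\log^2 n)$. The only place you go beyond the paper's written proof is in spelling out the binary-search granularity argument---the paper merely asserts the $O(\log(w(E)\cdot b(V)))$ factor---and your rationality/separation argument for $\beta_b$ is a correct and welcome way to justify it.
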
 

\begin{proof}
We first prove the correctness.
We have already seen the correctness if the algorithm is given an exact Gram decomposition.
To avoid repetition, we sketch the proof for approximate Gram decompositions. 
The only part of the analysis to be changed is the bound of $\gamma$.
Let $\Bv_1, \dots, \Bv_n$ be a Gram decomposition of $\BD_b^{-1/2} \BX_t \BD_b^{-1/2}$ and $\wh{\Bv}_1, \dots, \wh{\Bv}_n$ be the approximate Gram decomposition given by \cref{lem:app_gram}.
Note that $\wt{v_i}$ is now computed based on $\wh{\Bv}_i$ instead of $\Bv_i$.
Then, by \cref{lem:app_gram}, we have
\begin{align*}
    \sum_{ij \in M} \norm{\Bv_i + \Bv_j}^2 &\in (1 \pm \eps) \sum_{ij \in M} \norm{\wh\Bv_i + \wh\Bv_j}^2 \pm \frac{1}{\poly(n, b(V))}, \\
    \sum_{i \in V} b(i) \norm{\Bv_i}^2 &\in (1 \pm \eps) \sum_{i \in V} b(i)\norm{\wh\Bv_i}^2 \pm \frac{1}{\poly(n, b(V))}.
\end{align*}
Therefore, using approximate Gram decomposition incurs at most $(1 \pm \eps)$ multiplicative error and $1/\poly(n, b(V))$ additive error, where the latter is negligible.
Therefore, if we take $\eps$ to be a small enough constant, then only $O(1)$ multiplicative error is incurred.
Thus, the desired bound $\gamma \gtrsim 1/\log n$ still holds.

Now we analyze the time complexity. 
The algorithm makes $T = O(\log^2 n)$ iterations, and each iteration requires finding an approximate Gram decomposition, computing a maximum $s^+$--$s^-$ flow in the auxiliary network and flow decomposition.
Note that a consistent minimum $s^+$--$s^-$ cut can be obtained by removing all inconsistent vertices in any minimum $s^+$--$s^-$ cut  (see \Cref{lem:cut}).
Therefore, the time complexity of finding consistent minimum cut is the same as max-flow.
Approximate Gram decomposition takes $\tilde O(\eps^{-2} \min\{b(V), n^2\}) = \tilde O(\min\{b(V), n^2\})$ time by \cref{lem:app_gram}.
Flow decomposition takes $O(m \log n)$ time~\cite{Lee2013}, which is subsumed by max-flow.
Thus, the claimed time complexity follows.
\end{proof}

\section{Application to Minimum Uncut} 
\label{sec:min-uncut}
In this section, we present an approximation algorithm for the minimum uncut problem, building on the 
$O(\log n)$-approximation algorithm for the bipartiteness ratio problem introduced in the previous section.

For a graph $G=(V,E,w)$, we define the \emph{value of a cut} $(S,\overline{S})$ as $w(E(S,\overline{S}))/w(E)$.
Similarly, the \emph{uncut value} of $(S, \overline{S})$ is defined as $(w(E(S)) + w(E(\overline{S})))/w(E) = 1 - w(E(S, \overline{S})) / w(E)$.
Generally, our method is conceptually similar to those in \cite{kllot13, Trevisan2012}, which showed a close connection between min-uncut and the bipartiteness ratio. Briefly, we invoke \cref{alg:cut-matching}, which returns a partition of the vertex set $V=L \cup R \cup Z$, and then recursively apply this process to the subgraph induced by the vertex subset $Z \subset V$, where the entries of $\Bx_Z$ are zero. 
The proposed approximation algorithm for the minimum uncut problem is presented in \cref{alg:max_cut}, and its guarantee is encapsulated in the following theorem.

\begin{algorithm}[htbp]
   \caption{$\textsc{RecursiveBipart}(G = (V, E,w))$} 
    \label{alg:max_cut} 
    Let $(L, R)$ be the output of \cref{alg:cut-matching}.
    
    \eIf{$L \cup R = V$}
    {
    \Return $(L, R)$.
    }{
    $V' \gets V \setminus (L \cup R)$.

    Let $G' = (V', E', w')$ be the subgraph of $G$ induced by $V'$.

    $(L', R') \gets \textsc{RecursiveBipart}(G')$.

    \Return the cut, either $(L \cup L’, R \cup R’)$ or $(L \cup R’, R \cup L’)$, that has the larger cut value.
}
\end{algorithm}

\begin{theorem}\label{thm:max-cut}
Suppose \cref{alg:cut-matching} outputs a $C$-approximation algorithm for the bipartiteness ratio problem. Given a graph $G=(V,E,w)$ with the min-uncut value $\eta$, $\Call{RecursiveBipart}{G}$ returns a cut with the uncut value $O(C \log(1/\eta)) \cdot \eta$ in running time $\wt{O}(m n)$. 
\end{theorem}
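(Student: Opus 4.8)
The plan is to unroll the recursion in \Cref{alg:max_cut}, charge the uncut edges of the returned cut level by level, and control the resulting sum with a logarithmic estimate. Write $G_0 := G$ and, at level $i \ge 0$, let $(L_i, R_i, Z_i)$ be the tripartition of $V(G_i)$ given by the $C$-approximation run with vertex weights $b = \deg_{G_i}$, and $G_{i+1} := G_i[Z_i]$; let $m_i := w(E(G_i))$ and define $\eta_i$ by ``the maximum cut value of $G_i$ equals $1 - \eta_i$'', so $m_0 = m$ and $\eta_0 = \eta$. Two facts drive the argument. \textbf{(a)} Plugging a maximum cut of $G_i$ (as a tripartition with empty middle part) into \Cref{cla:betax_conds} shows $\beta_b(G_i) \le \eta_i$, so the returned vector satisfies $\beta_b(\Bx_i) \le C \eta_i$; i.e.\ $2w(E(L_i)) + 2w(E(R_i)) + w(E(L_i \cup R_i, Z_i)) \le C\eta_i \cdot \vol_{G_i}(L_i \cup R_i) \le 2C\eta_i m_i$. \textbf{(b)} Restricting a fixed maximum cut of $G$ to $V(G_i)$ yields a cut of $G_i$ whose uncut weight is at most the total uncut weight $\eta m$ of $G$; since the minimum uncut weight of $G_i$ equals $\eta_i m_i$, we get $\eta_i m_i \le \eta m$ for all $i$. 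Also $E(G_{i+1}) \subseteq E(G_i)$, so the $m_i$ are nonincreasing, and $m_i - m_{i+1} = w(E(L_i)) + w(E(R_i)) + w(E(L_i, R_i)) + w(E(L_i \cup R_i, Z_i))$.

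Next I would bound $\operatorname{uncut}(G)$, the weight of edges not cut by the returned cut. In the merge step the two candidate cuts differ only by a global flip of the recursively computed cut of $Z_i$ (which preserves all cut/uncut statuses inside $Z_i$, so the unrolling is consistent across levels): a cross edge $(u,v)$ with $u \in L_i \cup R_i$, $v \in Z_i$ is cut by exactly one of the two candidates, edges inside $L_i$ or inside $R_i$ stay uncut, and edges between $L_i$ and $R_i$ stay cut. Hence the chosen cut keeps at least $\tfrac12 w(E(L_i \cup R_i, Z_i))$ of the cross weight, and unrolling gives $\operatorname{uncut}(G) \le \sum_i u_i$ where $u_i := w(E(L_i)) + w(E(R_i)) + \tfrac12 w(E(L_i \cup R_i, Z_i)) = \tfrac12 \beta_b(\Bx_i) \cdot \vol_{G_i}(L_i \cup R_i)$. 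Combining with the facts above yields $u_i \le C\eta_i m_i$, $u_i \le m_i - m_{i+1}$, and $u_i \le C\eta_i (m_i - m_{i+1})$.

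The decisive step --- and the one I expect to be the real obstacle --- is bounding $\sum_i u_i$, because the recursion depth can be $\Theta(n)$ (each level may peel off only a tiny piece), so the naive estimate $(\text{depth}) \cdot C\eta m$ is far too weak. The fix is to truncate: let $i^\ast$ be the first level with $m_{i^\ast} \le \eta m$ (well-defined, since the $m_i$ strictly decrease while a nonempty $L_i \cup R_i$ is removed and $m_0 = m > \eta m$ because $\eta \le \tfrac12$). Split the sum. For the \emph{tail} $i \ge i^\ast$, the bound $u_i \le m_i - m_{i+1}$ telescopes to $\sum_{i \ge i^\ast} u_i \le m_{i^\ast} \le \eta m$. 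The single \emph{transition} level $i^\ast - 1$ contributes $u_{i^\ast - 1} \le C\eta_{i^\ast - 1} m_{i^\ast - 1} \le C\eta m$ by fact (b). For the \emph{bulk} $i \le i^\ast - 2$ we have $m_{i+1} \ge m_{i^\ast - 1} > \eta m > 0$, so $\eta_i \le \eta m / m_i$ by fact (b), and using $1 - x \le \ln(1/x)$ for $x \in (0,1]$,
\[
\sum_{i \le i^\ast - 2} u_i \;\le\; C\eta m \sum_{i \le i^\ast - 2} \Bigl( 1 - \frac{m_{i+1}}{m_i} \Bigr) \;\le\; C\eta m \sum_{i \le i^\ast - 2} \ln\frac{m_i}{m_{i+1}} \;=\; C\eta m \ln\frac{m_0}{m_{i^\ast - 1}} \;<\; C\eta m \ln\frac{1}{\eta}.
\]
Adding the three parts gives $\operatorname{uncut}(G) \le O(C \log(1/\eta)) \cdot \eta m$, hence the returned cut has value $1 - \operatorname{uncut}(G)/m \ge 1 - O(C\log(1/\eta)) \cdot \eta$. (The degenerate case $\eta = 0$ is separate and immediate: $\beta_b(\Bx_i) \le C \cdot 0$ forces every level to create no uncut edge, so the cut is perfect.)

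Finally, for the running time: there are at most $n$ levels (each removes at least one vertex), each costing one call to the $C$-approximation plus an $O(m)$-time merge. Instantiating the $C$-approximation with \Cref{thm:cut-matching} --- for which, with $b = \deg_{G_i}$, one level runs in $\wt O(m_i) = \wt O(m)$ arithmetic operations and $\wt O(1)$ undirected single-commodity max-flow computations --- gives total running time $\wt O(mn)$, and a union bound over the $\le n$ calls preserves the $1 - 1/\poly(n)$ success probability.
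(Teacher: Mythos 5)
Your proof is correct, but it takes a genuinely different route from the paper's. The paper proves the statement by induction on the recursion depth, maintaining the invariant that the returned cut has value at least $1 - C\eta\log(3/\eta)$; the inductive step combines the bound $\eta' \le \eta/(1-\rho)$ (where $\rho$ is the fraction of edge weight removed at the current level) with two elementary monotonicity facts about $x\log(3/x)$ to close the recursion. You instead unroll the recursion completely, writing $\operatorname{uncut}(G) \le \sum_i u_i$ with $u_i = w(E(L_i))+w(E(R_i))+\tfrac12 w(E(L_i\cup R_i,Z_i))$, derive the three per-level bounds $u_i \le C\eta_i m_i$, $u_i \le m_i - m_{i+1}$, and $u_i \le C\eta_i(m_i-m_{i+1})$, and then split the sum at the truncation index $i^\ast$ (first level with $m_{i^\ast} \le \eta m$) into a telescoping tail contributing $\le \eta m$, a single transition level contributing $\le C\eta m$, and a bulk contributing $\le C\eta m\ln(1/\eta)$ via $1-x\le\ln(1/x)$ and a telescoping log-sum. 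Both approaches use the same key inequality $\eta_i m_i \le \eta m$ (your ``fact (b)'' is the paper's \eqref{ineq:eps_prime_ub}, argued slightly differently but equivalently) and the same merge-step observation that at most half the cross weight is uncut. What your route buys is transparency: it avoids the somewhat opaque constant $3$ and the ad hoc monotonicity lemmas of \Cref{fact:fx_mono}, and it makes visible exactly where the $\log(1/\eta)$ comes from (the number of ``multiplicatively distinct'' scales of $m_i$ between $m$ and $\eta m$) and why the potentially $\Theta(n)$-deep tail only costs an additive $\eta m$. The paper's inductive form is more compact once the right invariant is guessed, but that guess is less motivated. Your handling of the edge cases ($\eta = 0$, $i^\ast = 1$, $i^\ast$ equal to one past the last level) is sound, and the running-time analysis matches the paper's.
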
 
Combining \Cref{thm:cut-matching}, which guarantees that \cref{alg:cut-matching} achieves an $O(\log n)$-approximation, with \Cref{thm:max-cut} yields \Cref{thm:min-uncut-intro}.

Before proving this theorem, we first introduce the following facts that will be employed in the proof of \cref{thm:max-cut}.  
\begin{fact} \label{fact:fx_mono}
(1) For $x \in (0, 1)$, $f(x) := x \log (3 / x)$ is monotonically increasing. 

(2) Given $\eta \in (0, 1)$, for any $x \in (0, 1)$, $g(x) := x + \log \frac{3 (1-x)}{\eta} \le \log \frac{3}{\eta}$. 
\end{fact}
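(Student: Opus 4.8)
\textbf{Proof proposal for Fact~\ref{fact:fx_mono}.}

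The plan is to verify both claims by elementary single-variable calculus, differentiating each function on $(0,1)$ and checking the sign of the derivative.

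For part (1), I would write $f(x) = x\log(3/x) = x\log 3 - x\log x$ and compute $f'(x) = \log 3 - \log x - 1 = \log(3/x) - 1$. On $(0,1)$ we have $3/x > 3 > e$, so $\log(3/x) > 1$ and hence $f'(x) > 0$ throughout $(0,1)$; thus $f$ is monotonically increasing there. (If the intended base is base~$2$ rather than the natural logarithm, the argument is identical with the threshold $x < 3/e$ replaced by $x < 3/2$, which still covers all of $(0,1)$; since only the conclusion on $(0,1)$ is used, the choice of base is immaterial.)

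For part (2), fix $\eta \in (0,1)$ and set $g(x) = x + \log\frac{3(1-x)}{\eta} = x + \log 3 + \log(1-x) - \log\eta$. Then $g'(x) = 1 - \frac{1}{1-x}$, which is $\le 0$ for all $x \in (0,1)$ since $1-x \le 1$; so $g$ is non-increasing on $(0,1)$, and therefore $g(x) \le \lim_{x\to 0^+} g(x) = \log\frac{3}{\eta}$ for every $x \in (0,1)$. This gives the claimed bound.

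Neither part presents a real obstacle; the only point requiring a moment's care is to state clearly which logarithm base is in force so that the inequality $\log(3/x) > 1$ on $(0,1)$ in part~(1) is justified, but as noted above the conclusion holds for any base $\le e$ (in particular base~$2$ and base~$e$), so the fact is robust to that convention. I would present the two derivative computations, note the sign on $(0,1)$, and read off monotonicity and the limit bound.
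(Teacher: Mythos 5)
Your proof is correct, and it is the natural elementary argument. The paper states this Fact without proof, so there is no paper argument to compare against; your derivative computations supply exactly what the paper leaves implicit. For part~(1), $f'(x) = \log(3/x) - 1 > 0$ on $(0,1)$ because $3/x > 3 > e$; for part~(2), $g'(x) = 1 - \tfrac{1}{1-x} \le 0$ on $(0,1)$ and $\lim_{x\to 0^+} g(x) = \log(3/\eta)$, giving the bound. Both are right.

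One tiny slip in your parenthetical: for base $2$, the derivative is $f'(x) = \log_2(3/x) - \log_2 e$, so the sign threshold is still $x < 3/e$ (not $x < 3/2$, which would correspond to $\log_2(3/x) > 1$, a weaker condition than $\log_2(3/x) > \log_2 e$). Since $3/e > 1$, the conclusion on $(0,1)$ is unaffected, so this is cosmetic. A slightly slicker route for part~(2) avoids the base discussion entirely: $g(x) - \log(3/\eta) = x + \log(1-x) \le x - x = 0$, using $\log(1+u) \le u$; this, like your argument, holds for any base $\le e$.
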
 

\begin{proof} [Proof of \cref{thm:max-cut}] 
Consider \cref{alg:max_cut}.
Given an undirected graph $G=(V,E,w)$ with the min-uncut value $\eta$, we apply the $C$-approximate algorithm for the bipartiteness ratio on $G$. 
Let $\Bx \in \{0, \pm 1\}^V$ be the returned vector, then we have $\beta(\Bx) \le C \cdot \beta(G)$. 
Let $(S, \ol{S})$ be the maximum cut of $G$. 
Note that 
\[
\frac{w(E(S)) + w(E(\ol{S}))}{w(E)} = \eta. 
\]
Let $\By$ be a $\{\pm 1\}^V$-vector such that $\By_S = \mb{1}$ and $\By_{\ol{S}} = -\mb{1}$, then we have 
\[
\beta(\By) = \frac{w(E(S)) + w(E(\ol{S}))}{w(E)} = \eta. 
\]
Note that $\beta(\By) \ge \beta(G)$ by definition.
Combining with $\beta(\Bx) \le C \cdot \beta(G)$, we obtain $\beta(\Bx) \le C \eta$.
Let the corresponding partition of $\Bx$ be $L \cup R \cup Z$, then 
\begin{align} \label{eq:beta_x}
\beta(\Bx) = \frac{2 \cdot w(E(L)) + 2 \cdot w(E(R)) + w( E(L \cup R, Z) )}{\vol(L \cup R)}.  %
\end{align}

We now prove by induction that, given a graph $G=(V,E,w)$ with the min-uncut value $\eta$, $\Call{RecursiveBipart}{G}$ returns a cut with the uncut value $C \eta \log (3/\eta)$. 

For the \textit{base case} ($Z = \emptyset$), we have 
\[
w(E(L)) + w(E(R)) \le \frac{C \eta}{2} \cdot \vol(L \cup R) = C \eta \cdot w(E),  
\]
where the inequality follows since $V = L \cup R$; the equality follows from $\vol(L \cup R) = 2 \cdot w(E)$. 
That is, the fraction of the uncut edges is $\frac{w(E(L)) + w(E(R))}{w(E)} \le C \eta$. 

For the \textit{inductive step}, the edges that are not cut in $G$ are categorized into three parts: (1) edges with both endpoints in $L$ or $R$; (2) edges between $L \cup R$ and $V'$; and (3) edges with both endpoints in $L'$ or $R'$. 
The total weights of the first and third parts are equal to $w(E(L)) + w(E(R))$ and $w(E' \setminus E(L', R'))$, respectively. 
For the second part, since we choose the cut with a larger value in \cref{alg:max_cut} from the two cuts $(L \cup L', R \cup R')$ and $(L \cup R', R \cup L')$, the weight of the edges that are not cut between $L \cup R$ and $V'$ is at most $\frac{1}{2} w(E(L \cup R, V'))$. Consequently, the weight of edges that are not cut in $G$ is at most 
\begin{align} \label{quan:induc}
w(E(L)) + w(E(R)) + \frac{1}{2} w( E(L \cup R, V') ) + w(E' \setminus E(L', R')). 
\end{align}
By the formulation of $\beta(\Bx)$ in \eqref{eq:beta_x} and $\beta(\Bx) \le C \eta$, we have 
\begin{align} \label{ineq:notcut_p1}
w(E(L)) + w(E(R)) + \frac{1}{2} w(E(L \cup R, V')) \le \frac{1}{2} C \eta \cdot \vol (L \cup R). 
\end{align}
Additionally, the inductive hypothesis gives   
\begin{align} \label{ineq:notcut_p2}
w(E'\setminus E(L', R')) \le C \eta' \log (3/\eta') \cdot w(E'),  
\end{align} 
where $\eta'$ is the the minimum uncut value in graph $G'$. 
Let $\rho := \frac{w(E) - w(E')}{w(E)}$, then we have 
\begin{align} \label{ineq:vol_LR_ub}
\vol(L \cup R) \le 2 (w(E) - w(E')) = 2 \rho \cdot w(E)   
\end{align}
and 
\begin{align} \label{eq:wE_prime}
w(E') = (1 - \rho) \cdot w(E). 
\end{align}
For the induced subgraph $G' = (V', E')$ of graph $G$, there are two cases: (1) $V' \subset S$ or $V' \subset \ol{S}$ (recall that $(S, \ol{S})$ is the maximum cut of $G$); (2) $V' \cap S \neq \emptyset$ and $V' \cap \ol{S} \neq \emptyset$. For the first case, we have $\eta \cdot w(E) \ge w(E') \ge \eta' \cdot w(E')$; for the second case, we have $\eta \cdot w(E) \ge w(E(V' \cap S)) + w(E(V' \cap \ol{S})) \ge \eta' \cdot w(E')$. As a consequence, it holds that $\eta' \cdot w(E') \le \eta \cdot w(E)$, which implies  
\begin{align} \label{ineq:eps_prime_ub}
\eta' \le \eta \cdot \frac{w(E)}{w(E')} = \frac{\eta}{1-\rho}.   
\end{align} 

Returning to the upper bound of the weight of uncut edges in \cref{alg:max_cut}, we have 
\begin{align*} 
&~ w(E(L)) + w(E(R)) + \frac{1}{2} w(E(L \cup R, V')) + w(E' \setminus E(L', R')) \\
\le &~ \frac{1}{2} C \eta \cdot \vol(L \cup R) + C \eta' \log(3/\eta') \cdot w(E') \\
\le &~ \rho C \eta \cdot w(E) + C \eta' \log(3/\eta') \cdot w(E') \\ 
\le &~ \rho C \eta \cdot w(E) + \frac{C \eta}{1 - \rho} \log \frac{3(1-\rho)}{\eta} \cdot w(E') \\
= &~ \left(\rho + \log \frac{3 (1-\rho)}{\eta}\right) C \eta \cdot w(E) \\
\le &~ C \eta \log (3/\eta) \cdot w(E), 
\end{align*} 
where the first inequality follows from \eqref{ineq:notcut_p1} and \eqref{ineq:notcut_p2}; the second inequality follows from \eqref{ineq:vol_LR_ub}; the third inequality follows from \eqref{ineq:eps_prime_ub} and \cref{fact:fx_mono} (1); the fourth step follows from \eqref{eq:wE_prime}; the fifth inequality follows from \cref{fact:fx_mono} (2).  

Thus far, we have completed the inductive proof that if the min-uncut value of $G$ is $\eta$, then 
$\Call{RecursiveBipart}{G}$ returns a cut with the uncut value $O(C \log(1/\eta)) \cdot \eta$. We now proceed to analyze its running time. After each recursion, the number of vertices decreases by at least a constant amount, ensuring that the recursion terminates after at most $O(n)$ steps. Since the dominant computation in each recursion is the invocation of \cref{alg:cut-matching}, which runs in nearly linear time with respect to the number of edges, the overall running time of the algorithm is $\wt{O}(m n)$. 
\end{proof}

\section{Directed Bipartiteness Ratio}\label{sec:directed-bipartiteness-ratio}

The goal of this section is to extend the toolkit for bipartiteness ratio to directed graphs and to prove \Cref{thm:directed-bipartiteness-ratio}.  We begin by constructing an auxiliary skew-symmetric gadget that captures directed violations in a cut-based language, yielding a ratio-cut formulation of the directed $b$-bipartiteness ratio.  
We then leverage this representation to design an $O(\log n)$ approximation via directed metric embeddings, mirroring the classic Leighton--Rao framework in the directed setting. Additionally, we extend the approximate min uncut algorithm to directed graphs following the framework shown in \cref{sec:min-uncut}. 

\subsection{Gadget and Characterization with Ratio Cut}\label{subsec:directed-characterization}
In this section, we show a characterization of directed bipartiteness ratio with directed flow and cut similar to \Cref{sec:bipartite-flow-cut}.
To this end, we introduce the following auxiliary graph $G'$.
For each edge $e = (i, j) \in E$, consider the following gadget.
The vertex set of the gadget consists of copy vertices $i^+, i^-, j^+, j^-$ and dummy vertices $k^+, k^-, \ell^+, \ell^-$.
The edge set of the gadget consists of $(i^+, \ell^-)$, $(i^-, \ell^+)$, $(\ell^-, i^+)$, $(\ell^+, i^-)$, $(j^+, \ell^+)$, $(j^-, \ell^-)$, $(\ell^+, j^+)$, $(\ell^-, j^-)$, $(k^+, \ell^-)$, $(\ell^+, k^-)$, $(\ell^+, k^+)$, and $(k^-, \ell^-)$. 
See~\Cref{fig:directed-gadget} for an illustration. Evidently, the gadget is skew-symmetric.

\begin{figure}
    \centering
    \includegraphics[scale = 0.85]{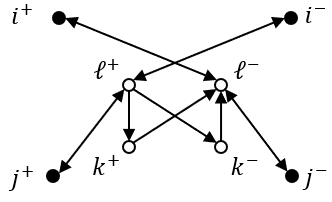}
    \caption{The gadget for directed edge $e = (i, j)$. The dummy vertices are drawn as hollow circles.}
    \label{fig:directed-gadget}
\end{figure}

Then, the auxiliary graph $G' = (V', E')$ of $G$ is defined as the digraph obtained by glueing the gadgets for all edges in $E$ along the copy vertices.
Again, $G'$ is skew-symmetric.

A subset $X \subseteq V'$ is said to be consistent to $\Bx \in \{0, \pm 1\}^V$ if $X$ is consistent and $i^+ \in X$ if and only if $x_i = 1$, and $i^- \in X$ if and only if $x_i = -1$.
\begin{lemma}\label{lem:psi-cut}
    For any $\Bx \in \{0, \pm 1\}^V$,
    \[ 
    \sum_{e=ij \in E} w(e)\psi_{ij}(\Bx) = \min_{X \subseteq V': \text{consistent to $\Bx$}} w(E'(X, \ol{X}))
    \]
\end{lemma}
\begin{proof}
    It suffices to show that for each edge $e = ij \in E$, the gadget of $e$ contributes $\psi_{ij}(\Bx)$ in any minimum cut consistent to $\Bx$. 
    This can be checked by brute force.
\end{proof}

A vertex weight $b: V \to \R_+$ is extended to $b': V' \to \R_+$ by $b'(i^+) = b'(i^-) = b(i)$ for copy vertices $i^+, i^-$ of $i \in V$ and $b'(v) = 0$ for all dummy vertices.
Abusing the notation again, we denote $b'$ by the same symbol $b$.
We also extend an edge weight $w: E \to \R_+$ to $w': E' \to \R_+$ by assigning $w'(e) = w(e)$ for edges $e$ in the gadgets corresponding to $e$.
This is also denoted by the same symbol $w$.

\begin{theorem}\label{thm:directed-repr}
    \[
        \beta_b(G) = \min_{X \subseteq V': \text{consistent}} \frac{w(E'(X, \ol{X}))}{b(X)}
    \]
\end{theorem}
\begin{proof}
    For any $\Bx \in \{0, \pm 1\}^V$, let $X \subseteq V'$ be a subset consistent to $\Bx$ with $w(E'(X, \ol{X}))$ minimum.
    Then $b(X) = \sum_{i \in V} b(i) \abs{x_i}$ and $w(E'(X, \ol{X})) = \sum_{e=ij \in E} w(e)\psi_{ij}(\Bx)$ by \Cref{lem:psi-cut}.
    This shows $\beta_b(G) \geq RHS$.

    On the other hand, take any minimizer $X \subseteq V'$ of RHS.
    Then $X$ corresponds to nonzero $\Bx \in \{0, \pm 1\}^V$.
    We have $b(X) = \sum_{i \in V} b(i) \abs{x_i}$.
    By \Cref{lem:psi-cut}, we have $w(E'(X, \ol{X})) \geq \sum_{e=ij \in E} w(e)\psi_{ij}(\Bx)$, which shows $\beta_b(G) \leq RHS$.
\end{proof}

\subsection{$O(\log n)$-approximation with Directed Metric Embedding}\label{subsec:directed-LR}
We now prove \Cref{thm:directed-bipartiteness-ratio}.
Our strategy mirrors the classical Leighton--Rao framework: we project arbitrary cuts to consistent ones, upper-bound the optimum consistent cut by a linear program over directed semimetrics, and then round an optimal fractional solution using a directed $\ell_1$ embedding.

\paragraph{Distance relaxation.}
We work over the class of \emph{weighted directed semimetrics}, i.e.,
functions $d:V'\times V'\to\R_{\ge 0}$ of the form
\[
d(p,q)=\sigma(p,q)+\pi(p)-\pi(q),
\]
where $\sigma$ is a symmetric semimetric on $V'$ (in particular, $\sigma$ obeys the triangle inequality)
and $\pi:V'\to\R$ is an arbitrary potential.
We solve the following LP:
\begin{align*}
\text{minimize}\quad & \sum_{(u,v)\in E'} w(u,v)\, d(u,v) \\
\text{subject to}\quad &
\text{$\sigma$ is a symmetric semimetric (triangle inequalities on $V'$)},\\
& d(u,v)=\sigma(u,v)+\pi(u)-\pi(v)\ge 0 \qquad(\forall u,v\in V'),\\
& \sum_{i\in V} b(i)\,\bigl(d(i^+,i^-)+d(i^-,i^+)\bigr) \ge 1.
\tag{LP$_{\mathrm{dir}}$}
\end{align*}

Let $LP$ denotes the optimal value of the above LP.
First, we check that it is indeed a relaxation.

\begin{lemma}[Validity]\label{lem:valid}
$LP \le \beta_b(G)$.
\end{lemma}

\begin{proof}
Let $S$ be any consistent set of $V'$.
Define the directed cut semimetric $d_S(p,q):=\mathbf{1}\{p\in S,\,q\notin S\}$.
It can be written as $d_S=\sigma_S+\pi_S(p)-\pi_S(q)$ with
$\sigma_S(p,q)=\tfrac12\,\mathbf{1}\{p\in S\oplus q\in S\}$ and $\pi_S(x)=\tfrac12\mathbf{1}\{x\in S\}$.
Then $d_S\ge 0$, $\sigma_S$ is a symmetric cut semimetric, and
$\sum_i b(i)\bigl(d_S(i^+,i^-)+d_S(i^-,i^+)\bigr)=b(S)$.
Scaling $d=d_S/b(S)$ makes the last constraint tight and gives objective value
$w(E'(S,\ol{S}))/b(S)$ by \Cref{thm:directed-repr}.
Minimizing over $S$ proves the claim.
\end{proof}

\paragraph{Embedding and rounding.}
Let $d^\star$ be an optimal solution to (LP$_{\mathrm{dir}}$), scaled so that the last constraint holds with equality.
By the directed $\ell_1$ weak-embedding theorem~\cite{charikar2006directed}, there exists a directed $\ell_1$ semimetric
$d'$ on $V'$ (i.e., a nonnegative conic combination of directed cut semimetrics)
such that, after a global rescaling, for all $p,q$,
\[
d'(p,q)\le d^\star(p,q),
\qquad
d'(p,q)\ge \frac{1}{D}\,d^\star(p,q) \;\text{whenever}\;  d^\star(p,q)\ge d^\star(q,p),
\]
with $D=O(\log |V'|)$.
Sample a directed cut $S\subseteq V'$ from the cut distribution of $d'$ and output the
\emph{consistent} cut $S^\dagger$ (Lemma~\ref{lem:ssym-cut}).

To avoid relying on the dominant orientation, we also consider the reversed semimetric
$\overleftarrow d(p,q):=d^\star(q,p)$, embed it similarly to obtain $\overleftarrow{d'}$, and define the
rounding distribution as the \emph{mixture} of the two cut distributions with equal probability.
For $S \subseteq V'$, let $S^\dagger$ be the consistent set obtained from $S$ by dropping both copies $\{i^+, i^-\}$ of every $i \in V$ that appear entirely in $S$.

\begin{lemma}[Expected numerator]\label{lem:num}
$\mathbb{E}\bigl[w(E'(S^\dagger,\ol{S^\dagger}))\bigr] \le\
\sum_{(u,v)\in E'} w(u,v)\, d^\star(u,v)$.
\end{lemma}

\begin{proof}
By Lemma~\ref{lem:ssym-cut}, $w(E'(S^\dagger,\ol{S^\dagger}))\le w(E'(S,\ol{S}))$.
For a single embedding (say $d'$),
\[
\mathbb{E}[w(E'(S,\ol{S}))]
=\sum_{(u,v)\in E'} w(u,v)\,\Pr[u\in S,\, v\notin S]
=\sum_{(u,v)\in E'} w(u,v)\, d'(u,v)
\le \sum_{(u,v)\in E'} w(u,v)\, d^\star(u,v).
\]
The same bound holds for $\overleftarrow{d'}$; hence it holds for the mixture.
\end{proof}

\begin{lemma}[Expected denominator]\label{lem:den}
$\mathbb{E}\bigl[b(S^\dagger)\bigr] \ge \dfrac{1}{2D}\,
\sum_{i\in V} b(i)\,\bigl(d^\star(i^+,i^-)+d^\star(i^-,i^+)\bigr)$.
\end{lemma}

\begin{proof}
For a single embedding $d'$, the event that
only one of $i^+, i^-$ is in $S^\dagger$ is the same as
$\{i^+\in S, i^-\notin S\}\dot\cup\{i^-\in S, i^+\notin S\}$.
Thus
\[
\mathbb{E}[b(S^\dagger)]
=\sum_i b(i)\,\bigl(\Pr[i^+\in S, i^-\notin S]+\Pr[i^-\in S, i^+\notin S]\bigr)
=\sum_i b(i)\,\bigl(d'(i^+,i^-)+d'(i^-,i^+)\bigr).
\]
In the mixture, for each $i$ at least one of the two orientations is ``dominant'';
therefore the expectation is at least
\(\tfrac{1}{2D}\sum_i b(i)\bigl(d^\star(i^+,i^-)+d^\star(i^-,i^+)\bigr)\).
\end{proof}

\begin{proof}[Proof of \Cref{thm:directed-bipartiteness-ratio}]
We first discuss approximation guarantee.
By Lemmas~\ref{lem:num} and \ref{lem:den} and the normalization of (LP$_{\mathrm{dir}}$),
\[
\frac{\E[w(E'(S^\dagger,\ol{S^\dagger}))]}{\E[b(S^\dagger)]}
 \le 2D\cdot
\frac{\sum_{(u,v)\in E'} w(u,v)\, d^\star(u,v)}
{\sum_i b(i)\bigl(d^\star(i^+,i^-)+d^\star(i^-,i^+)\bigr)}
 \le 2D\cdot LP.
\]
By Lemma~\ref{lem:valid}, $LP \le \beta_b(G)$, and
$|V'|=O(n + m) = O(n^2)$.
Thus there exists an outcome $S^\dagger$ with
\[
\frac{w(E'(S^\dagger,\ol{S^\dagger}))}{b(S^\dagger)}
 \le O(\log n)\cdot \beta_b(G).
\]
By \Cref{thm:directed-repr}, this yields $\Bx\in\{0,\pm 1\}^V$ with
$\beta_b(\Bx)\le O(\log n)\cdot \beta_b(G)$.

Next, we discuss the time complexity.
Solving (LP$_{\mathrm{dir}}$) takes polynomial time. 
The directed $\ell_1$ weak-embedding can be done in randomized polynomial time~\cite{charikar2006directed}.
Since the cut distribution is obtained by the standard threshold rounding, we can simply find the best cut in the support in the mixture distribution.
Thus we obtain a vector $\Bx$ whose expected ratio is within an $O(\log n)$ factor of $\beta_b(G)$.
This concludes the proof of \Cref{thm:directed-bipartiteness-ratio}.
\end{proof}

\subsection{Directed Minimum Uncut}\label{subsec:directed-min-uncut}

In this subsection we establish \Cref{thm:min-directed-uncut-intro} by extending the reduction from \emph{Minimum Uncut} in the undirected case to directed graphs.
Let $G=(V,E,w)$ be a directed graph with nonnegative weights $w(i, j)$ on arcs $(i, j) \in E$. For any $\Bx \in \{0, \pm 1\}^V \setminus \{\mb{0}\}$ with corresponding tripartition $V = L \cup R \cup Z$, the formulation of $\beta_b(\Bx)$ shown in \eqref{def:dir_beta_b_x} gives us  
\begin{align} \label{for:dir_beta_bx}
\beta_b (\Bx) = \frac{2 \cdot w(E(L)) + 2 \cdot w(E(R)) + 2 \cdot w(E (R, L)) + w (E (L \leftrightarrow Z) ) + w(E (R \leftrightarrow Z) )}{b(L \cup R)}, 
\end{align} 
where $E(R, L) = \{(u, v) \in E \mid u \in R, v \in L\}$, $E(L \leftrightarrow Z) = E(L, Z) \cup E(Z, L)$, and $E(R \leftrightarrow Z) = E(R, Z) \cup E(Z, R)$.

Let $G$ be a directed graph whose minimum uncut value is $\eta$. Applying the $C$-approximation algorithm for directed bipartiteness ratio from \Cref{subsec:directed-LR} (with $C = O(\log n)$) returns a vector $\Bx$ with $\beta_b(\Bx)\le C \cdot \beta_b(G)$. The next lemma relates $\beta_b(\Bx)$ to $\eta$.

\begin{lemma}[Link to directed min uncut] \label{lem:dir-link}
    Let $\Bx \in \{0, \pm 1\}^V \setminus \{\mb{0}\}$ be the vector returned by the $C$-approximation algorithm for the directed bipartiteness ratio of $G$.
    If the minimum directed uncut value of $G$ is $\eta$, then $\beta_b(\Bx) \le C \eta$.
\end{lemma}
\begin{proof}
    Let $(S, \ol{S})$ be a directed cut attaining the minimum uncut value $\eta$.
    Define $\By \in \{\pm 1\}^V$ by setting $y_i = 1$ for $i \in S$ and $y_i = -1$ for $i \in \ol{S}$.
    Using \eqref{for:dir_beta_bx} with $Z = \emptyset$, we obtain
    \[
    \beta_b(\By)
        = \frac{2 \cdot w(E(S)) + 2 \cdot w(E(\ol{S})) + 2 \cdot w(E(\ol{S}, S))}{2 \cdot w(E)}
        = \eta,
    \] 
    where the numerator equals twice the uncut weight of $(S,\ol{S})$.
    Since $\beta_b(G) \le \beta_b(\By)$ by definition, we have $\beta_b(G) \le \eta$.
    The $C$-approximation guarantee yields $\beta_b(\Bx) \le C \cdot \beta_b(G)$, and thus $\beta_b(\Bx) \le C \eta$.
\end{proof}

We now present the approximation algorithm in \Cref{alg:directedminuncut} for the directed minimum uncut problem.
The structure mirrors the undirected recursion, with the progress measure replaced by the directed bipartiteness ratio guarantee.

\begin{algorithm}
\caption{$\textsc{DirectedRecursiveBipart} G = (V, E, w)$}\label{alg:directedminuncut} 
    \DontPrintSemicolon
    Run the $O(\log n)$-approximation algorithm of \Cref{subsec:directed-LR} with $b(v) = d_{\mathrm{in}}(v) + d_{\mathrm{out}}(v)$ for each $v \in V$ to obtain $\Bx\in\{0,\pm1\}^V$.\;
    Set $L=\{i \mid x_i = +1\}$, $R=\{i \mid x_i=-1\}$, and $Z=\{i \mid x_i=0\}$.\;
    \lIf{$Z = \emptyset$}{\Return $(L, R)$.}
    Recurse on the subgraph $G[Z]$ to obtain $(L', R')$.\;
    \Return $(L\cup L', R\cup R')$.
\end{algorithm} 

\Cref{thm:min-directed-uncut-intro} is a direct consequence of the following theorem:
\begin{theorem}[Directed Min Uncut]\label{thm:directed-min-uncut}
    Given a directed weighted graph $G = (V, E, w)$ with $|V| = n$ and minimum uncut value $\eta$, $\Call{DirectedRecursiveBipart}{G}$ returns a cut whose uncut value is at most $O(\log n \log (1/\eta)) \cdot \eta$ in polynomial time. 
\end{theorem}

\begin{proof}
We prove, by induction on the recursion depth, that whenever \textsc{DirectedRecursiveBipart} is invoked on a graph whose minimum directed uncut value is $\eta$, the returned cut has uncut weight at most $2C\,\eta \log (3/\eta)\cdot w(E)$, equivalently an uncut fraction of $2C\,\eta \log (3/\eta)$. Here $C=O(\log n)$ is the approximation factor from \Cref{subsec:directed-LR}.

\emph{Base case.}
When the recursion stops we have $Z = \emptyset$ and the algorithm returns $(L,R)$.
Equation \eqref{for:dir_beta_bx} reduces to
\[
\beta_b(\Bx) = \frac{w(E(L)) + w(E(R)) + w(E(R,L))}{w(E)},
\]
which is exactly the uncut fraction of $(L,R)$.
By \Cref{lem:dir-link} this quantity is at most $C\eta$, matching the claimed bound for the base case.

\emph{Inductive step.}
Let $E'$ be the edge set of the recursive subgraph $G[Z]$, and let $\eta'$ denote its minimum directed uncut value.
By the induction hypothesis, the recursive call returns $(L',R')$ whose uncut weight is at most $2C\,\eta' \log (3/\eta') \cdot w(E')$.

It remains to bound the contribution from edges outside $E'$. Using \eqref{for:dir_beta_bx},
\[
w(E(L)) + w(E(R)) + w(E(R,L)) + w(E(L \cup R \leftrightarrow Z))
    \le \beta_b(\Bx)\, b(L \cup R)
    \le C \eta \cdot b(L \cup R),
\]
where $E(L \cup R \leftrightarrow Z) = E(L \leftrightarrow Z) \cup E(R \leftrightarrow Z)$ and the last inequality again invokes \Cref{lem:dir-link}.
Combining this with the recursive contribution, the total uncut weight of the final cut $(L \cup L', R \cup R')$ is at most
\begin{equation}\label{eq:dir_uncut_two_parts}
    C \eta \cdot b(L \cup R) + 2C\,\eta' \log (3/\eta') \cdot w(E').
\end{equation}

To relate the right-hand side to $w(E)$, define $\rho := \frac{w(E) - w(E')}{w(E)}$.
Since $b(v)=d_{\mathrm{in}}(v)+d_{\mathrm{out}}(v)$, each edge counted outside $E'$ contributes at most 2 to $b(L \cup R)$, implying
$b(L \cup R) \le 2 (w(E) - w(E')) = 2\rho\, w(E)$.
Moreover, because $\eta' w(E') \le \eta w(E)$, we have $\eta' \le \eta/(1-\rho)$.
Substituting these bounds into \eqref{eq:dir_uncut_two_parts} yields
\begin{align*}
    C \eta \cdot b(L \cup R) + 2C\,\eta' \log (3/\eta') \cdot w(E')
    &\le 2\rho C \eta \cdot w(E) + 2C \frac{\eta}{1-\rho} \log \frac{3(1-\rho)}{\eta} \cdot (1-\rho) w(E) \\
    &= 2C \eta \left(\rho + \log \frac{3(1-\rho)}{\eta}\right) \cdot w(E) \\
    &\le 2C \eta \log \frac{3}{\eta} \cdot w(E),
\end{align*}
where we used \Cref{fact:fx_mono} in the final inequality.
Dividing by $w(E)$ completes the inductive step and the proof.
\end{proof}

\section*{Acknowledgments}
TS is supported by JSPS KAKENHI Grant Numbers JP24K21315 and JP19K20212, and JST, PRESTO Grant Number JPMJPR24K5, Japan.
YY is supported by JSPS KAKENHI Grant Number JP24K02903 and JP22H05001. MY is supported by Japan Science and Technology Agency (JST) as part of Adopting Sustainable Partnerships for Innovative Research Ecosystem (ASPIRE), Grant Number JPMJAP2302. 

\printbibliography

\appendix
\section{Fast Computation of Approximate Gram Decomposition}\label{sec:gram}

In this section, we show how to compute an approximate Gram decomposition of a matrix in the form of $\BD^{-1/2} \BX \BD^{-1/2}$, where $\BD$ is a positive diagonal matrix and $\BX = \exp(\BA) / \tr(\exp(\BA))$ for some real symmetric $n \times n$ matrix.
The main result of this section is the following.

\begin{lemma}\label{lem:app_gram_general}
    Let $\BY = \BD^{-1/2} \BX \BD^{-1/2}$ be a real $n \times n$ symmetric matrix, where $\BD$ is a positive diagonal matrix and $\BX = \exp(\BA) / \tr(\exp(\BA))$ for some real symmetric $n \times n$ matrix $\BA$.
    Let $\Bv_1, \dots, \Bv_n \in \R^n$ be a Gram decomposition of $\BY$.
    Let $\eps, \tau \in (0,1)$ and $\lambda > 0$ such that $\norm{\BA}_2 \leq \lambda$.
    Then, there exists a randomized algorithm that computes vectors $\wh\Bv_1, \dots, \wh\Bv_n \in \R^d$ for $d = O(\eps^{-2} \log n)$ such that
    \begin{align*}
        \abs*{\norm{\wh\Bv_i}_2^2 - \norm{\Bv_i}_2^2} &\leq \eps \norm{\Bv_i}_2^2 + \tau \quad (i \in [n]), \\
        \abs*{\norm{\wh\Bv_i + \wh\Bv_j}_2^2 - \norm{\Bv_i + \Bv_j}_2^2} &\leq \eps \norm{\Bv_i + \Bv_j}_2^2 + \tau \quad (i, j \in [n])
    \end{align*}
    with probability at least $1 - 1/\poly(n)$.
    The time complexity is $O(\eps^{-2} \log n \cdot \max\{e^2\lambda, \log(n\norm{\BD}^{-1}/\tau)\} \cdot \MVP(\BA))$ time, where $\MVP(\BA)$ is the time for computing the matrix-vector product of $\BA$.
\end{lemma}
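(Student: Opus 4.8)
The plan is to follow the template of \cite{Arora2016,Lau2024}: combine a Johnson--Lindenstrauss projection with a truncated Taylor expansion of the matrix exponential, plus a Hutchinson-type estimate of the normalizing trace. First, since the quantities we must approximate, $\norm{\Bv_i}_2^2 = \BY_{ii}$ and $\norm{\Bv_i+\Bv_j}_2^2 = \BY_{ii}+2\BY_{ij}+\BY_{jj}$, depend only on $\BY$ and not on the choice of Gram decomposition, I would fix the canonical one $\BV := \BX^{1/2}\BD^{-1/2} = \tfrac{1}{\sqrt{Z}}\exp(\BA/2)\BD^{-1/2}$, where $Z := \tr(\exp(\BA))$, so that $\Bv_i = \tfrac{1}{\sqrt{Z d_i}}\exp(\BA/2)\Be_i$ with $d_i := \BD_{ii}$. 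Drawing $\BQ \in \R^{d\times n}$ with i.i.d.\ $N(0,1/d)$ entries and $d = O(\eps^{-2}\log n)$, the distributional Johnson--Lindenstrauss lemma and a union bound over the $n+\binom{n}{2}$ vectors $\{\Bv_i\}\cup\{\Bv_i+\Bv_j\}$ give $\norm{\BQ\Bv_i}_2^2 \in (1\pm\eps)\norm{\Bv_i}_2^2$ and $\norm{\BQ(\Bv_i+\Bv_j)}_2^2 \in (1\pm\eps)\norm{\Bv_i+\Bv_j}_2^2$ simultaneously with probability $1-1/\poly(n)$. Thus if we could form $\BQ\Bv_i$ exactly we would be done with $\tau=0$; the role of $\tau$ is to absorb the error from not computing $\exp(\BA/2)$ and $Z$ exactly.

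Second, I would replace $\exp(\BA/2)$ by the truncated Taylor polynomial $P_K(\BA/2) := \sum_{k=0}^K (\BA/2)^k/k!$ and estimate $Z$ by $\wh Z := \norm{\BQ P_K(\BA/2)}_F^2$, whose expectation over $\BQ$ equals $\tr(P_K(\BA/2)^2)\approx \tr(\exp(\BA)) = Z$; the computed vectors are then $\wh\Bv_i := \tfrac{1}{\sqrt{\wh Z\, d_i}}\,\BQ P_K(\BA/2)\Be_i \in \R^d$, i.e.\ the rescaled columns of the $d\times n$ matrix $\BQ P_K(\BA/2)$, which is formed by applying $P_K(\BA/2)$ to each of the $d$ rows of $\BQ$ via Horner's scheme at a cost of $O(dK)$ matrix--vector products with $\BA$ and $O(dKn)$ scalar operations. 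Since $\norm{\BA}_2 \le \lambda$, the tail estimate $\norm{\exp(\BA/2)-P_K(\BA/2)}_2 \le \sum_{k>K}(\lambda/2)^k/k! \le \epsilon_T$ holds once $K = O(\max\{e^2\lambda,\,\log(1/\epsilon_T)\})$, using $k!\ge(k/e)^k$ so that the tail decays geometrically past $k\approx e\lambda$. The truncation perturbs $\wh\Bv_i$ by at most $\tfrac{\norm{\BQ}_2}{\sqrt{Z d_i}}\epsilon_T$, and $\norm{\BQ}_2 = O(\sqrt{n/d})$ w.h.p., $Z \ge n e^{-\lambda}$, $1/d_i \le \norm{\BD}^{-1}$ (up to a factor from $\BD$'s spread); choosing $\epsilon_T$ a sufficiently small polynomial in $\tau$, $1/n$, $\norm{\BD}$ — which only adds $O(\log(n\norm{\BD}^{-1}/\tau))$ to $K$ — makes this perturbation at most $\tau$ (after splitting, at most $\tau/3$ each into $\wh\Bv_i$ and $\wh\Bv_j$), while a Laurent--Massart-type concentration bound for $\wh Z$ (a weighted sum of $\chi^2$ variables) shows $\wh Z \in (1\pm O(\eps))Z$, contributing only an $O(\eps)$ multiplicative error. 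Rescaling $\eps$ by a constant then restores the stated guarantees.

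Third, collecting the failure events — JL failure, $\norm{\BQ}_2$ atypically large, $\wh Z$ deviating from $Z$ — each has probability $1/\poly(n)$, so a union bound keeps the total at $1/\poly(n)$; one should note that the trace estimate reuses the same $\BQ$ as the projection, but since $\norm{\BQ M}_F^2 = \sum_i \norm{\BQ M\Be_i}_2^2$ is a weighted $\chi^2$ sum, Laurent--Massart still gives the needed concentration with $d = \wt O(\eps^{-2})$ probes (alternatively one may use an independent copy of $\BQ$ at no asymptotic cost). For the running time, the dominant cost is forming $\BQ P_K(\BA/2)$, namely $O(dK\cdot\MVP(\BA))$ plus $O(dKn)$ scalars, which after substituting $d$ and $K$ is $O(\eps^{-2}\log n\cdot\max\{e^2\lambda,\log(n\norm{\BD}^{-1}/\tau)\}\cdot\MVP(\BA))$; the remaining work (rescalings, evaluating $\wh Z$) is subsumed. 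Specializing to $\BD=\BD_b$, $\lambda = O(\delta T) = O(\log^2 n)$, $\tau = 1/\poly(n,b(V))$, and $\MVP(\BA) = O(\min\{b(V),n^2\})$ yields \Cref{lem:app_gram}.

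I expect the main obstacle to be the error bookkeeping in the second step: one must track how the innocuous operator-norm error $\epsilon_T$ of the Taylor truncation is amplified by $\norm{\BQ}_2 = \wt\Theta(\sqrt n)$ and by $\norm{\BD^{-1/2}}$, and simultaneously control the multiplicative error of $\wh Z$ (which enters as $1/\sqrt{\wh Z}$ in every $\wh\Bv_i$), so that the three error sources — JL, Taylor, trace — compose into the clean $(1\pm\eps)$-plus-$\tau$ form without blowup and so that the required $K$ stays at $O(\max\{e^2\lambda,\log(n\norm{\BD}^{-1}/\tau)\})$.
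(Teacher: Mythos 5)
Your proposal follows essentially the same route as the paper's Appendix~A: a Johnson--Lindenstrauss projection (you use Gaussian, the paper uses Rademacher, an immaterial choice), a truncated Taylor approximation of $\exp(\BA/2)$ with the same $k = O(\max\{e^2\lambda,\log(1/\tau')\})$ tail bound, a normalizing-trace estimate obtained from the same projected vectors, and the same error-decomposition into multiplicative JL error plus additive Taylor error absorbed into $\tau$, with the same rescaling at the end. The minor differences (Laurent--Massart phrasing for the trace concentration instead of the paper's direct JL-on-rows argument, and the slightly loose $\norm{\BD}^{-1}$ vs.\ $\norm{\BD^{-1}}$ spread remark which the paper shares) do not change the substance.
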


Since $\exp(\BA) = \exp(\BA/2) \exp(\BA/2)$, the task is equivalent to approximate the rows of $\frac{\BD^{-1/2}\exp(\BA/2)}{\sqrt{\tr(\exp(\BA))}}$.
To this end, we use the Johnson-Lindenstrauss (JL) dimension reduction and approximating matrix exponential with truncated Taylor series.

\begin{lemma}[Johnson--Lindenstrauss]
    Let $\BU$ be a random $d \times n$ matrix whose entry is $1/\sqrt{d}$ with probability $1/2$ and $-1/\sqrt{d}$ with probability $1/2$ independently.
    Let $\Bv_1, \dots, \Bv_n \in \R^n$ and $\Bv_i' = \BU\Bv_i$ for $i \in [n]$.
    Let $\eps \in (0,1)$, and $c > 0$.
    Then, if $d \geq \Omega(\eps^{-2} c\log n)$, we have 
    \begin{align*}
        \abs*{\norm{\Bv_i'}_2^2 - \norm{\Bv_i}_2^2} &\leq \eps \norm{\Bv_i}_2^2 \quad (i \in [n]) \\
        \abs*{\norm{\Bv_i' + \Bv_j'}_2^2 - \norm{\Bv_i + \Bv_j}_2^2} &\leq \eps \norm{\Bv_i + \Bv_j}_2^2 \quad (i, j \in [n])
    \end{align*}
    with probability at least $1 - n^{-c}$.
\end{lemma}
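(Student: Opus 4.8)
The plan is to reduce the lemma to a single-vector concentration bound and then finish with a union bound. By positive homogeneity of the Euclidean norm, it suffices to show that for any fixed unit vector $\Bu \in \R^n$ and any $\eps \in (0,1)$,
\[
    \Pr\!\left[\abs*{\norm{\BU\Bu}_2^2 - 1} > \eps\right] \leq 2\exp(-c_0 d \eps^2)
\]
for an absolute constant $c_0 > 0$: the first family of inequalities then follows by applying this to $\Bu = \Bv_i/\norm{\Bv_i}_2$, the second by applying it to $\Bu = (\Bv_i + \Bv_j)/\norm{\Bv_i + \Bv_j}_2$, and the cases involving a zero vector are trivial.

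To prove the single-vector bound I would write $(\BU\Bu)_k = d^{-1/2}\sum_{j=1}^n \sigma_{kj} u_j$ with i.i.d.\ Rademacher signs $\sigma_{kj} \in \{\pm 1\}$, so that $\norm{\BU\Bu}_2^2 = d^{-1}\sum_{k=1}^d Y_k$ where $Y_k := \bigl(\sum_j \sigma_{kj} u_j\bigr)^2$ are i.i.d.\ Since $\sum_j \sigma_{kj} u_j$ is mean zero and, by Hoeffding's lemma, sub-Gaussian with variance proxy $\norm{\Bu}_2^2 = 1$, we obtain $\E[Y_k] = 1$ and that $Y_k - 1$ is sub-exponential, i.e.\ $\E[\exp(\lambda(Y_k - 1))] \leq \exp(C_1 \lambda^2)$ for $\abs{\lambda} \leq c_1$ with absolute constants $C_1, c_1$; this step is elementary and can alternatively be carried out by expanding the moment generating function of $Y_k$ directly from the $\pm 1$ structure. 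Applying Bernstein's inequality for i.i.d.\ sub-exponential summands to $\frac{1}{d}\sum_k (Y_k - 1)$ then yields a two-sided tail $2\exp(-c_0 d\eps^2)$ valid for $\eps \in (0,1)$, which is exactly the claim.

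Finally I would union-bound over the $n + \binom{n}{2} = O(n^2)$ unit vectors arising above: the total failure probability is at most $2\bigl(n + \binom{n}{2}\bigr)\exp(-c_0 d \eps^2) \leq n^{-c}$ as soon as $d \geq C\eps^{-2}(c+2)\log n$ for a suitable absolute constant $C$, which is precisely what the hypothesis $d \geq \Omega(\eps^{-2} c \log n)$ encodes (the ``$+2$'' coming from the $O(n^2)$ union is absorbed into the implied constant). There is no genuine obstacle here — the whole argument is standard sub-Gaussian/sub-exponential concentration — so the only point I would be careful about is the bookkeeping just described, namely matching the stated failure probability $n^{-c}$ against the quadratically many events being controlled, together with noting that the restriction $\eps \in (0,1)$ is exactly what licenses the $\exp(-c_0 d \eps^2)$ (rather than $\exp(-c_0 d \eps)$) form of the Bernstein tail.
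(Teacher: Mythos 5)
The paper does not prove this lemma — it is stated as a standard result in the appendix and used as a black box. Your proof is a correct and standard argument for the distributional Johnson--Lindenstrauss lemma with Rademacher projections: reduce to a fixed unit vector by homogeneity, observe that each coordinate $(\BU\Bu)_k$ is sub-Gaussian with variance proxy $\norm{\Bu}_2^2$, so that $Y_k = (\BU\Bu)_k^2 d$ is sub-exponential with unit mean, apply Bernstein for sub-exponential sums to get a two-sided $2\exp(-c_0 d\eps^2)$ tail in the regime $\eps\in(0,1)$, and union bound over the $O(n^2)$ relevant unit vectors. All the steps are sound, including the observation that the second family of inequalities follows by applying the single-vector bound to $(\Bv_i+\Bv_j)/\norm{\Bv_i+\Bv_j}_2$ (with degenerate zero cases handled trivially).

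One minor bookkeeping note: when you write $d \geq C\eps^{-2}(c+2)\log n$ and claim the $+2$ is ``absorbed into the implied constant,'' this is only accurate if the implied constant in $\Omega(\eps^{-2}c\log n)$ is allowed to depend on $c$, or if one implicitly assumes $c\geq\Omega(1)$. As stated with $c>0$ arbitrary (in particular $c\to 0$), the factor $(c+2)/c$ blows up. In practice the lemma is used with $c$ a fixed constant (the paper's ``$1 - 1/\poly(n)$''), so this is a cosmetic matter of how the hypothesis on $d$ is phrased rather than a genuine gap; a tighter statement would be $d \geq \Omega(\eps^{-2}(c+1)\log n)$. Everything else checks out.
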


\begin{lemma}[{\cite[Lemma~7.4]{Arora2016}}]\label{lem:matexp}
    Let $\BA$ be a $n \times n$ real symmetric matrix, $\BU$ an $d \times n$ matrix, and $\BZ = \sum_{i=0}^k \frac{\BA^i}{i!} \BU^\top$.
    If $k \geq \max\{e^2\norm{\BA}, \log(1/\tau)\}$, then 
    $\norm{\exp(\BA)\BU^\top - \BZ} \leq \tau \norm{\exp(\BA)}\norm{\BU}$ and 
    $\norm{\exp(\BA)\BU^\top - \BZ}_F \leq \tau \norm{\exp(\BA)}\norm{\BU}_F$. 
    Furthermore, $\BU$ can be computed in $O(dk \MVP(\BA))$ time, where $\MVP(\BA)$ is the time for computing the matrix-vector product of $\BA$.
\end{lemma}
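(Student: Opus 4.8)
The plan is to reduce the statement to a one-dimensional Taylor-tail estimate. Writing $R_k(\BA) := \exp(\BA) - \sum_{i=0}^{k}\tfrac{1}{i!}\BA^{i} = \sum_{i > k}\tfrac{1}{i!}\BA^{i}$ for the matrix remainder, we have $\exp(\BA)\BU^\top - \BZ = R_k(\BA)\,\BU^\top$. By submultiplicativity of the operator norm, $\norm{R_k(\BA)\BU^\top} \le \norm{R_k(\BA)}\,\norm{\BU}$, and by the mixed bound $\norm{\BM\BN}_F \le \norm{\BM}\,\norm{\BN}_F$ (applied with $\BM = R_k(\BA)$, $\BN = \BU^\top$) we get $\norm{R_k(\BA)\BU^\top}_F \le \norm{R_k(\BA)}\,\norm{\BU}_F$. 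So both inequalities follow once we bound $\norm{R_k(\BA)} \le \tau\,\norm{\exp(\BA)}$.

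Next I would diagonalize. Since $\BA$ is real symmetric, write $\BA = \BQ\,\Diag(\mu_1,\dots,\mu_n)\,\BQ^\top$ with $\mu_i \in [-\norm{\BA},\norm{\BA}]$. Then $R_k(\BA)$ and $\exp(\BA)$ are simultaneously diagonalized with eigenvalues $R_k(\mu_i)$ and $e^{\mu_i}$, where $R_k(x) = \sum_{i>k}\tfrac{x^i}{i!}$ is the scalar remainder; hence $\norm{R_k(\BA)} = \max_i |R_k(\mu_i)|$ and $\norm{\exp(\BA)} = \max_i e^{\mu_i}$, and it suffices to show $|R_k(x)| \le \tau\,e^{x}$ whenever $|x| \le \norm{\BA}$.

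For this scalar bound, put $\lambda := \norm{\BA}$ and estimate $|R_k(x)| \le \sum_{i>k}\tfrac{\lambda^i}{i!}$. Because $k \ge e^{2}\lambda$, consecutive terms decay by a factor at most $\lambda/(k+2) \le e^{-2}$, so the tail is dominated by a geometric series, $|R_k(x)| \le \tfrac{1}{1-e^{-2}}\cdot\tfrac{\lambda^{k+1}}{(k+1)!}$; combining $(k+1)! \ge ((k+1)/e)^{k+1}$ with $k+1 \ge e^{2}\lambda$ gives $\tfrac{\lambda^{k+1}}{(k+1)!} \le (e\lambda/(k+1))^{k+1} \le e^{-(k+1)}$, hence $|R_k(x)| \le e^{-k}$. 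Since $k \ge \log(1/\tau)$ this already gives $|R_k(x)| \le \tau$; to recover the factor $e^{x}$ on the right one notes $e^{x} \ge e^{-\lambda}$, so it is enough that $e^{-k} \le \tau e^{-\lambda}$, which holds after increasing the constant in the hypothesis $k \ge e^{2}\lambda$ by at most a factor of two (for $x \ge 0$ the identity $e^{-x}R_k(x) = \Pr[\mathrm{Pois}(x) > k]$ gives the same conclusion without any loss). Feeding the scalar bound back through the diagonalization yields $\norm{R_k(\BA)} \le \tau\,\norm{\exp(\BA)}$.

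For the running-time claim, compute $\BZ$ incrementally: set $\BW_0 := \BU^\top$ and $\BW_i := \tfrac{1}{i}\BA\BW_{i-1}$, so that $\BW_i = \tfrac{\BA^i}{i!}\BU^\top$ and $\BZ = \sum_{i=0}^{k}\BW_i$. Each step is $d$ matrix--vector products with $\BA$, i.e.\ $O(d\cdot\MVP(\BA))$ time, and there are $k$ of them, for $O(dk\cdot\MVP(\BA))$ overall (the running sum is maintained within the same budget); I read the stated bound as referring to forming $\BZ$. The only point that needs a little care is the last step above --- upgrading the absolute tail bound $|R_k(x)| = O(e^{-k})$ to the relative bound $|R_k(x)| \le \tau e^{x}$ when $x < 0$ --- and even there the slack between $e^{2}\norm{\BA}$ and $\norm{\BA}$ in the hypothesis makes it go through; everything else (the spectral-theorem identities, the Stirling/geometric estimate, and the iterative evaluation of $\BZ$) is entirely routine.
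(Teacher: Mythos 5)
The paper does not actually prove this lemma --- it cites \cite[Lemma~7.4]{Arora2016} and uses it as a black box --- so there is no in-paper proof to compare against. Your reconstruction (spectral theorem to reduce to a scalar Taylor-tail estimate, then a geometric/Stirling bound on $\sum_{i>k}\lambda^i/i!$, plus the submultiplicativity facts $\norm{\BM\BN}\le\norm{\BM}\norm{\BN}$ and $\norm{\BM\BN}_F\le\norm{\BM}\norm{\BN}_F$ and the incremental evaluation of $\BZ$) is the standard argument and is what a careful proof of this statement looks like.

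The part you flag as ``needing a little care'' is, however, a genuine problem, and the patch you propose does not close it. You correctly reduce the question to showing $\abs{R_k(x)}\le\tau e^{x}$ for all $\abs{x}\le\lambda$, with the hard case $x=-\lambda$, where the requirement becomes $\frac{\lambda^{k+1}}{(k+1)!}\le\tau e^{-\lambda}$, i.e.\ (after Stirling) roughly $k+1-\lambda\ge\log(1/\tau)$. The hypothesis $k\ge\max\{e^{2}\lambda,\log(1/\tau)\}$ does \emph{not} imply this in the regime $e^{2}\lambda<\log(1/\tau)$ (so $k\approx\log(1/\tau)$) with $\lambda$ of order $\log(1/\tau)/e^{2}$: one then needs $k\gtrsim(1+1/e^{2})\log(1/\tau)$, which is strictly more than $\log(1/\tau)$. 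For instance $\BA=-10\,\BI_n$, $\tau=e^{-74}$, $k=74$ satisfies the hypothesis (since $e^{2}\cdot 10\approx 73.9$), but $\norm{R_k(\BA)}\approx 10^{75}/75!\approx 10^{-34.4}$ while $\tau\norm{\exp(\BA)}=e^{-84}\approx 10^{-36.5}$, so the claimed inequality fails by two orders of magnitude. Crucially, your proposed fix --- inflating the constant $e^{2}$ in $k\ge e^{2}\lambda$ --- does not help, because the problematic regime is precisely the one where the $\log(1/\tau)$ branch of the max is binding and $\lambda$ plays no role in the lower bound on $k$; inflating $e^{2}$ by any factor $C$ still leaves the bad case $\lambda\approx\log(1/\tau)/(Ce^{2})$. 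What actually closes the gap is strengthening the \emph{other} branch, e.g.\ requiring $k\ge\max\{e^{2}\lambda,\ 2\log(1/\tau)\}$ (or $k\ge\lambda+\log(1/\tau)$ outright), after which both cases of the max give $k\ge\lambda+\log(1/\tau)$ and your Stirling estimate finishes the job. The Poisson identity you mention rescues only the $x\ge 0$ side, which is not where the difficulty lies. Separately, you are right that ``$\BU$ can be computed'' in the statement must be a typo for $\BZ$, and your $O(dk\cdot\MVP(\BA))$ accounting for $\BZ$ is correct.
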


Consider $n \times d$ matrices 
\begin{align*}
    \BW = \exp(\BA/2) \BU^\top, \quad \BZ = \sum_{i=0}^k \frac{(\BA/2)^i}{i!} \BU^\top,
\end{align*}
where $\BU$ is a $d \times n$ JL projection matrix for $d = O(\eps^{-2}c\log n)$ and $k$ is a parameter in \Cref{lem:matexp}.
Let $\Bx_i$ and $\Bw_i$ ($i \in [n]$) be the rows of $\exp(\BA/2)$ and $\exp(\BA/2) \BU^\top$, respectively.
Then, by the JL lemma, we have 
\begin{align*}
    \norm{\Bw_i}_2^2 &\in  (1 \pm \eps) \norm{\Bx_i}_2^2 \quad (i \in [n]), \\
    \norm{\Bw_i + \Bw_j}_2^2 &\in (1 \pm \eps) \norm{\Bx_i + \Bx_j}_2^2 \quad (i, j \in [n])
\end{align*}
with probability at least $1 - n^{-c}$.
Therefore, we have
\begin{align}\label{eq:JL-tr}
    \tr(\BW\BW^\top) = \sum_{i=1}^n \norm{\Bw_i}_2^2 \in (1 \pm \eps) \sum_{i=1}^n \norm{\Bx_i}_2^2 
    = (1 \pm \eps) \tr(\exp(\BA))
\end{align}
with probability at least $1 - n^{-c}$.
Furthermore, considering $\BD^{-1/2} \exp(\BA/2)$ instead of $\exp(\BA/2)$, we have 
\begin{align}\label{eq:JL-scaled}
    \norm{b(i)^{-1/2}\Bw_i + b(j)^{-1/2}\Bw_j}_2^2 &\in (1 \pm \eps) \norm{b(i)^{-1/2}\Bx_i + b(j)^{-1/2}\Bx_j}_2^2 \quad (i, j \in [n])
\end{align}
with probability at least $1 - n^{-c}$, where $b(i)$ is the $i$th diagonal entry of $\BD$.

Consider $n \times n$ matrices
\begin{align*}
    \BY' = \frac{\BD^{-1/2}\BW\BW^\top \BD^{-1/2}}{\tr(\BW\BW^\top)}, \quad \BY'' = \frac{\BD^{-1/2}\BZ\BZ^\top \BD^{-1/2}}{\tr(\BZ\BZ^\top)}.
\end{align*}
We will show that $\BY'$ and $\BY''$ are good approximations of $\BY$.

\begin{lemma}\label{lem:err_Yp_Ypp}
    Let $\tau \leq \frac{1}{12n^{3/2}}$ and $\eps \leq \frac{1}{4}$.
    Conditioned on the event \eqref{eq:JL-tr}, we have $\norm{\BY' - \BY''}_2 \leq 12n^{3/2} \norm{\BD}^{-1} \tau$.
\end{lemma}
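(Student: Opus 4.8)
The plan is to bound $\norm{\BY' - \BY''}_2$ by controlling two sources of discrepancy separately: the difference in the unnormalized matrices $\BD^{-1/2}\BW\BW^\top\BD^{-1/2}$ versus $\BD^{-1/2}\BZ\BZ^\top\BD^{-1/2}$, and the difference in the normalizing traces $\tr(\BW\BW^\top)$ versus $\tr(\BZ\BZ^\top)$. Writing $\BY' - \BY'' = \frac{\BD^{-1/2}(\BW\BW^\top - \BZ\BZ^\top)\BD^{-1/2}}{\tr(\BW\BW^\top)} + \BD^{-1/2}\BZ\BZ^\top\BD^{-1/2}\left(\frac{1}{\tr(\BW\BW^\top)} - \frac{1}{\tr(\BZ\BZ^\top)}\right)$, I would triangle-inequality these two terms. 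For the denominators, I would use $\norm{\BD^{-1/2}}_2 \leq \norm{\BD}^{-1/2}$... actually more carefully, since $\BD$ is the vertex-weight diagonal with entries $\geq 1$, one has $\norm{\BD^{-1}}_2 \leq 1$; but the statement is phrased with $\norm{\BD}^{-1}$, so I would track $\norm{\BD^{-1/2}}_2^2 = \norm{\BD^{-1}}_2$ and use whatever normalization makes the stated constant $12 n^{3/2}\norm{\BD}^{-1}\tau$ come out.

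The key quantitative input is \Cref{lem:matexp} applied with $\BA/2$ in place of $\BA$: it gives $\norm{\BW - \BZ}_2 = \norm{\exp(\BA/2)\BU^\top - \BZ}_2 \leq \tau\norm{\exp(\BA/2)}_2\norm{\BU}_2$ and the analogous Frobenius bound. First I would note $\tr(\exp(\BA)) = \norm{\exp(\BA/2)}_F^2 \geq \norm{\exp(\BA/2)}_2^2$, and that a JL matrix $\BU$ has $\norm{\BU}_2 = O(1)$ with high probability (or one can simply absorb it). Normalizing everything by $\tr(\exp(\BA))$ — equivalently working with $\wt\BW = \BW/\sqrt{\tr(\exp(\BA))}$ and $\wt\BZ = \BZ/\sqrt{\tr(\exp(\BA))}$ — the rows of $\wt\BW$ are the (JL-projected) $\Bv_i$, so $\sum_i\norm{\text{row}_i(\wt\BW)}^2 = \tr(\BY') \cdot(\text{something near }1)$ is $\Theta(1)$, hence $\norm{\wt\BW}_2 \leq \norm{\wt\BW}_F = O(1)$ and $\norm{\wt\BZ}_2 \leq \norm{\wt\BW}_2 + \tau \leq O(1)$. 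Then $\norm{\wt\BW\wt\BW^\top - \wt\BZ\wt\BZ^\top}_2 \leq \norm{\wt\BW - \wt\BZ}_2(\norm{\wt\BW}_2 + \norm{\wt\BZ}_2) \leq O(\tau)$, and similarly $\abs{\tr(\wt\BW\wt\BW^\top) - \tr(\wt\BZ\wt\BZ^\top)} = \abs{\norm{\wt\BW}_F^2 - \norm{\wt\BZ}_F^2} \leq \norm{\wt\BW - \wt\BZ}_F(\norm{\wt\BW}_F + \norm{\wt\BZ}_F) \leq O(\sqrt{n}\,\tau)$, where the $\sqrt n$ enters because the Frobenius bound from \Cref{lem:matexp} carries a $\norm{\BU}_F = O(\sqrt n)$ factor. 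Conditioned on \eqref{eq:JL-tr}, $\tr(\BW\BW^\top)$ and $\tr(\BZ\BZ^\top)$ are both $\Theta(\tr(\exp(\BA)))$ (using $\eps \leq 1/4$), so the reciprocal-difference term is also $O(\sqrt n\,\tau)$ after renormalization. Multiplying through by $\norm{\BD^{-1/2}}_2^2$ and collecting constants, with the condition $\tau \leq \frac{1}{12n^{3/2}}$ ensuring the $\wt\BZ$ norm bounds stay $O(1)$, gives the claimed $\norm{\BY' - \BY''}_2 \leq 12 n^{3/2}\norm{\BD}^{-1}\tau$.

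The main obstacle I anticipate is bookkeeping the normalization factors correctly so that the final constant and the $n^{3/2}$ power come out exactly as stated rather than off by factors. Specifically, the $n^{3/2}$ must be traced to: one factor of $\sqrt n$ from $\norm{\BU}_F$ in the Frobenius bound of \Cref{lem:matexp}, and another factor of $n$ from... this needs care — either from $\norm{\BD^{-1}}_2$ interacting with the trace normalization, or from the fact that $\tr(\exp(\BA))$ could be as small as $1$ (when $\BA = \zeros$, $\tr(\exp(\BA)) = n$, so actually $\tr(\exp(\BA)) \geq 1$ always and $\leq$ something), so the $1/\tr(\exp(\BA))$ in the denominator when comparing unnormalized quantities against the $\tau$-level JL error can blow up the additive error. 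I would resolve this by being scrupulous about whether \Cref{lem:matexp}'s $\tau$ is measured relative to $\norm{\exp(\BA)}$ (it is), so that after dividing by $\tr(\exp(\BA)) \geq \norm{\exp(\BA)}_2$, the relative error is genuinely $O(\tau)$ and the $n$-factors come only from Frobenius-vs-operator conversions and the diagonal scaling; the condition $\tau \leq \frac{1}{12 n^{3/2}}$ in the hypothesis is precisely the threshold that keeps the second-order cross terms (like $\norm{\wt\BW - \wt\BZ}_2^2$) negligible relative to the first-order term, so I would use it at exactly that point.
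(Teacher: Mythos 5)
Your plan is essentially the paper's proof: split $\BY'-\BY''$ by the triangle inequality through an intermediate term, bound the numerator difference $\norm{\BW\BW^\top-\BZ\BZ^\top}$ and the trace difference via \Cref{lem:matexp} applied to $\BA/2$ (the $\sqrt n$ and $n$ factors coming from the Frobenius-versus-operator conversions on $\BE=\BW-\BZ$ and $\BW$), and then control the denominators with the JL-trace event \eqref{eq:JL-tr} together with $\eps\le 1/4$ and $\tau\le 1/(12n^{3/2})$. One small factual slip: for a $d\times n$ random-sign JL matrix with $d\ll n$ one has $\norm{\BU}_2=\Theta(\sqrt{n/d})$, not $O(1)$; the paper sidesteps this by using the crude bound $\norm{\BU}\le\norm{\BU}_F=\sqrt n$, and this is precisely where the remaining power of $n$ in the final estimate comes from rather than anything needing to be ``absorbed.''
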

\begin{proof}
    Let $\BE = \BW - \BZ$. 
    We have
    \begin{align*}
        \norm{\BE} 
        &\leq \norm{\BE}_F \\
        &\leq \norm{\exp(\BA/2)} \cdot \norm{\BU}_F \tau \\
        &= \sqrt{n} \norm{\exp(\BA)}^{1/2} \tau,
    \end{align*}
    where we used $\norm{\exp(\BA/2)} = \norm{\exp(\BA)}^{1/2}$ and each row of $\BU$ is an unit vector.
    Therefore,
    \begin{align*}
        \norm{\BZ\BZ^\top - \BW\BW^\top} 
        &= \norm{\BE\BE^\top - \BE\BW^\top - \BW\BE^\top} \\
        &\leq \norm{\BE}^2 + 2\norm{\BE} \norm{\BW} \\
        &\leq 3n \norm{\exp(\BA)} \tau,
    \end{align*}
    where we used $\norm{\BW} \leq \norm{\exp(\BA/2)} \norm{\BU} \leq \norm{\exp(\BA)}^{1/2} \sqrt{n}$.
    Similarly, 
    \begin{align*}
        \abs{\tr(\BW\BW^\top) - \tr(\BZ\BZ^\top)} 
        &= \abs{\tr(\BE\BE^\top) - \tr(\BE\BW^\top) - \tr(\BW\BE^\top)} \\
        &\leq \norm{\BE}_F^2 + 2\norm{\BE}_F \norm{\BW}_F \\
        &\leq 3 n^{3/2} \norm{\exp(\BA)} \tau.
    \end{align*}
    Therefore, we have
    \begin{align*}
        \norm{\BY' - \BY''}
        &= \norm*{\frac{\BD^{-1/2}\BW\BW^\top \BD^{-1/2}}{\tr(\BW\BW^\top)} - \frac{\BD^{-1/2}\BZ\BZ^\top \BD^{-1/2}}{\tr(\BZ\BZ^\top)}} \\
        &\leq \norm*{\frac{\BD^{-1/2}\BW\BW^\top \BD^{-1/2}}{\tr(\BW\BW^\top)} - \frac{\BD^{-1/2}\BW\BW^\top \BD^{-1/2}}{\tr(\BZ\BZ^\top)}} + \norm*{\frac{\BD^{-1/2}\BW\BW^\top \BD^{-1/2}}{\tr(\BZ\BZ^\top)} - \frac{\BD^{-1/2}\BZ\BZ^\top \BD^{-1/2}}{\tr(\BZ\BZ^\top)}} \\
        &\leq \frac{\norm{\BD}^{-1} \cdot \norm{\BW\BW^\top}\cdot \abs{\tr(\BW\BW^\top) - \tr(\BZ\BZ^\top)}}{\tr(\BW\BW^\top)\tr(\BZ\BZ^\top)} + \frac{\norm{\BD}^{-1} \cdot \norm{\BW\BW^\top - \BZ\BZ^\top}}{\tr(\BZ\BZ^\top)} \\
        &\leq \norm{\BD}^{-1} \cdot \frac{\abs{\tr(\BW\BW^\top) - \tr(\BZ\BZ^\top)} + \norm{\BW\BW^\top - \BZ\BZ^\top}}{\tr(\BZ\BZ^\top)} \tag{$\norm{\BW\BW^\top} \leq \tr(\BW\BW^\top)$} \\
        &\leq \norm{\BD}^{-1} \cdot \frac{(3n + 3n^{3/2})\norm{\exp(\BA)}\tau}{\tr(\BW\BW^\top) - 3n^{3/2}\norm{\exp(\BA)}\tau} \\
        &\leq \norm{\BD}^{-1} \cdot \frac{6n^{3/2} \norm{\exp(\BA)}\tau}{(1- \eps - 3n^{3/2}\tau)\tr(\exp(\BA))}   \tag{by \eqref{eq:JL-tr} and $\norm{\exp(\BA)} \leq \tr(\exp(\BA))$} \\
        &\leq 12n^{3/2} \norm{\BD}^{-1} \tau. \tag{$\tau \leq \frac{1}{12n^{3/2}}$ and $\eps \leq \frac{1}{4}$}
    \end{align*}
\end{proof}

We now prove \Cref{lem:app_gram_general}.
Taking $c$ to be a large enough constant, we can ensure that the events \eqref{eq:JL-tr} and \eqref{eq:JL-scaled} hold with probability at least $1 - 1/\poly(n)$.
Assume the events hold in the following.
Note that for each $i, j \in [n]$,
\begin{align*}
    \norm{\Bv_i' + \Bv_j'}^2
    &= \frac{\norm{b(i)^{-1/2}\Bw_i + b(j)^{-1/2}\Bw_j}^2}{\tr(\BW\BW^\top)} \\
    &\in \frac{(1 \pm \eps)\norm{b(i)^{-1/2}\Bx_i + b(j)^{-1/2}\Bx_j}^2}{(1 \pm \eps)\tr(\exp(\BA))} \tag{by \eqref{eq:JL-tr} and \eqref{eq:JL-scaled}} \\
    &\in (1 \pm 2\eps) \norm{\Bv_i + \Bv_j}^2.
\end{align*}
By \Cref{lem:err_Yp_Ypp}, we have 
\begin{align*}
    \abs{\norm{\Bv_i' + \Bv_j'}^2 - \norm{\Bv_i'' + \Bv_j''}^2} 
    &= \abs{\inprod{\BE_i + \BE_j + 2\BE_{ij}, \BY' - \BY''}} \\
    &\leq \norm{\BE_i + \BE_j + 2\BE_{ij}}_F \cdot \norm{\BY' - \BY''}_F \\
    &\lesssim n^{3/2} \norm{\BD}^{-1} \tau.
\end{align*}
Thus,
\begin{align*}
    \abs{\norm{\Bv_i + \Bv_j}^2 - \norm{\Bv_i'' + \Bv_j''}^2} \leq 2\eps \norm{\Bv_i + \Bv_j}^2 + O(n^{3/2} \norm{\BD}^{-1} \tau).
\end{align*}
Resetting $\eps \gets \eps/2$ and $\tau \gets \frac{\Omega(\tau)}{n^{3/2} \norm{\BD}^{-1}}$, we have the desired bound.

\paragraph{Proof of \Cref{lem:app_gram}}
Let us apply \Cref{lem:app_gram_general} with $\BA = -\eta \sum_{s=1}^{t-1} \BF_s$, $\BD = \BD_b$, $\eps = O(1)$, and $\tau = 1/\poly(n, b(V))$.
Since $\norm{\BF_t} \leq \rho = O(1)$ and $T = O(\log^2 n)$, we have $\norm{\BA} = O(\log^2 n)$.
Furthermore, $\norm{\BD} = \max_{i \in V} b(i) \geq 1$ and therefore $\norm{\BD}^{-1} \leq 1$.
So $k = \max\{e^2 \norm{\BA}, \log(n \norm{\BD}^{-1}/\tau)\} = O(\max\{\log^2 n, \log b(V)\})$ suffices.
Since each $\BF_t$ is a demand matrix of at most $b(V)$ many paths, $\MVP(\BA) = O(\min\{b(V), n^2\})$.
This proves the lemma.

\end{document}